\newcommand{\ignore}[1]{}
\newcommand{\nobibentry}[1]{{\let\nocite\ignore\bibentry{#1}}}
\newtheorem{theorem}{Theorem}[section]
\newaliascnt{proposition}{theorem}
\newtheorem{proposition}[proposition]{Proposition}
\newaliascnt{lemma}{theorem}
\newtheorem{lemma}[lemma]{Lemma}
\newaliascnt{corollary}{theorem}
\newtheorem{corollary}[corollary]{Corollary}
\theoremstyle{definition}
\newaliascnt{definition}{theorem}
\newtheorem{definition}[definition]{Definition}
\newtheorem{example}{Example}
\newaliascnt{remark}{theorem}
\newtheorem{assumption}{Assumption}
\newcounter{hypA}
\newcounter{hypB}
\def\equationautorefname~#1\null{%
  Equation~(#1)\null
}
\newcommand{\1}[1]{\mathbbm{1}_{#1}}
\DeclareMathOperator*{\argmax}{arg\,max}
\DeclareMathOperator*{\argmin}{arg\,min}
\newcommandx\A[2][1=]{
\ifthenelse{\equal{#1}{}}
{\hspace{-1mm}(\textbf{A\ref{#2}})\hspace{-1mm}}
{\hspace{-1mm}(\textbf{A\ref{#1}--\ref{#2}})\hspace{-1mm}}
}
\newcommand{\af}[1]{h_{#1}} 
\newcommand{\alg}[1]{\mathcal{#1}} %Algebra
\newcommand{\am}[1]{\vartheta_{#1}} %auxiliary weights
\newcommand{\amm}[1]{\boldsymbol{\vartheta}_{#1}}
\newcommand{\adpf}{\mathsf{AdaPF}}
\newcommandx{\arr}[2][1=]{
\ifthenelse{\equal{#1}{}}
{\upsilon_{\N}^{#2}}
{(\upsilon_{\N}^{#2})^{#1}}
}
\newcommandx{\arrterm}[3][1=]{
\ifthenelse{\equal{#1}{}}
{\tilde{\upsilon}_{\N}(#3,#2)}
{\tilde{\upsilon}_{\N}^{#1}(#3,#2)}
}
\newcommandx{\asvarFFBSm}[4][1=]{
\ifthenelse{\equal{#1}{}}
{\tilde{\sigma}_{#2} \langle #3, #4 \rangle(\af{#1})}
{\tilde{\sigma}_{#2}^2 \langle #3, #4 \rangle(\af{#2})}
}
\newcommandx{\asvarstd}[2][1=]{
\ifthenelse{\equal{#1}{}}
{\sigma_{#2}(\af{#2})}
{\sigma_{#2}^2(\af{#2})}
}
\newcommandx{\asvarFFBSmstd}[2][1=]{
\ifthenelse{\equal{#1}{}}
{\tilde{\sigma}_{#2}(\af{#2})}
{\tilde{\sigma}_{#2}^2(\af{#2})}
}
\newcommand{\asvar}[1]{\sigma_{#1}^2}
\newcommand{\asvarm}[1]{\boldsymbol{\sigma}_{#1}^2}
\newcommand{\asvarp}[1]{\hat{\sigma}_{#1}^2}
\newcommandx\B[2][1=]{
\ifthenelse{\equal{#1}{}}
{\hspace{-1mm}(\textbf{B\ref{#2}})\hspace{-1mm}}
{\hspace{-1mm}(\textbf{B\ref{#1}--\ref{#2}})\hspace{-1mm}}
}
\newcommandx{\BF}[3][1=]{
\ifthenelse{\equal{#1}{}}
{\kernel{D}_{#2, #3}}
{\kernel{D}_{#2, #3}^{#1}}
}
\newcommandx{\BFcent}[3][1=]{
\ifthenelse{\equal{#1}{}}
{\tilde{\kernel{D}}_{#2, #3}}
{\tilde{\kernel{D}}_{#2, #3}^{#1}}
}
\newcommandx{\BFm}[2]{
	\boldsymbol{\mathcal{D}}_{#1, #2}
}
\newcommandx{\BFcentm}[2]{
	\tilde{\boldsymbol{\mathcal{D}}}_{#1, #2}
}
\newcommandx{\bk}[2][1=]{ 
\ifthenelse{\equal{#1}{}}
{\overleftarrow{\kernel{Q}}_{#2}}
{\overleftarrow{\kernel{Q}}_{#2}^{#1}}
%{\overset{{\, }_{\leftarrow}}{Q}_{#2}}
%{\overset{{\, }_{\leftarrow}}{Q}_{#2}}^{#1}}
}
\newcommand{\bmf}[1]{\set{F}(#1)} %Bounded measurable function
\newcommand{\borel}[1]{\mathcal{B}(\set{#1})} %Borel sigma-algebra
\newcommand{\catdist}{\mathsf{Cat}}
\newcommandx{\cexp}[3][1=]{
\ifthenelse{\equal{#1}{}}
{\mathbb{E}\left[ #2 \mid #3 \right]} % conditional expectation
{\mathbb{E}[ #2 \mid #3 ]}
}
\newcommand{\CLE}{CLE\xspace}
\newcommand{\convd}{\overset{\mathcal{D}}{\longrightarrow}}
\newcommand{\convp}{\overset{\prob}{\longrightarrow}}
\newcommand{\E}{\mathbb{E}}
\newcommand{\enoch}[2]{E_{#1}^{#2}} % Enoch ancestor indices
\newcommand{\enochbar}[2]{\bar{E}_{#1}^{#2}}
\newcommand{\eg}{\emph{e.g.}\xspace}
\newcommand{\epart}[2]{\xi_{#1}^{#2}}
\newcommand{\epartm}[2]{\boldsymbol{\xi}_{#1}^{#2}}
\newcommand{\epartbar}[2]{\bar{\xi}_{#1}^{#2}}
\newcommand{\eqdef}{\coloneqq} %Equal by definition
\newcommand{\eqdist}{\overset{\mathcal{D}}{=}}
\newcommand{\ess}{\mathsf{ESS}}
\newcommand{\func}[1]{h_{#1}}
\newcommand{\funcm}[1]{\boldsymbol{h}_{#1}}
\newcommand{\funcbar}[1]{\bar{h}_{#1}}
\newcommand{\g}[1]{g_{#1}} %g density
\newcommand{\gbar}[1]{\bar{g}_{#1}}
\newcommandx{\genfd}[1][1=]{
	\ifthenelse{\equal{#1}{}}
	{\mathcal{F}}
	{\mathcal{F}_{\N}}
}
\newcommand{\hkbar}[1]{\bar{\kernel{M}}_{#1}}
\newcommand{\hk}[1]{\kernel{M}_{#1}} %Hidden kernel
\newcommand{\I}[2]{I_{#1}^{#2}} %resamplings
\newcommand{\Ibar}[2]{\bar{I}_{#1}^{#2}}
\newcommand{\ie}{\emph{i.e.}\xspace}
\newcommand{\intvect}[2]{\llbracket #1, #2 \rrbracket}
\newcommand{\kernel}[1]{\mathbf{#1}}
\newcommand{\kletter}{M}
\newcommandx{\K}[1][1=]{
	\ifthenelse{\equal{#1}{}}{{\kletter}}{{M^{#1}}}
}
\newcommandx{\lebfun}[1][1=]{
\ifthenelse{\equal{#1}{}}
{\lebfunletter}
{\lebfunletter_{#1}}
}
\newcommand{\lebfunletter}{\varphi}
\newcommandx{\lk}[2][2=]{ 
\ifthenelse{\equal{#2}{}}
{\kernel{L}_{#1}}
{\kernel{L}_{#1,#2}}
 }
\newcommand{\lkm}[1]{\boldsymbol{\mathcal{L}}_{#1}}
\newcommand{\meas}[1]{\mathsf{M}(#1)}
\newcommand{\emse}{\mathsf{MSE}^N}
\newcommand{\mx}{\text{max}}
\newcommand{\N}{N}
\newcommand{\namealgo}{\textsf{ALVar}\xspace}
\newcommand{\nset}{\mathbb{N}}
\newcommand{\nsetpos}{\mathbb{N}^*}
\newcommand{\ordo}{\mathcal{O}}
\newcommandx{\oscn}[2][1=]{
\ifthenelse{\equal{#1}{}}{\operatorname{osc}(#2)}{\operatorname{osc}^{#1}(#2)}
}
\newcommand{\partfilt}[1]{\mathcal{F}_{#1}^{\N}}
\newcommand{\PF}{\mathsf{PF}}
\newcommandx\post[2][1=]{
\ifthenelse{\equal{#1}{}}
	{\phi_{#2}}
	{\phi_{#2}^{#1}}
}
\newcommandx\postm[2][1=]{
	\ifthenelse{\equal{#1}{}}
	{\boldsymbol{\phi}_{#2}}
	{\boldsymbol{\phi}_{#2}^\N}
}
\newcommandx\postp[2][1=]{
	\ifthenelse{\equal{#1}{}}
	{\pi_{#2}}
	{\pi_{#2}^\N}
}
\newcommandx\postpm[2][1=]{
	\ifthenelse{\equal{#1}{}}
	{\boldsymbol{\pi}_{#2}}
	{\pi_{#2}^\N}
}
\newcommandx\postbar[2][1=]{
	\ifthenelse{\equal{#1}{}}
	{\bar{\phi}_{#2}}
	{\bar{\phi}_{#2}^{#1}}
}
\newcommandx\postpbar[2][1=]{
	\ifthenelse{\equal{#1}{}}
	{\bar{\pi}_{#2}}
	{\bar{\pi}_{#2}^{#1}}
}
\newcommand{\predkerbar}[2]{\bar{\kernel{L}}_{#1:#2}}
\newcommand{\prob}{\mathbb{P}} %Probability measure
\newcommand{\probmeas}[1]{\mathsf{M}_1(#1)}
\newcommand{\proj}{\boldsymbol{\Pi}}
\newcommand{\propker}[1]{\kernel{P}_{#1}}
\newcommand{\propkerm}[1]{\boldsymbol{\mathcal{P}}_{#1}}
\newcommandx\nres[2][1=]{
	\ifthenelse{\equal{#1}{}}
	{r_{#2}}
	{r_{#2}^{#1}}
}
\newcommandx\res[2][1=]{
	\ifthenelse{\equal{#1}{}}
	{\rho_{#2}}
	{\rho_{#2}^{#1}}
}
\newcommand{\rn}[1]{\gamma_{#1}}
\newcommand{\rnm}[1]{\boldsymbol{\gamma}_{#1}}
\newcommand{\rset}{\mathbb{R}}
\newcommand{\rsetnonneg}{\mathbb{R}_+}
\newcommand{\rsetpos}{\mathbb{R}_+^*}
\newcommand{\set}[1]{\mathsf{#1}} %Set
\newcommand{\tensprod}{\varotimes}
\newcommand{\testfsymb}{f}
\newcommandx{\testf}[1][1=]{  %Test function
\ifthenelse{\equal{#1}{}}{\testfsymb}{\testfsymb_{#1}}
}
\newcommand{\testfpsymb}{\tilde{f}}
\newcommandx{\testfp}[1][1=]{  %Test function
\ifthenelse{\equal{#1}{}}{\testfpsymb}{\testfpsymb_{#1}}
}
\newcommandx{\testfbar}[1][1=]{  %Test function
\ifthenelse{\equal{#1}{}}{\bar{f}}{\bar{f}_{#1}}
}
\newcommand{\tstatletter}{\kernel{T}}
\newcommandx\tstat[2][1=]{
\ifthenelse{\equal{#1}{}}
	{\tstatletter_{#2}}
	{\tau_{#2}^{#1}}
	%{\tstatletter_{#2}^\N}
}
\newcommandx\tstatm[2][1=]{
\ifthenelse{\equal{#1}{}}
	{\boldsymbol{\tstatletter}_{#2}}
	{\boldsymbol{\tau}_{#2}^{#1}}
	%{\tstatletter_{#2}^\N}
}
\newcommand{\update}[1]{#1}
\newcommand{\var}{\text{Var}}
\newcommand{\wgt}[2]{\omega_{#1}^{#2}}
\newcommand{\wgtm}[2]{\boldsymbol{\omega}_{#1}^{#2}}
\newcommand{\wgtsum}[1]{\Omega_{#1}}
\newcommand{\wgtsumm}[1]{\boldsymbol{\Omega}_{#1}}
\newcommand{\wgtbar}[2]{\bar{\omega}_{#1}^{#2}}
\newcommand{\wgtsumbar}[1]{\bar{\Omega}_{#1}}
\newcommand{\Xbar}[1]{\bar{\set{X}}_{#1}}
\newcommand{\Xalgbar}[1]{\bar{\alg{X}}_{#1}}
\newcommand{\xbar}[1]{\bar{x}_{#1}}
\newcommand{\Xinit}{\chi}
\newcommand{\Xinitprop}{\nu}
\newcommand{\Xinitm}{\boldsymbol{\chi}}
\newcommand{\Xinitpropm}{\boldsymbol{\nu}}
\newcommand{\Xpalg}[1]{\boldsymbol{\mathcal{X}}_{\!#1}}
\newcommand{\Xp}[1]{\boldsymbol{\mathsf{X}}_{#1}}
\newcommand{\xvec}[1]{\boldsymbol{x}_{#1}}
\theoremstyle{definition}
\numberwithin{equation}{section}
\begin{document}

\begin{frontmatter}
	%%%%%%%%%%%%%%%%%%%%%%%%%%%%%%%%%%%%%%%%%%%%%%
	%%                                          %%
	%% Enter the title of your article here     %%
	%%                                          %%
	%%%%%%%%%%%%%%%%%%%%%%%%%%%%%%%%%%%%%%%%%%%%%%
	\title{Adaptive online variance estimation in particle filters: the ALVar estimator}
	%\title{Adaptive lag-based implementation of (iterative) variance estimator }
	\runtitle{Adaptive online variance estimation in particle filters}
	%\thankstext{T1}{A sample of additional note to the title.}
	
	\begin{aug}
		\author[A]{\fnms{Alessandro} \snm{Mastrototaro}\ead[label=e1]{alemas@kth.se}}
		\and
		\author[A]{\fnms{Jimmy} \snm{Olsson}\ead[label=e2]{jimmyol@kth.se}}
	
		%%%%%%%%%%%%%%%%%%%%%%%%%%%%%%%%%%%%%%%%%%%%%%
		%% Addresses                                %%
		%%%%%%%%%%%%%%%%%%%%%%%%%%%%%%%%%%%%%%%%%%%%%%
		\address[A]{Department of Mathematics, KTH Royal Institute of Technology, Stockholm, \printead{e1,e2}}
	\end{aug}
	
	\begin{abstract}
	We present a new approach---the \namealgo estimator---to estimation of asymptotic variance in sequential Monte Carlo methods, or, particle filters. The method, which adjusts adaptively the lag of the estimator proposed in [Olsson, J. and Douc, R. (2019). Numerically stable online estimation of variance in particle filters. \emph{Bernoulli}, \textbf{25}(2), pp.~1504--1535] applies to very general distribution flows and particle filters, including auxiliary particle filters with adaptive resampling. The algorithm operates entirely online, in the sense that it is able to monitor the variance of the particle filter in real time and with, on the average, constant computational complexity and memory requirements per iteration. Crucially, it does not require the calibration of any algorithmic parameter. Estimating the variance only on the basis of the genealogy of the propagated particle cloud, without additional simulations, the routine requires only minor code additions to the underlying particle algorithm. Finally, we prove that the \namealgo estimator is consistent for the true asymptotic variance as the number of particles tends to infinity and illustrate numerically its superiority to existing approaches. 
	
%		The paper introduces the \emph{adaptive--lag variance} (\namealgo) estimator, a novel online procedure for the estimation of the asymptotic variance in sequential Monte Carlo methods, alternatively called particle filters. H.P. Chan and T.L. Lai [\emph{Ann. Statist.} \textbf{41} (2013) 2877--2904] proposed an estimator which requires a single run of the particle filter but suffers from a degeneracy issue that makes it unstable and impractical. J.~Olsson and R.~Douc [\emph{Bernoulli} \textbf{25} (2019) 1504--1535] developed a natural modification with a lag-based technique ensuring long-term stability. Both of the techniques rely upon the genealogical history of the particles, but the latter uses a properly calibrated fixed lag to look at the recent history only and ensure the stability that lacked in the former approach, at the cost of a small bias. Our adaptive procedure automates the laborious choice of such lag and makes it non-constant, \ie we let it increase or decrease at every iteration depending on the output of the particle filter, in order to provide a precise estimate with negligible bias. We extend the model and algorithmic frameworks of the previous results, then show the consistency of our estimator and extend it to a particle filter with adaptive resampling. Finally, we inspect its performance through numerical simulations on some state-space models.
	\end{abstract}
	
	\begin{keyword}
	\kwd{central limit theorem}
	\kwd{particle filter}
	\kwd{sequential Monte Carlo}
	\kwd{variance estimation}
	\end{keyword}
	
\end{frontmatter}

\section{Introduction}\label{sec:intro}
In this paper we present an adaptive online algorithm estimating the asymptotic variance in \emph{particle filters}, or, \emph{sequential Monte Carlo} (SMC) \emph{methods}. SMC methods approximate a given sequence of distributions by propagating recursively a sample of random simulations, so-called \emph{particles}, with associated importance weights. Applications include finance, signal processing, robotics, biology, and several more; see, \eg, \cite{doucet:defreitas:gordon:2001,chopin:papaspiliopoulos:2020}. This methodology, introduced first by \citet{gordon:salmond:smith:1993} in the form of the \emph{bootstrap particle filter}, revolves around two operations: a \emph{selection} step, which resamples the particles in proportion to their importance weights, and a \emph{mutation} step, which randomly propagates the particles in the state space. 

Since the the bootstrap particle filter was introduced, several theoretical results describing the convergence of SMC methods as the number of particles tends to infinity have been established; see, \eg, \cite{cappe:moulines:ryden:2005,delmoral:2004,delmoral:2013}. A contribution of vital importance was made by \citet{delmoral:guionnet:1999}, who established, under general assumptions, a \emph{central limit theorem} (CLT) for standard SMC methods, a result that was later refined in, among others, \cite{chopin:2004,kunsch:2005,douc:moulines:2008}. CLTs are generally essential in Monte Carlo simulation, as these allow the accuracy of produced estimates to be assessed in terms of confidence bounds. However, in the case of particle filters, the asymptotic variance of the weak, Gaussian limit is generally intractable due to the recursive nature of these algorithms. Thus, to estimate the variance of SMC methods is a very challenging task, and although the literature on SMC is vast, only very few works are dedicated to this specific problem. Until just a couple of years ago, the only possible way to estimate the particle-filter variance was take a naive---and computationally very  demanding---approach consisting of calculating the sample variance across independent replicates of the particle filter; see \cite{crisan:miguez:rios:2018} for a similar procedure in the context of parallelisation of SMC methods. An important step towards online variance estimation in particle filters was taken by \citet{chan:lai:2013}, who developed a consistent asymptotic-variance estimator which can be computed sequentially on the basis of a single realisation of the particle filter and without significant additional computational effort. In the same work, the estimator, which we will refer to as the \emph{Chan and Lai estimator} (\CLE), was also shown to be asymptotically consistent as the number of particles tends to infinity. The \CLE was later refined and analysed further in \cite{lee:whiteley:2018,du:guyader:2021}.

In a particle filter, the repeated resampling operations induce genealogical relations between the particles, allowing the estimator---the weighted empirical measure formed by the particles---to be split into terms corresponding to particle subpopulations obtained by stratifying the particle sample by the time-zero ancestors. At each iteration, the \CLE is, simply put, given by the sample variance of these contributions with respect to the average of the full population. However, as time increases, the set of distinct time-zero ancestors depletes gradually, and eventually all the particles share one and the same time-zero ancestor. This \emph{particle-path degeneracy phenomenon} makes the \CLE collapse to zero in the long run. In order to remedy to this issue and to push the technology towards truly online settings, \citet{olsson:douc:2017} devised a lag-based, numerically stable strategy in which the particle sample at time $n$ is stratified by the ancestors at some more recent time $(n-\lambda) \vee 0$, where $\lambda \in \nset$ is a fixed lag parameter. Such a procedure---which can still be implemented in an online fashion---avoids completely the issue of particle-path degeneracy at the cost of a bias induced by the lag. Still, under mild assumptions being satisfied also for models with a non-compact state space, the authors managed to bound this bias uniformly in time by a quantity that decays geometrically with $\lambda$. The simulation study presented in the same work confirms the long-term stability of the produced estimates, which stay, when the lag is well chosen, very close to the ones produced by the naive estimator for arbitrarily long periods of time. However, designing the lag parameter $\lambda$ is highly non-trivial as the optimal choice  depends on the ergodicity properties of the model; indeed, the user faces a delicate bias--variance tradeoff in the sense that using a too small lag results in a numerically stable but significantly biased estimator, while using a too large one eliminates the bias at the cost of high variance implied by the same degeneracy issue as that of the \CLE. 

In this paper we develop further the lag-based approach of \citet{olsson:douc:2017} and propose an estimator that is capable of adapting automatically, by monitoring the degree of depletion of the ancestor sets, the size of the lag as the particles evolve. Like the fixed-lag method in \cite{olsson:douc:2017}, our \emph{adaptive-lag variance} (\namealgo) \emph{estimator} operates online with time-constant memory requirements, but does not require the calibration of any algorithmic parameter. Moreover, estimating the variance only on the basis of the genealogy of the propagated particle cloud, without additional simulations, the routine requires only minor code additions to the underlying particle algorithm \update{and has a linear computational complexity in the number of particles that is fully comparable to the particle filter itself. 
These appealing complexity properties are absolutely crucial in practical applications. As a comparison, the online approach to variance estimation in SMC methods recently proposed in \cite{janati:lecorff:petetin:2022}, relying on backward-sampling techniques, has, at best, a quadratic complexity in the number of particles, which is impractical for large particle sample sizes.} Unlike previous works on variance estimation in SMC, which focus on the standard \emph{bootstrap particle filter} operating on \emph{Feynman--Kac models} \cite{delmoral:2004}, our estimator applies to more general \emph{auxiliary particle filters} (APF) \cite{pitt:shephard:1999} and classes of models. In this setting, we show that the \namealgo estimator is asymptotically consistent as the number of particles tends to infinity. Moreover, we claim and illustrate numerically that the values of the lag chosen adaptively by the algorithm stay stable over time and increase, on the average, only logarithmically with the number of particles; the latter property is fundamental to avoid an excessive demand of computational resources in applications. Furthermore, we extend our estimator to particle filters with adaptive resampling, in which the selection operation is performed only when triggered by some criterion monitoring the particle weight degeneracy, yielding the first SMC variance estimator in that context. 

The rest of the paper is structured as follows: in Section~\ref{sec:prel} we introduce some notation, our general model framework, SMC methods, and variance estimation in particle filters; in addition, we show that all the results obtained in the framework of Feynman--Kac models and the bootstrap particle filter can be extended to our framework and the APF. In Section~\ref{sec:results} we present the \namealgo estimator, prove of its consistency, and provide an extension to particle filters with adaptive resampling. Section~\ref{sec:simul} provides numerical simulations illustrating the algorithm on some classic state-space models. %Finally, Section~\ref{sec:discuss} concludes the paper. 
Finally, Sections~\ref{sec:appendix1}--\ref{sec:appendix2} provide some of the proofs of the results stated in Sections~\ref{sec:prel}--\ref{sec:results}.

\section{Preliminaries}\label{sec:prel}
\subsection{Notation}
We denote by $\nset$ the set of nonnegative integer numbers and let $\nsetpos \eqdef \nset\setminus\{0\}$. For every $(m,n)\in \nset^2$ such that $m\le n$, we denote $\intvect{m}{n}\eqdef\{k\in\nset:m\le k\le n\}$. Moreover, we let $\rsetnonneg$ and $\rsetpos$ be the sets of nonnegative and positive real numbers, respectively, and denote vectors by $x_{m:n}\eqdef(x_m,x_{m+1},\dots,x_{n-1},x_n)$. For a finite set $(p_i)_{i=1}^N$, $N \in\nsetpos$, of nonnegative numbers, we denote by $\catdist((p_i)_{i=1}^N)$ the categorical distribution with sample space $\intvect{1}{N}$ and probability function $\intvect{1}{N} \ni i \mapsto p_i/\sum_{\ell=1}^N p_\ell$. For some general state space $(\set{E}, \alg{E})$ we let $\meas{\alg{E}}$ and $\bmf{\alg{E}}$ be the sets of probability measures on $\alg{E}$ and bounded measurable functions on $(\set{E},\alg{E})$, respectively. For any $\mu \in \meas{\alg{E}}$ and $h \in \bmf{\alg{E}}$ we denote by $\mu h \eqdef \int h(x) \, \mu(dx)$ the Lebesgue integral of $h$ with respect to $\mu$.  

The following kernel notation will be frequently used. Let $(\set{E}', \alg{E}')$ be another measurable space; then a (possibly unnormalised) transition kernel $\kernel{K}:\set{E} \times \alg{E}'\rightarrow\rsetnonneg$ induces the following operations. For any $h \in \bmf{\alg{E}'}$ and $\mu \in \meas{\alg{E}}$ we may define the measurable function 
$$
\kernel{K} h: \set{E} \ni x \mapsto \int h(x') \, \kernel{K}(x, dx')
$$
as well as the measures
\[
\begin{split}
\mu \kernel{K} : \alg{E}' \ni A &\mapsto \int \kernel{K}(x, A) \, \mu(dx), \\
\mu \tensprod \kernel{K} : \alg{E} \tensprod \alg{E}' \ni A &\mapsto \iint \1{A}(x, y)  \, \kernel{K}(x, dy) \, \mu(dx). 
\end{split}
\]
Now, let $(\set{E}'', \alg{E}'')$ be a third measurable state-space and $\kernel{L}$ a possibly unnormalised transition kernel on $\set{E}' \times \alg{E}''$; then, similarly to the operations between measures and kernels, we may define the products
\begin{align}
\kernel{K} \kernel{L} : \set{E} \times \alg{E}'' \ni (x, A) &\mapsto \iint \1{A}(z) \, \kernel{K}(x, dy) \, \kernel{L}(y, dz), \\ 
\kernel{K} \tensprod \kernel{L} : \set{E} \times (\alg{E}' \tensprod \alg{E}'') \ni (x, A) &\mapsto \iint \1{A}(y,z) \, \kernel{K}(x, dy) \, \kernel{L}(y, dz).
\end{align}

\subsection{Model setup}
\label{subsec:model}

In order to define the distribution-flow model under consideration, let $(\set{X}_n,\alg{X}_n)_{n\in\nset}$ be a sequence of measurable state spaces. We introduce unnormalised transition kernels $(\lk{n})_{n \in \nset}$, $\lk{n} : \set{X}_n \times \alg{X}_{n+1}\rightarrow\rsetnonneg$, where each $\lk{n}$ is such that $\sup_{x_n \in \set{X}_n} \lk{n} \1{\set{X}_{n+1}}(x_n) < \infty$. For compactness, we write $\lk{k}[m] \eqdef \lk{k}\lk{k+1}\cdots\lk{m}$ whenever $k \le m$, otherwise $ \lk{k}[m]=\mathrm{id}$ by convention. In addition, we let $\Xinit$ be some measure on $\alg{X}_0$. Using these quantities we may define a flow $\post{n} \in \probmeas{\alg{X}_n}$, $n \in \nset$, of probability distributions by letting, for every $n \in \nset$, 
\begin{equation}\label{eq:filtmeas}
\post{n} = \frac{\Xinit \lk{0}[n-1]}{\Xinit\lk{0}[n-1]\1{\set{X}_n}}.
\end{equation}

\begin{example}[Feynman--Kac models]\label{ex:fk}
\emph{Feynman--Kac models} are applied in a variety of scientific fields such as statistics, physics, biology, and signal processing; see \cite{delmoral:2004} for a broad coverage of the topic. For every $n\in\nset$, let $\hk{n}:\set{X}_n \times \alg{X}_{n+1}\rightarrow[0,1]$ be a Markov transition kernel, $\g{n} : \set{X}_n \rightarrow \rsetnonneg$ a measurable potential function and $\Xinitprop$ be a probability measure on $(\set{X}_0,\alg{X}_0)$. Then the \emph{Feynman--Kac model} $(\post{n})_{n \in \nset}$ induced by $\nu$ and $((\set{X}_n, \alg{X}_n), \hk{n}, \g{n})_{n \in \nset}$ is given \eqref{eq:filtmeas} with 
\begin{equation}
	\lk{n}\func{n+1}(x_n)= \hk{n}(\g{n+1} \func{n+1})(x_n),\quad \func{n+1}\in\bmf{\alg{X}_{n+1}}, \quad x_n \in \set{X}_n.\footnote{More precisely, following the terminology of \cite{delmoral:2004}, the distribution flow generated by kernels of this type is referred to as an \emph{updated Feynman--Kac marginal model}. We will however refer to it as simply a `Feynman--Kac model' for simplicity.} 
\end{equation}
\end{example}

\begin{example}[Partially dominated state-space models]\label{ex:ssm}
General \emph{state-space models} (SSMs) constitute an important modeling tool in a diversity of scientific and engineering disciplines; see {\eg} \cite{cappe:moulines:ryden:2005} and the references therein. An SSM consists of a bivariate Markov chain $(X_n, Y_n)_{n \in \mathbb{N}}$ evolving on some measurable product space $(\set{X} \times \set{Y}, \alg{X} \tensprod \alg{Y})$ according to a Markov transition kernel $\kernel{Q}:(\set{X} \times \set{Y})\times (\alg{X} \tensprod \alg{Y})\rightarrow[0,1]$ constructed on the basis of two other Markov kernels $\kernel{M}:\set{X}\times\alg{X}\rightarrow[0,1]$ and $\kernel{G}:\set{X}\times\alg{Y}\rightarrow[0,1]$ as 
%follows: for $h\in \bmf{\alg{X} \tensprod \alg{Y}}$ and $(x,y)\in\set{X}\times\set{Y}$, set  
$$
\kernel{Q} : (\set{X}\times\set{Y}) \times (\alg{X} \tensprod \alg{Y}) \ni ((x,y), A) \mapsto \kernel{M} \varotimes \kernel{G}(x, A).
$$
%$$
%\kernel{Q} h (x,y)=(\kernel{M}\varotimes\kernel{G})h(x).
%$$ 
The chain is initialised according to $\Xinitprop\varotimes\kernel{G}$, where $\Xinitprop$ is some probability measure on $(\set{X}, \alg{X})$. In this setting, only the process $(Y_n)_{n \in \nset}$ is observed, whereas the process $(X_n)_{n \in \nset}$---referred to as the \emph{state process}---is unobserved and hence referred to as \emph{hidden}. In this construction, it can be shown (see \cite[Section~2.2]{cappe:moulines:ryden:2005} for details), first, that the state process is itself a Markov chain with transition kernel $\kernel{M}$ and, second, that the observations $(Y_n)_{n \in\nset}$ are conditionally independent given $(X_n)_{n \in \nset}$, with marginal \emph{emission distributions} $Y_n \sim \kernel{G}(X_n, \cdot)$, $n \in \nset$. We assume that the model is \emph{partially dominated}, \ie, that the kernel $\kernel{G}$ admits a transition density $g:\set{X}\times\set{Y}\rightarrow\rsetnonneg$ with respect to some reference measure $\mu$.

Many practical applications of SSMs call for computation of flows of hidden-state posteriors given a sequence $(y_n)_{n \in \nset}$ of observations. In particular, the flow $(\post{n})_{n \in \nset}$ of \emph{filter distributions}, each filter $\post{n}$ being the conditional distribution of the state $X_n$ at time $n$ given $Y_{0:n} = y_{0:n}$, can be expressed as a Feynman--Kac model with $(\set{X}_n, \alg{X}_n)=(\set{X},\alg{X})$, $\hk{n}=\kernel{M}$, and $\g{n}(x)\eqdef g(x,y_n)$ for all $n \in \nset$;  see \cite[Section~3.1]{cappe:moulines:ryden:2005} for details. Inspired by this terminology, we will sometimes refer to each distribution $\post{n}$ in the general flow defined by \eqref{eq:filtmeas} as the \emph{filter at time} $n$. 
\end{example}

\subsection{Sequential Monte Carlo methods}\label{subsec:smc}
In the following we assume that all random variables are well defined on a common probability space $(\Omega,\alg{F},\prob)$. As mentioned in the introduction, we may approximate recursively the distribution sequence $(\post{n})_{n\in \nset}$ by propagating a random sample $(\epart{n}{i},\wgt{n}{i})_{i=1}^N$ of particles and associated weights. Here $N \in \nsetpos$ is the Monte Carlo sample size. More precisely, at each time step, the filter distribution $\post{n}$ is approximated by the weighted empirical measure 
\update{$$
\post[N]{n} \eqdef \sum_{i=1}^N\frac{\wgt{n}{i}}{\wgtsum{n}}\delta_{\epart{n}{i}},  
$$}
where $\Omega_n\eqdef\sum_{i=1}^N\wgt{n}{i}$ and $\delta_{\epart{n}{i}}$ is the Dirac measure located at $\epart{n}{i}$. The APF propagates the sample $(\epart{n}{i},\wgt{n}{i})_{i=1}^N$ recursively as follows. The algorithm is initialised by standard importance sampling, drawing $(\epart{0}{i})_{i=1}^n$ from $\Xinitprop^{\tensprod N}$, where $\Xinitprop \in \probmeas{\alg{X}_0}$ is some proposal distribution dominating $\Xinit$, and letting $\wgt{0}{i}\gets \rn{-1}(\epart{0}{i})$ for each $i$, where $\rn{-1}$ is the Radon--Nikodym derivative of $\Xinit$ with respect to $\Xinitprop$. The auxiliary functions $(\am{n})_{n\in\nset}$, where $\am{n}\in\bmf{\alg{X}_n}$, are introduced in order to favor the resampling of particles that are more likely to be propagated into regions of high likelihood (as measured by the target distributions). The particles are propagated according to some proposal Markov transition kernels $\propker{n}$, $n\in\nset$. These kernels are such that, for each $n\in\nset$ and $x_n\in\set{X}_n$, the probability measure $\propker{n}(x_n,\cdot)$ is absolutely continuous with respect to the the measure $\lk{n}(x_n,\cdot)$. Hence, given $x_n$, 
there is a Radon--Nikodym derivative $\rn{n}(x_n,\cdot)$ such that for every $x_n\in\set{X}_n$ and  $h\in\bmf{\alg{X}_{n+1}}$,
\begin{equation}
\int h(x_{n+1})\, \lk{n}(x_n,dx_{n+1})=\int h(x_{n+1}) \rn{n}(x_n,x_{n+1})\,\propker{n}(x_n,dx_{n+1}).
\end{equation}
Algorithm~\ref{algo:apf} shows one iteration of the APF. In the following we will express one iteration of the APF as $(\epart{n+1}{i},\wgt{n+1}{i},\I{n+1}{i})_{i=1}^N\leftarrow\PF((\epart{n}{i},\wgt{n}{i})_{i=1}^N)$, where also the resampled indices are included in the output for reasons that will be clear later. 
\begin{algorithm}[htb]
	\caption{Auxiliary particle filter}\label{algo:apf}
	\begin{algorithmic}[1]
		\Statex Input: $(\epart{n}{i},\wgt{n}{i})_{i=1}^\N$.
		\Statex Output $(\epart{n+1}{i},\wgt{n+1}{i},\I{n+1}{i})_{i=1}^\N$.
		\For{$i=1\rightarrow\N$}
		\State draw $\I{n+1}{i}\sim \catdist((\wgt{n}{\ell}\am{n}(\epart{n}{\ell}))_{\ell=1}^\N)$;\State draw $\epart{n+1}{i}\sim \propker{n}(\epart{n}{\I{n+1}{i}},\cdot)$;\State weight $\wgt{n+1}{i}\leftarrow  \rn{n}(\epart{n}{\I{n+1}{i}}, \epart{n+1}{i})/\am{n}(\epart{n}{\I{n+1}{i}})$;
		\EndFor
	\end{algorithmic}
\end{algorithm}
As mentioned in the introduction, the first proof of the CLT for SMC methods obtained in \cite{delmoral:guionnet:1999} has been refined and generalised in a number of papers. The following theorem provides a CLT for APFs in the general model context of Section~\ref{subsec:model}, and follows immediately from the more general result \cite[Theorem~B.6]{mastrototaro:olsson:alenlov:2021}.%(by simply applying the latter with $f_n\equiv0$)
\footnote{\update{In the mentioned work, where focus is on particle approximation of path-distribution flows using backward-sampling techniques, the underlying model is, in contrast to here, supposed to be dominated in the sense that the kernels $(\lk{n})_{n\in\nset}$ and $(\propker{n})_{n\in\nset}$ are assumed to have densities with respect to some dominating measures. Still, as long as focus is only on the marginal flow $(\post{n})_{n \in \nset}$, the same proofs may be straightforwardly adapted to the non-dominated setting.}} 

\begin{assumption}\label{assum:bounded}
For every $n \in \nset$, $\am{n} \in \bmf{\alg{X}_n}$ and $\rn{n}/\am{n} \in \bmf{\alg{X}_n}$. In addition, $\rn{-1} \in \bmf{\alg{X}_0}$. 
\end{assumption}
\begin{theorem}\label{prop:clt}
Let Assumption~\ref{assum:bounded} hold. Then for every $n\in\nset$ and $\func{n}\in\bmf{\alg{X}_n}$, as $N\to\infty$,
\begin{equation}\label{eq:CLT}
\sqrt{N}(\post[N]{n}\func{n}-\post{n}\func{n})\convd \sigma_n(\func{n})Z,
\end{equation}
with $Z$ being standard normally distributed and $ \asvar{n}(\func{n}) \eqdef\asvar{0,n}(\func{n})$, where, for $\ell\in\intvect{0}{n}$,
\begin{multline}\label{eq:asvar}
\asvar{\ell,n}(\func{n})
\eqdef \frac{\Xinit(\rn{-1}\{\lk{0}[n-1](\func{n}-\post{n}\func{n})\}^2)}{(\Xinit \lk{0}[n-1]\1{\set{X}_n})^2} \1{\{\ell=0\}}
\\+ \sum_{m=(\ell-1)\vee 0}^{n-1} \post{m}\am{m} \frac{\post{m}\lk{m}(\am{m}^{-1}\rn{m} \{\lk{m+1}[n-1](\func{n}-\post{n}\func{n})\}^2  )}{(\post{m} \lk{m}[n-1] \1{\set{X}_n})^2}.
\end{multline}
\end{theorem}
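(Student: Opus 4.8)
\emph{Plan via reduction.} Since the statement is asserted to follow from the general CLT \cite[Theorem~B.6]{mastrototaro:olsson:alenlov:2021}, the first route is to make that reduction explicit. I would match the present objects with the cited framework: the unnormalised kernels $(\lk{n})$, the proposal kernels $(\propker{n})$ with Radon--Nikodym derivatives $(\rn{n})$, the adjustment functions $(\am{n})$ entering the categorical selection step of Algorithm~\ref{algo:apf}, and the initial importance weight $\rn{-1}$. I would then check that Assumption~\ref{assum:bounded}---boundedness of $\am{n}$, of $\rn{n}/\am{n}$, and of $\rn{-1}$---is exactly what guarantees finiteness of all the moments the general theorem requires, so that its hypotheses hold. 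Because we track only the marginal flow $(\post{n})$ rather than full path functionals, the fact that the kernels here are not assumed dominated is harmless, as noted in the footnote: the relevant proofs transfer verbatim. The only genuine work is then to rewrite the asymptotic variance delivered by the general result into the stratified form \eqref{eq:asvar}.

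\emph{Underlying mechanism.} To carry out that rewriting---or, equivalently, to give a self-contained proof---I would argue by induction on $n$, having first secured the weak law of large numbers $\post[N]{n}\func{n}\convp\post{n}\func{n}$ and the convergence of the normalising constants $\wgtsum{n}/N$; these let me pass between unnormalised and normalised errors by a Slutsky/ratio argument. For the inductive step I condition on the $\sigma$-field $\partfilt{n}$ generated by the particle system through time $n$ and split
\[
\sqrt{N}\bigl(\post[N]{n+1}\func{n+1}-\post{n+1}\func{n+1}\bigr)=A_N+B_N,
\]
where $A_N=\sqrt{N}\bigl(\post[N]{n+1}\func{n+1}-\cexp{\post[N]{n+1}\func{n+1}}{\partfilt{n}}\bigr)$ is the fresh selection-and-mutation noise and $B_N$ is the remaining propagated error. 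Conditionally on $\partfilt{n}$ the summands making up $A_N$ are i.i.d.\ and centred, so a conditional Lindeberg--Feller CLT applies: Assumption~\ref{assum:bounded} bounds each contribution, the $1/N$ scaling gives the Lindeberg condition, and the conditional variance converges in probability, by the law of large numbers, to the one-step contribution
\[
\post{n}\am{n}\,\frac{\post{n}\lk{n}\bigl(\am{n}^{-1}\rn{n}\{\func{n+1}-\post{n+1}\func{n+1}\}^2\bigr)}{(\post{n}\lk{n}\1{\set{X}_{n+1}})^2},
\]
which is precisely the $m=n$ summand of \eqref{eq:asvar}.

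\emph{Propagated term and base case.} The term $B_N$ reduces, again by Slutsky applied to the ratio defining the one-step update $\post{n+1}\func{n+1}=\post{n}\lk{n}\func{n+1}/\post{n}\lk{n}\1{\set{X}_{n+1}}$, to $\sqrt{N}\,(\post[N]{n}-\post{n})\lk{n}(\func{n+1}-\post{n+1}\func{n+1})$, i.e.\ the error from time $n$ evaluated at a suitably normalised transformed test function. The induction hypothesis then yields a centred Gaussian limit whose variance is $\asvar{n}$ of that transformed function; expanding it reproduces exactly the summands with $m\le n-1$ in \eqref{eq:asvar}, the denominators updating correctly through the normalising constants $(\post{m}\lk{m}[n-1]\1{\set{X}_n})^2$. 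Since the fresh noise is $\partfilt{n}$-conditionally independent of everything $\partfilt{n}$-measurable, $A_N$ and $B_N$ are asymptotically independent and their variances add. The base case $n=0$ is the ordinary importance-sampling CLT under $\Xinitprop$, producing the $\ell=0$ boundary term built from $\Xinit$ and $\rn{-1}$.

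\emph{Main obstacle.} The delicate point is the variance bookkeeping, in particular the way the adjustment function $\am{n}$ threads through the twisted selection step: resampling with weights proportional to $\wgt{n}{\ell}\am{n}(\epart{n}{\ell})$ inflates the conditional sampling variance by a factor tied to $\am{n}$, which is then partly cancelled by the reweighting $\rn{n}/\am{n}$, leaving $\am{n}^{-1}\rn{n}$ inside the integral and the compensating $\post{n}\am{n}$ outside. Verifying that these factors assemble into exactly the form \eqref{eq:asvar}, with the correct $\ell=0$ boundary term and the telescoping of denominators, is where the care lies; the analytic ingredients (boundedness, Slutsky, the conditional CLT) are otherwise standard given Assumption~\ref{assum:bounded}.
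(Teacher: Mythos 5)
Your proposal is correct and takes essentially the same route as the paper: the paper proves this theorem simply by invoking the general CLT of \cite[Theorem~B.6]{mastrototaro:olsson:alenlov:2021} and noting (in a footnote) that the lack of domination is harmless when only the marginal flow is concerned, which is precisely the reduction in your first paragraph. The remainder of your write-up is a correct sketch of the standard martingale-decomposition argument underlying that cited result, including the key points (centring of the test function to kill the squared conditional mean, and the cancellation of the $\am{n}$ factors between the twisted selection weights and the reweighting), so nothing essential is missing.
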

The truncated asymptotic variance ($\ell > 0$) will be useful later on. 
%Note that the general truncated expression \eqref{eq:asvar} for writing the asymptotic variance will become useful later on.

The present paper focuses on estimating online, as $n$ increases and while the particle sample is propagated, the sequence of the asymptotic variances in \eqref{eq:CLT}. Before presenting our online variance estimator, the next section provides a brief overview of some current approaches.

\subsection{Estimation of the asymptotic variance}
\label{eq:estimation:asymptotic:variance}

As touched upon in the introduction, a naive approach to variance estimation in particle filters is to use a brute-force strategy which runs a sufficiently large number $K \in \nsetpos$ of independent particle filters. Then the asymptotic variance of interest can be estimated by multiplying the sample variance of these filter approximations by $N$. \update{However, having $\ordo(KN)$ complexity, where 
$N$ as well as $K$ should be sufficiently large to provide precise filter and variance estimates, respectively,  
%$N$ should be quite large in order to provide a precise filter estimate and the same is required from $K$ for a precise variance estimate, 
this approach is clearly computationally impractical. Moreover, implementing this procedure in an online fashion requires all the samples of each particle filter to be stored, implying also an $\ordo(KN)$ memory requirement.}

Appealingly, the online approach devised \citet{chan:lai:2013} estimates consistently the sequence of asymptotic variances based only on the cloud of evolved particles and without requiring the execution of multiple SMC algorithms in parallel or any additional simulations. 
%The term \emph{online} means that at every iteration of the particle filter we are able to produce an estimate, without the need of running other filters in parallel or even using additional simulation, and maintaining constant the memory and computational requirements of each iteration. 
This is possible by keeping track, as $n$ increases, of the so-called \emph{Eve indices} $(\enoch{n}{i})_{i=1}^N$ (using the terminology of \citet{lee:whiteley:2018}) identifying  the particles at time zero from which the ones at time $n$ originate, in the sense that $\enoch{n}{i}$ denotes the index of the time-zero ancestor of particle $ \epart{n}{i}$. These indices can be traced iteratively in the particle filter by initially letting, for all $i\in\intvect{1}{N}$, $\enoch{0}{i}\leftarrow i$ and then, as $n$ increases, update the same according to $\enoch{n+1}{i}\leftarrow \enoch{n}{\I{n+1}{i}}$. Such updates are straightforwardly implemented by adding one line of code after the selection operation on Line~2 in Algorithm~\ref{algo:apf}. Then the \CLE estimator %$\asvarp{n}(\func{n})$ 
of $ \asvar{n}(\func{n})$ is, for any $\func{n}\in \bmf{\alg{X}_n}$, given by
\begin{equation}\label{eq:CLE}
	\asvarp{n}(\func{n})\eqdef N \sum_{i=1}^N\left(\sum_{j:\enoch{n}{j}=i}\frac{\wgt{n}{j}}{\wgtsum{n}}\{\func{n}(\epart{n}{j})-\post[N]{n}\func{n}\}\right)^2.
\end{equation}
As a main result, \citet{chan:lai:2013} established the consistency of this estimator, in the sense that for every $n\in\nset$ and $\func{n}\in\bmf{\alg{X}_n}$, $\asvarp{n}(\func{n})\convp \asvar{n}(\func{n})$ as $N$ tends to infinity. Although being groundbreaking in theory, the estimator \eqref{eq:CLE} suffers a severe drawback in practice due to the particle-path degeneracy phenomenon. Indeed, because of the resampling operation, at each iteration of the filter some particles will inevitably be propagated from the same parent-particle. Thus, eventually, when $ n $ is large enough, all particles will share the same time-zero ancestor, \emph{i.e.}, there will exist $i_0\in\intvect{1}{N}$ such that $\enoch{n}{i}=i_0$, for all $i\in \intvect{1}{N}$. Recently, \citet{koskela:jenkins:johansen:spano:2020} showed, under some standard mixing assumptions on the model, that the number of iterations needed to make the genealogical paths of the particles coalesce in this way is $\ordo(N)$. Hence, eventually the estimate \eqref{eq:CLE} collapses to zero, which makes it unusable for large values of $n$. In practice, the estimator exhibits poor accuracy and high variability already when the Eve indices take only few distinct values, as the variance estimates will be based on only a few distinct values in that case.

%It can also be observed in practice that even when the Eve indices take only few distinct values, this estimator may be imprecise and highly variable, as the variance estimates will be based on only a few distinct values in that case.

In order to remedy this issue, \citet{olsson:douc:2017} suggest to, rather than tracing the time-zero ancestors, estimate the variance on the basis of the ancestors in some more recent generation. For this purpose, they introduce the \emph{Enoch indices} defined recursively, for all $i\in\intvect{1}{N}$ and $m\in\intvect{0}{n+1}$, by
\begin{equation}\label{eq:enoch-update}
	\enoch{m,n+1}{i}\eqdef\begin{cases}
		i\quad &\text{for }m=n+1,\\\enoch{m,n}{\I{n+1}{i}}&\text{for }m<n+1.
	\end{cases}
\end{equation}
In words, $\enoch{m,n}{i}$ indicates the index of the ancestor at time $m \leq n$ of particle $i$ at time $n$; moreover, notice that when $m=0$, these indices correspond to the Eve indices. Then, letting, for some lag $\lambda\in\nset$, $n\langle\lambda\rangle\eqdef(n-\lambda)\vee 0$, the \CLE \eqref{eq:CLE} is replaced by the modified estimator
\begin{equation}\label{eq:OD}
	\asvarp{n,\lambda}(\func{n})\eqdef N \sum_{i=1}^N\left(\sum_{j:\enoch{n\langle\lambda\rangle,n}{j}=i}\frac{\wgt{n}{j}}{\wgtsum{n}}\{\func{n}(\epart{n}{j})-\post[N]{n}\func{n}\}\right)^2.
\end{equation}
Since for a given lag $ \lambda $, the number $n\langle\lambda\rangle$ of the generation to which the Enoch indices underpinning the estimator \eqref{eq:OD} refer varies with $n$, the algorithm requires the storage and iterative updating of a \emph{window} $(\enoch{n\langle\lambda\rangle,n}{i}, \dots, \enoch{n,n}{i})_{i = 1}^N$ 
%$\{(\enoch{m,n}{i})_{i=1}^N : m\in\intvect{n\langle\lambda\rangle}{n}\}$ 
of Enoch indices. One iteration of the procedure is shown in Algorithm~\ref{algo:pfenoch}, which is initialised be generating the initial particle cloud as in Algorithm~\ref{algo:apf} and letting, in addition, $\enoch{0,0}{i}\leftarrow i$ for all $i\in\intvect{1}{N}$. We observe that the memory requirement and computational complexity of each iteration of the algorithm are both $\ordo(\lambda N)$, independently of the time index $n$.

\begin{algorithm}[H]
	\caption{APF with online Enoch-indices updates}\label{algo:pfenoch}
	\begin{algorithmic}[1]
		\Statex \textbf{Data}: $(\epart{n}{i},\wgt{n}{i})_{i=1}^N$, $(\enoch{n\langle\lambda\rangle,n}{i}, \dots, \enoch{n,n}{i})_{i = 1}^N$ 
		%$\{\enoch{n\langle\lambda\rangle,n}{i},\dots, \enoch{n,n}{i}\}_{i=1}^N$
		\Statex \textbf{Result}: $(\epart{n+1}{i},\wgt{n+1}{i})_{i=1}^N$, 
		%$\{\enoch{(n+1)\langle\lambda\rangle,n+1}{i},\dots, \enoch{n+1,n+1}{i}\}_{i=1}^N$
		$(\enoch{(n+1)\langle\lambda\rangle,n+1}{i}, \dots, \enoch{n+1,n+1}{i})_{i=1}^N$
		\State run $(\epart{n+1}{i},\wgt{n+1}{i},\I{n+1}{i})_{i=1}^N\leftarrow\PF((\epart{n}{i},\wgt{n}{i})_{i=1}^N)$;
		\For{$i=1\rightarrow N$}
		\For{$ m=(n+1)\langle\lambda\rangle\rightarrow n $}
		\State set $\enoch{m,n+1}{i}\leftarrow \enoch{m,n}{\I{n+1}{i}}$;
		\EndFor
		\State set $\enoch{n+1,n+1}{i}\leftarrow i$;
		\EndFor
	\end{algorithmic}
\end{algorithm}

The estimator \eqref{eq:OD} is not consistent for the asymptotic variance $\asvar{n}(\func{n})$ as $N$ tends to infinity; still, \citet[Proposition~8]{olsson:douc:2017} showed that for all $\lambda\in\nset$ and $\func{n}\in\bmf{\alg{X}_n}$, $\asvarp{n,\lambda}(\func{n})$ converges to $\asvar{n\langle\lambda\rangle,n}(\func{n})$ in probability as $N$ tends to infinity, where $\asvar{n\langle\lambda\rangle,n}(\func{n})$ is the truncated asymptotic variance given by \eqref{eq:asvar}, a quantity that is always smaller than the true asymptotic variance. Additional theoretical results \citep[Section~4]{olsson:douc:2017} establish that under mild, verifiable model assumptions, the asymptotic bias induced by the truncation decays geometrically fast with $\lambda$ (uniformly in $n$). 

\update{
% In \cite{chan:lai:2013}, the CLE was introduced in the context of SSMs and particle filters without adjustment weight functions, while the refinements in \cite{lee:whiteley:2018} and the fixed-lag approach of
The results of \cite{olsson:douc:2017} were derived in the context of Feynman--Kac models and standard bootstrap particle filters, which is a more restrictive setting than the one considered here. Still, interestingly, it is   possible to show that a general APF operating on a general distribution-flow in the form  \eqref{eq:filtmeas} can actually be interpreted as standard bootstrap filter operating on a certain auxiliary, extended Feynman--Kac model. Thus, using this trick, which  is described in detail in  Section~\ref{sec:appendix1}, we are able to extend the consistency results obtained in \cite{olsson:douc:2017} to the general setting of the present paper. This is the contents of Theorem~\ref{thm:consistency-fixed-lag}, whose proof is found in Section~\ref{sec:appendix1}.
}

\begin{assumption}\label{assum:1}
	For all $n\in\nset$ and $(x_n,x_{n+1})\in\set{X}_n\times\set{X}_{n+1}$, $$\frac{\rn{n}(x_n,x_{n+1})\am{n+1}(x_{n+1}) }{\am{n}(x_n)}>0$$ and $$\sup_{(x_n,x_{n+1})\in\set{X}_n\times\set{X}_{n+1}}\frac{\rn{n}(x_n,x_{n+1})\am{n+1}(x_{n+1}) }{\am{n}(x_n)}<\infty.$$
	Moreover, for all $x_0\in\set{X}_0$, $$\rn{-1}(x_0)\am{0}(x_0)>0\quad\text{and}\quad\sup_{x_0\in\set{X}_0}\rn{-1}(x_0)\am{0}(x_0)<\infty.$$ 
\end{assumption}
\begin{theorem}\label{thm:consistency-fixed-lag}
	Let Assumptions~\ref{assum:bounded}~and~\ref{assum:1} hold. Then for every $n\in\nset$, $\lambda\in \nset$, and $\func{n} \in\bmf{\alg{X}_n}$, as $N \to \infty$,
	\begin{equation}
	\asvarp{n,\lambda}(\func{n}) \convp \asvar{n\langle\lambda\rangle,n}(\func{n}).
	\end{equation}
\end{theorem}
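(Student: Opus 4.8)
The plan is to deduce the statement from \citet[Proposition~8]{olsson:douc:2017}, which establishes precisely this type of convergence, but only for \emph{bootstrap} particle filters on Feynman--Kac models. The bridge is the auxiliary, extended Feynman--Kac model announced before the statement and developed in Section~\ref{sec:appendix1}. I would first exhibit an extended model whose bootstrap filter is \emph{pathwise identical} to the general APF of Algorithm~\ref{algo:apf}, so that the two share the same particles, weights, selection indices and hence the same Enoch indices \eqref{eq:enoch-update}; the estimator \eqref{eq:OD} can then be read off from the extended bootstrap filter, to which the cited proposition applies.

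Concretely, I would lift the problem to the pair state spaces $\Xbar{n}\eqdef\set{X}_{n-1}\times\set{X}_n$ (with $\Xbar{0}\eqdef\set{X}_0$), on which the extended chain records consecutive states $(x_{n-1},x_n)$. The extended mutation kernel sends $(x_{n-1},x_n)$ to $(x_n,x_{n+1})$ with $x_{n+1}\sim\propker{n}(x_n,\cdot)$, and the extended potential is $\gbar{n}(x_{n-1},x_n)\eqdef\rn{n-1}(x_{n-1},x_n)\am{n}(x_n)/\am{n-1}(x_{n-1})$ for $n\ge 1$, with $\gbar{0}(x_0)\eqdef\rn{-1}(x_0)\am{0}(x_0)$. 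The key identity is that the selection weight attached by the bootstrap filter, $\gbar{n}(\cdot)$, evaluates at a particle to $\wgt{n}{i}\am{n}(\epart{n}{i})$, which is exactly the APF selection weight on Line~2 of Algorithm~\ref{algo:apf}; since the mutations also coincide, the bootstrap filter on the extended model and the APF, run with common randomness, generate identical particle genealogies. Telescoping the $\am{n}$ factors along a path shows, moreover, that the product of extended potentials collapses to $\rn{-1}(x_0)\{\prod_{k=0}^{n-1}\rn{k}(x_k,x_{k+1})\}\am{n}(x_n)$, so that, after converting proposals into the kernels $\lk{k}$ through their Radon--Nikodym derivatives $\rn{k}$, the normalised extended flow at time $n$ is the $\am{n}$-tilt of $\post{n}$ in \eqref{eq:filtmeas}. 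Consequently the APF filter $\post[N]{n}$ is a self-normalised importance-sampling functional, with weight $\am{n}^{-1}$, of the extended bootstrap filter, and \eqref{eq:OD} equals, up to the scalar $(\post[N]{n}\am{n})^{2}$, the Olsson--Douc fixed-lag estimator for the extended model applied to the lifted and $\am{n}$-rescaled function $(x_{n-1},x_n)\mapsto\am{n}(x_n)^{-1}(\func{n}(x_n)-\post{n}\func{n})$, whose empirical mean under the extended filter is identically zero.

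It then remains to verify the hypotheses of \citet[Proposition~8]{olsson:douc:2017} for the extended model and to identify the limit. Assumption~\ref{assum:1} is designed for exactly this: its positivity and uniform boundedness of $\rn{n}\am{n+1}/\am{n}$, together with those of $\rn{-1}\am{0}$, are precisely the statements that the extended potentials $\gbar{n}$ are positive and bounded, i.e. the strong-positivity conditions under which the cited proposition holds, while Assumption~\ref{assum:bounded} secures the integrability of the lifted function that the proposition requires. Applying the proposition yields convergence in probability of the extended fixed-lag estimator to the lag-$\lambda$ truncated asymptotic variance of the extended bootstrap model; multiplying by the limit $(\post{n}\am{n})^{2}$ of the scalar prefactor and performing the reverse $\am$-telescoping then gives a quantity that I would check, by direct computation, to coincide with $\asvar{n\langle\lambda\rangle,n}(\func{n})$ in \eqref{eq:asvar}.

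The main obstacle is precisely this final bookkeeping. One must track simultaneously the $\am{n}$-tilting between the APF and the extended bootstrap filter, the two differing normalisations $\wgtsum{n}$ and $\sum_\ell\gbar{n}(\cdot)$, and the one-step shift induced by the pair-valued state, so that the summation over $m$, the factors $\post{m}\am{m}$, and the nested kernels $\lk{m+1}[n-1]$ in \eqref{eq:asvar} emerge in exactly the right places from the extended model's variance formula. The appearance of the weights $\post{m}\am{m}$ in \eqref{eq:asvar}---already visible in the CLT of Theorem~\ref{prop:clt}---is in fact a direct fingerprint of this $\am$-tilting mechanism, which makes the identification both natural and delicate.
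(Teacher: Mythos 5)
Your overall strategy---lifting the APF to a bootstrap filter on the pair-space Feynman--Kac model with potentials $\gbar{n}(x_{n-1},x_n)=\rn{n-1}(x_{n-1},x_n)\am{n}(x_n)/\am{n-1}(x_{n-1})$ and then invoking \citet[Proposition~8]{olsson:douc:2017}---is exactly the paper's, including the identification of the selection weights, the genealogies, and the verification that Assumption~\ref{assum:1} yields positivity and boundedness of the extended potentials. Where you diverge is in how you undo the $\am{n}$-tilt at the terminal time, and this is where a genuine gap appears. The paper sidesteps the tilt entirely with one observation you miss: since $\am{n}$ only enters the algorithm through the selection step from time $n$ to $n+1$, the law of the time-$n$ sample is unaffected by it, so one may take $\am{n}\equiv 1$ \emph{at the terminal time} without loss of generality. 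Then $\wgtbar{n}{i}=\wgt{n}{i}$, the lifted function is simply $\funcbar{n}(\xbar{n})=\func{n}(\xbar{n}^2)$, and $\asvarp{n,\lambda}(\func{n})$ \emph{is} (in distribution) the Olsson--Douc estimator of the extended model---no prefactor, no rescaled test function.

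Your alternative, keeping the tilt and writing $\asvarp{n,\lambda}(\func{n})$ as $(\post[N]{n}\am{n})^2$ times the extended estimator of $\bar f(x_{n-1},x_n)=\am{n}(x_n)^{-1}(\func{n}(x_n)-\post{n}\func{n})$, does not close as stated. The identity you rely on---that the stratum sums of $\wgtbar{n}{j}\bar f(\epartbar{n}{j})/\wgtsumbar{n}$ reduce to those of $\wgt{n}{j}\{\func{n}(\epart{n}{j})-\post[N]{n}\func{n}\}/\wgtsum{n}$ up to the scalar $\wgtsum{n}/\wgtsumbar{n}$---holds only if the centering constant inside $\bar f$ is the \emph{empirical} mean $\post[N]{n}\func{n}$; your claim that the ``empirical mean under the extended filter is identically zero'' is false for the deterministic centering $\post{n}\func{n}$ that you actually use. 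You are therefore caught between two problems: with empirical centering the test function is $N$-dependent (and random), so Proposition~8 of \cite{olsson:douc:2017} does not apply as stated; with deterministic centering an extra cross term of the form $(\post[N]{n}\func{n}-\post{n}\func{n})\bigl(\sum_{j:\enoch{n\langle\lambda\rangle,n}{j}=i}\wgt{n}{j}/\wgtsum{n}-\sum_{j:\enochbar{n\langle\lambda\rangle,n}{j}=i}\wgtbar{n}{j}/\wgtsumbar{n}\bigr)$ enters each stratum, and controlling its contribution to the $N$-scaled sum of squares requires an additional tightness argument (essentially that $N$ times the sum of squared stratum weights is $\ordo_{\prob}(1)$) that you neither state nor prove. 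This can be patched, but it is a real missing step---and the $\am{n}\equiv 1$ reduction makes it entirely unnecessary.
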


The main practical issue with the lag-based approach of \cite{olsson:douc:2017} is that the design of an optimal lag might be a difficult task. 
%Here the user faces a classical bias--variance tradeoff: u
Using a too large lag implies, as for the \CLE, depletion of the set of ancestors supporting the estimator, leading to high variance; on the other hand, using a too small lag decreases this variance, however at the cost of significant underestimation of the asymptotic variance of interest. The fact that the asymptotic bias decreases geometrically fast suggests that we should obtain a good approximation of the asymptotic variance even for moderate values of $\lambda$, but quantifying this optimal lag size may be a laborious task. In the numerical simulations in \cite{olsson:douc:2017}, the algorithm is run multiple times for several distinct values of $\lambda$, whereupon the variance estimates obtained in this manner are compared to that obtained using the naive estimator in order to determine the empirically best lag. This method is not ideal as it requires extensive prefatory computations and does not take into account the possibility of varying the lag as the particles evolve. Instead, it is desirable to keep the lag as large as possible as long as the estimator is of good quality (in some sense) and decrease it whenever we detect some degeneracy, determined by the depletion of the Enoch indices. This argument will be developed further in the next section,  leading to the design of a fully adaptive approach.

\section{Main results}\label{sec:results}
\subsection{The \namealgo estimator}
We first need to identify a criterion to determine an optimal lag at a given iteration $n$. We have previously discussed the bias--variance tradeoff, which usually arises when the objective is to minimise the \emph{mean-squared error} (MSE) of an estimator with respect to the estimand of interest. For every $n \in\nset$, $\func{n}\in\bmf{\alg{X}_n}$, and $\lambda\in\nset$, the MSE of the estimator \eqref{eq:OD} can be written as the sum of its variance and its squared bias according to 
\begin{equation}\label{eq:MSE}
\E\left[\left(\asvarp{n,\lambda}(\func{n})-\asvar{n}(\func{n})\right)^2\right]=\var\left(\asvarp{n,\lambda}(\func{n})\right)+\left(\E\left[\asvarp{n,\lambda}(\func{n})\right]-\asvar{n}(\func{n})\right)^2.
\end{equation}
Our intention is to design a routine for adapting the lag $\lambda$ in such a way that \eqref{eq:MSE} is minimised. Even if we do not have closed-form expressions of the expectation and the variance of the lag-based estimator in \eqref{eq:MSE}, we may make the following considerations. 
\begin{itemize}
	\item Since $\asvarp{n,\lambda}(\func{n})$ tends to $\asvar{n\langle\lambda\rangle,n}(\func{n})$ in probability as the number $N$ of particles tends to infinity, where $\asvar{n\langle\lambda\rangle,n}(\func{n})\le\asvar{n}(\func{n})$ for all $\lambda$, and the difference decreases as $\lambda$ approaches $n$, we may assume that even the non-asymptotic bias is reduced when $\lambda$ increases.
	\item On the other hand, the larger the value of $\lambda$ is, the fewer distinct elements has the set $(\enoch{n\langle\lambda\rangle,n}{i})_{i=1}^N$ of Enoch indices, 	causing an increase of the variance of the estimator \eqref{eq:OD}; see Figure~\ref{fig:boxplot}.
\end{itemize}
The reduction of the number of distinct Enoch indices may be tolerated until an increase of the lag is beneficial for the reduction of the bias, but at some point the behavior becomes pathological. Imagine, for instance, that we use the \CLE in the early iterations of the particle filter for estimating the variance; then, at some time $n$, one realises that there exists some $\lambda \in\intvect{0}{n-1}$ for which $\asvarp{n,\lambda}(\func{n}) > \asvarp{n}(\func{n})$, although their asymptotic values are supposed to be in the opposite order and the lag-based estimator is expected to be less variable. This suggests that the Eve indices might be depleted and not reliable anymore for supporting the variance estimator. It is then reasonable to assume that these will be unreliable also in the subsequent steps, since their degeneracy can only get worse. Extending this idea to the Enoch indices, we may define recursively the concept of \emph{depleted Enoch indices}.

\begin{definition}\label{def:depl_enoch}
	Let $(\func{n})_{n\in\nset}$ be a given a sequence of functions such that $\func{n}\in\bmf{\alg{X}_n}$ for all $n$. The Enoch indices $(\enoch{m,n}{i})_{i=1}^N$ are said to be \emph{depleted} if at least one of the following conditions is satisfied:
	\begin{itemize}
		\item[(i)] the Enoch indices $(\enoch{m,n-1}{i})_{i=1}^N$ are depleted;
		\item[(ii)] the Enoch indices $(\enoch{m-1,n}{i})_{i=1}^N$ are depleted and, letting $\lambda:=n-m$, there exists $\lambda'\in\intvect{0}{\lambda-1}$ such that $ \asvarp{n,\lambda}(\func{n})<\asvarp{n,\lambda'}(\func{n}) $.
	\end{itemize}
	By convention, for every $n\in\nset$, the Enoch indices $(\enoch{n,n}{i})_{i=1}^N$ are never depleted, while $(\enoch{-1,n}{i})_{i=1}^N$ are always depleted (even if these are not explicitly defined).
\end{definition} 

\update{In order to check the depletion status of some indices $(\enoch{m,n}{i})_{i=1}^N$ using  Definition~\ref{def:depl_enoch} we need to know the status of previous generations. Thus, in practice, depletion may be determined iteratively forwards in time, starting from $(\enoch{0,0}{i})_{i=1}^N$, which are not depleted by definition. Then for every $n\in\nsetpos$, knowing whether the indices $(\enoch{m,n-1}{i})_{i=1}^N$ are depleted or not for all $m\in\intvect{0}{n-1}$, it is possible to check the same for $(\enoch{m,n}{i})_{i=1}^N$ starting from $m=0$ and proceeding forwards to $m = n$. This is done by checking first condition (i) in Definition~\ref{def:depl_enoch}; if this is not satisfied, then we check condition (ii). The idea behind condition (i) is that if a set $(\enoch{m,n-1}{i})_{i=1}^N$ of Enoch indices is ill-suited to estimate the variance at some time $n-1$, it will not be suited to estimate the variance at any future time, since the number of distinct elements in the set can only decrease with $n$. Regarding condition (ii), if instead the indices $(\enoch{m,n-1}{i})_{i=1}^N$ are non-depleted, we still need to check if there is a more recent generation $(\enoch{m',n}{i})_{i=1}^N$, $m' \in \intvect{m + 1}{n}$, that produces a better estimate. The additional requirement of  $(\enoch{m-1,n}{i})_{i=1}^N$ being depleted serves to guarantee monotonicity, \ie, if $(\enoch{m,n}{i})_{i=1}^N$ are depleted, then $(\enoch{m',n}{i})_{i=1}^N$ should be as well for all $m'\in\intvect{0}{m}$, whereas if $(\enoch{m,n}{i})_{i=1}^N$ are non-depleted, then $(\enoch{m',n}{i})_{i=1}^N$ should not be either for all $m'\in\intvect{m}{n}$.}
\begin{figure}[htb]
	\centering
	\includegraphics[width=\linewidth]{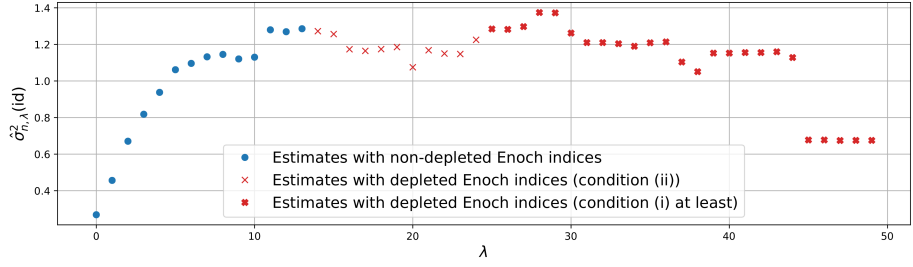}
	\caption{The points in the plot correspond to fixed-lag estimates of $\asvar{n}(\mathrm{id})$\update{, $n$ = 500,} for the stochastic volatility model in Section~\ref{sec:stoc_vol}, with different values of $\lambda$. \update{The particle filter used $N=1000$ particles}. Each estimate is marked differently depending whether the corresponding Enoch indices $(\enoch{n\langle\lambda\rangle,n}{i})_{i=1}^N$ are depleted or not. In addition, we indicate which of the two conditions in Definition~\ref{def:depl_enoch} that indicates non-depletion. For $\lambda \geq 25$, condition (i) was fulfilled at least, while for $\lambda\in\intvect{14}{24}$ condition (ii) (and not (i)) was fulfilled.
	}
	\label{fig:depleted_enoch}
\end{figure}

Algorithm~\ref{algo:adaptive} describes our method, the \emph{adaptive-lag variance} (\namealgo) \emph{estimator}, in which the optimal lag at each iteration is, as established by Theorem~\ref{thm:lambda_largest} below, chosen such that it is the largest one for which the corresponding Enoch indices are not depleted. This non-depletion condition is ensured by selecting recursively $\lambda_{n+1}$ in such a way that it produces the largest estimate, whose selection is bounded from above by $\lambda_n+1$. The lag is initialised by setting $\lambda_0\gets0$.
\begin{algorithm}[htb]
	\caption{\namealgo estimator}\label{algo:adaptive}
	\begin{algorithmic}[1]
		\Statex \textbf{Data}: $(\epart{n}{i},\wgt{n}{i})_{i=1}^N$, $\lambda_n$, $(\enoch{n\langle\lambda_n\rangle,n}{i},\dots, \enoch{n,n}{i})_{i=1}^N$
		\Statex \textbf{Result}: $(\epart{n+1}{i},\wgt{n+1}{i})_{i=1}^N$, $\lambda_{n+1}$, $(\enoch{(n+1)\langle\lambda_{n+1}\rangle,n+1}{i},\dots, \enoch{n+1,n+1}{i})_{i=1}^N$
		\State run $(\epart{n+1}{i},\wgt{n+1}{i},\I{n+1}{i})_{i=1}^N\leftarrow\PF((\epart{n}{i},\wgt{n}{i})_{i=1}^N)$;
		\For{$i=1\rightarrow N$}
		\For{$ m=(n+1)\langle\lambda_n+1\rangle\rightarrow n $}
		\State set $\enoch{m,n+1}{i}\leftarrow \enoch{m,n}{\I{n+1}{i}}$;
		\EndFor
		\State set $\enoch{n+1,n+1}{i}\leftarrow i$;
		\EndFor
		\State set $\displaystyle\lambda_{n+1}\gets\argmax_{\lambda\in\intvect{0}{\lambda_n+1}}\asvarp{n+1,\lambda}(\func{n+1})$; \Comment{each estimator computed according to \eqref{eq:OD}};
	\end{algorithmic}
\end{algorithm}

% \begin{theorem}\label{thm:lambda_largest}
%     For every $n \in \nset$, let $\lambda_{n}$ be the lag produced by $n$ iterations of Algorithm~\ref{algo:adaptive}. Then if $\lambda_n<n$, the Enoch indices $(\enoch{m,n}{i})_{i=1}^N$ are not depleted for every $m\in\intvect{n\langle\lambda_n\rangle}{n}$ and depleted for every $m\in\intvect{0}{n\langle\lambda_n\rangle-1}$. 
% \end{theorem}

\begin{theorem}\label{thm:lambda_largest}
    \update{For every $n \in \nset$, let $\lambda_{n}$ be the lag produced by $n$ iterations of Algorithm~\ref{algo:adaptive}. Then if $\lambda_n<n$, none of the Enoch indices $(\enoch{m,n}{i})_{i=1}^N$, $m\in\intvect{n\langle\lambda_n\rangle}{n}$, are depleted whereas all the Enoch indices $(\enoch{m,n}{i})_{i=1}^N$, $m\in\intvect{0}{n\langle\lambda_n\rangle-1}$, are depleted.}  
\end{theorem}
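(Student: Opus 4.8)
The plan is to replace the two-sided statement by a single quantity and reduce everything to one induction in $n$. For each $n$ set
$M_n \eqdef \min\{m\in\intvect{0}{n} : (\enoch{m,n}{i})_{i=1}^N \text{ is non-depleted}\}$,
which is well defined because $(\enoch{n,n}{i})_{i=1}^N$ is never depleted. I would show that $M_n = n\langle\lambda_n\rangle$; granting this together with the monotonicity property below, the theorem is immediate, since when $\lambda_n<n$ one has $n\langle\lambda_n\rangle = n-\lambda_n$, so $(\enoch{m,n}{i})_{i=1}^N$ is non-depleted exactly for $m\in\intvect{n-\lambda_n}{n}$ and depleted exactly for $m\in\intvect{0}{n-\lambda_n-1}$.

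\textbf{Monotonicity.} First I would prove a monotonicity lemma: for every $n$ and $m$, if $(\enoch{m,n}{i})_{i=1}^N$ is depleted then so is $(\enoch{m-1,n}{i})_{i=1}^N$. This goes by induction on $n$, splitting on the clause of Definition~\ref{def:depl_enoch} responsible for the depletion of $(\enoch{m,n}{i})_{i=1}^N$. If clause (ii) applies, then $(\enoch{m-1,n}{i})_{i=1}^N$ is depleted by the very hypothesis of that clause. If clause (i) applies, i.e. $(\enoch{m,n-1}{i})_{i=1}^N$ is depleted, then the induction hypothesis gives that $(\enoch{m-1,n-1}{i})_{i=1}^N$ is depleted, whence clause (i) forces $(\enoch{m-1,n}{i})_{i=1}^N$ to be depleted as well; the base case $n=0$ and the boundary $m=0$ (using that $(\enoch{-1,n}{i})_{i=1}^N$ is always depleted by convention) are trivial. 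Together with the definition of $M_n$, this lemma yields the frontier picture: at fixed $n$, $(\enoch{m,n}{i})_{i=1}^N$ is depleted for $m<M_n$ and non-depleted for $m\ge M_n$.

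\textbf{Main induction.} It then remains to prove $M_n = n\langle\lambda_n\rangle$ by induction on $n$, the case $n=0$ being clear since $\lambda_0 = 0$ and $M_0 = 0$. For the step, assume $M_n = n\langle\lambda_n\rangle$ and abbreviate $a(\lambda)\eqdef\asvarp{n+1,\lambda}(\func{n+1})$, the generation associated with lag $\lambda$ at time $n+1$ being $m = n+1-\lambda$, with $\lambda$ ranging over $\intvect{0}{\lambda_n+1}$. For $m<M_n$ the frontier picture at time $n$ makes clause (i) fire, so these generations are depleted at time $n+1$; for $m\ge M_n$ clause (i) is inactive, so depletion is decided solely by clause (ii). Processing $m$ upward from $M_n$ (equivalently $\lambda$ downward from $\lambda_n+1$), anchored by $(\enoch{M_n-1,n+1}{i})_{i=1}^N$ being depleted (immediate from clause (i) when $M_n\ge 1$, and by convention when $M_n=0$), I would unfold the clause-(ii) recursion and show that the largest lag $\lambda$ for which $(\enoch{n+1-\lambda,n+1}{i})_{i=1}^N$ is non-depleted equals the largest weak running maximum of $\lambda\mapsto a(\lambda)$, that is, the largest lag at which $\max_\lambda a(\lambda)$ is attained. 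This is precisely the lag $\lambda_{n+1}$ returned by the $\argmax$ step of Algorithm~\ref{algo:adaptive}, so $M_{n+1} = (n+1)-\lambda_{n+1} = (n+1)\langle\lambda_{n+1}\rangle$, closing the induction.

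\textbf{Main obstacle.} The crux, and the step I expect to be most delicate, is this last unfolding: turning the recursive, two-clause definition of depletion into the clean running-maximum characterisation and then matching it to the $\argmax$, while handling the boundary anchor correctly and dealing with ties in $a(\lambda)$. Ties occur with probability zero, as the variance estimates are continuous random variables, so the $\argmax$ is almost surely unique; on the null event one must fix the convention that the $\argmax$ returns the largest maximising lag for the identity $M_{n+1}=(n+1)\langle\lambda_{n+1}\rangle$ to hold pathwise. The two supporting inductions, monotonicity and the reduction to $M_n$, are otherwise routine.
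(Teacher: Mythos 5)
Your proof is correct and follows essentially the same route as the paper's: an induction on $n$ in which clause (i) propagates depletion below the frontier $n\langle\lambda_n\rangle$ and the $\argmax$ construction of $\lambda_{n+1}$ forces clause (ii) to fail exactly for lags $0,\dots,\lambda_{n+1}$ and to hold above; your monotonicity lemma and running-maximum unfolding merely make explicit what the paper's induction hypothesis and terse appeal to ``the very construction of $\lambda_{n+1}$'' leave implicit. Your remark about ties is a genuine point of extra care: both arguments need the $\argmax$ on Line~8 to return the \emph{largest} maximising lag on the (null) event of a tie, a convention the paper's proof does not state.
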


\begin{proof}
    We proceed by induction. The claim is true for $n=0$ since we initialise $\lambda_0\gets 0$. Now, let the claim be true for some $n\in\nset$; then by the induction hypothesis and condition (i) of Definition~\ref{def:depl_enoch}, it holds that $(\enoch{m,n+1}{i})_{i=1}^N$ are depleted for every $m \in \intvect{0}{n \langle \lambda_n \rangle-1}$ if $\lambda_n<n$, where  $n \langle \lambda_n \rangle - 1 = (n + 1) \langle \lambda_n+1\rangle-1$. On the other hand, by the induction hypothesis and the very construction of $\lambda_{n + 1}$ in Algorithm~\ref{algo:adaptive}, none of the depletion conditions of Definition~\ref{def:depl_enoch} are satisfied for $m \in \intvect{(n+1) \langle \lambda_{n+1} \rangle}{n+1}$; hence, the corresponding Enoch indices are not depleted. If $\lambda_{n+1} < \lambda_n+1$, then, again by the construction of $\lambda_{n + 1}$, $(\enoch{m,n+1}{i})_{i=1}^N$ are depleted for $m\in\intvect{(n+1)\langle\lambda_n+1\rangle}{(n+1)\langle\lambda_{n+1}\rangle-1}$ as well by condition (ii). This concludes the proof.
\end{proof}

The computation of the estimator \eqref{eq:OD} has complexity $\ordo(N)$ and is performed $\lambda_n+2$ times at each iteration $n$. In order to have an online algorithm with constant memory requirements we need $\lambda_n$ to be uniformly bounded in $n$. Although in theory the lag might increase indefinitely such that $\lambda_n=n$ for all $n\in\nset$, we may assume that there exists an upper bound on the lag for any fixed number  $N$ of particles. In support of this assumption, we know that the expected number of generations to the time where all the Enoch indices are equal, which is certainly larger than any lag selected by the proposed method, is $\ordo(N)$ uniformly in $n$; see \cite{koskela:jenkins:johansen:spano:2020}. Thus, in practice there will generally exist some $\lambda_\mx$, depending on the model and on $N$ but independent of $n$, such that $\lambda_n<\lambda_\mx$ for all $n\in\nset$. Hence, the final algorithm is \emph{online}, since it has both complexity and memory demand (again dominated by the storage of the Enoch indices) of order $\ordo(\lambda_\mx N)$, independently of $n$, and \emph{adaptive}, since the choice of each new lag is adapted to the output of the particle filter as well as the lag of the previous iteration. In the next section we are going to prove consistency of the estimator and present a heuristic argument concerning the dependence of the lag on the number of particles.

\subsection{Theoretical results}
Next, we show that for every $n \in N$, the resulting adaptive-lag estimator constructed in the previous section is asymptotically consistent for the true asymptotic variance $\asvar{n}(\func{n})$, recalling however that the algorithm is meant to work in the regime where $N$ is fixed and $n$ is arbitrarily large. The `asymptotic' algorithm is not online, since we are going to show that for all $n\in\nset$, $\lambda_n$ tends to $n$ in probability as $N$ grows, implying that in the limit we obtain the CLE at each step. Nevertheless, as we will see later, for a fixed number of particles, the range of the lags returned by the algorithm is expected to grow very slowly with $N$; more precisely, 
in Section~\ref{subsubsec:loglag} we argue for that this range increases only logarithmically with $N$, a claim that is also confirmed by our numerical experiments in Section~\ref{sec:num:lag:analysis}.  

\subsubsection{Consistency}

We now establish the consistency of the \namealgo estimator.
\begin{theorem}\label{thm:conv}
	Let Assumption~\ref{assum:1} hold. For every $n\in\nset$ and $\func{m}\in\bmf{\alg{X}_m}$, $m\in\intvect{1}{n}$, let $(\lambda_m)_{m=1}^n$ be the lags produced by $n$ iterations of Algorithm~\ref{algo:adaptive}. Then, as $N\to\infty$, it holds that $\lambda_n\convp n$ and 
	$
	\asvarp{n,\lambda_n}(\func{n})\convp \asvar{n}(\func{n}).
	$
\end{theorem}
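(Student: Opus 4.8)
The plan is to prove the two assertions in sequence, obtaining the consistency of the estimator as a consequence of the lag convergence $\lambda_n \convp n$. I would first reduce the variance statement to the lag statement. Since $\lambda_n$ is supported on the finite set $\intvect{0}{n}$, for any $\epsilon > 0$ one has
\begin{equation}
\prob\bigl(|\asvarp{n,\lambda_n}(\func{n}) - \asvar{n}(\func{n})| > \epsilon\bigr) \le \prob(\lambda_n \ne n) + \prob\bigl(|\asvarp{n,n}(\func{n}) - \asvar{n}(\func{n})| > \epsilon\bigr).
\end{equation}
The first term on the right vanishes once $\lambda_n \convp n$ is established, while the second vanishes by Theorem~\ref{thm:consistency-fixed-lag} with the fixed lag $\lambda = n$: here $n\langle n\rangle = 0$, so the limit is $\asvar{n\langle n\rangle,n}(\func{n}) = \asvar{0,n}(\func{n}) = \asvar{n}(\func{n})$. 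Hence everything reduces to showing $\lambda_n \convp n$.

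I would prove $\lambda_n \convp n$, that is $\prob(\lambda_n = n) \to 1$, by induction on $n$. The recursion $\lambda_0 = 0$ and $\lambda_{m+1} \le \lambda_m + 1$ shows that $\{\lambda_n = n\}$ is exactly the event that the lag is incremented at every one of the first $n$ iterations, so the base case is trivial. For the inductive step I would assume $\prob(\lambda_n = n) \to 1$ and work on that event, where Algorithm~\ref{algo:adaptive} selects $\lambda_{n+1} = \argmax_{\lambda\in\intvect{0}{n+1}} \asvarp{n+1,\lambda}(\func{n+1})$. By Theorem~\ref{thm:consistency-fixed-lag} applied to each of the finitely many lags, the vector $(\asvarp{n+1,\lambda}(\func{n+1}))_{\lambda=0}^{n+1}$ converges in probability to $(\asvar{(n+1)\langle\lambda\rangle,n+1}(\func{n+1}))_{\lambda=0}^{n+1}$. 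Reading \eqref{eq:asvar}, the map $\ell \mapsto \asvar{\ell,n+1}(\func{n+1})$ is non-increasing, since lowering $\ell$ only appends non-negative summands (plus the leading term at $\ell = 0$); equivalently the limit is non-decreasing in $\lambda$ and attains its maximum at $\lambda = n+1$, with value $\asvar{0,n+1}(\func{n+1}) = \asvar{n+1}(\func{n+1})$.

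If this maximiser is \emph{unique}, the argmax map (with ties resolved towards the larger lag) is locally constant, hence continuous, at the deterministic limit point, so the continuous-mapping theorem gives $\argmax_{\lambda\in\intvect{0}{n+1}} \asvarp{n+1,\lambda}(\func{n+1}) \convp n+1$; combining this with $\prob(\lambda_n = n)\to 1$ through
\begin{equation}
\prob(\lambda_{n+1} = n+1) \ge \prob(\lambda_n = n) - \prob\Bigl(\argmax_{\lambda\in\intvect{0}{n+1}} \asvarp{n+1,\lambda}(\func{n+1}) \ne n+1\Bigr)
\end{equation}
closes the induction. A pleasant simplification is that uniqueness of the maximiser needs only the strict inequality \emph{at the top}, $\asvar{0,n+1}(\func{n+1}) > \asvar{1,n+1}(\func{n+1})$ (monotonicity then automatically yields $\asvar{0,n+1} > \asvar{\ell,n+1}$ for every $\ell \ge 1$, and ties among lower lags are irrelevant). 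By \eqref{eq:asvar} this gap equals the leading contribution $\Xinit(\rn{-1}\{\lk{0}[n](\func{n+1}-\post{n+1}\func{n+1})\}^2)/(\Xinit\lk{0}[n]\1{\set{X}_{n+1}})^2$.

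The hard part is certifying that this leading term is strictly positive at every time. The positivity built into Assumption~\ref{assum:1} yields $\rn{-1} > 0$, so the term can vanish only when $\lk{0}[n](\func{n+1} - \post{n+1}\func{n+1}) = 0$ holds $\Xinit$-almost everywhere, i.e. in the degenerate situation where the $\Xinit$-conditional mean of $\func{n+1}$ carries no dependence on the initial state. I would dispatch this either by imposing a mild non-degeneracy condition on the test functions $(\func{m})$, or by isolating the fully degenerate regime $\asvar{n+1}(\func{n+1}) = 0$, in which every fixed-lag estimate is identically zero and the tie-breaking convention already forces $\lambda_{n+1} = \lambda_n + 1$. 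Everything else is the bookkeeping of the induction together with the appeal to Theorem~\ref{thm:consistency-fixed-lag}; the genuine obstacle is the strict monotonicity of the truncated asymptotic variance in the lag.
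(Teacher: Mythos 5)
Your proof follows the same route as the paper's: reduce the consistency of $\asvarp{n,\lambda_n}(\func{n})$ to the event $\{\lambda_n=n\}$ via a union bound together with Theorem~\ref{thm:consistency-fixed-lag} at the full lag $\lambda=n$, then establish $\prob(\lambda_n=n)\to 1$ by induction using $\lambda_{n+1}\le\lambda_n+1$ and the convergence of each fixed-lag estimator to its truncated limit. The one point of divergence is instructive. The paper passes from the non-strict ordering $\asvar{n\langle\lambda\rangle,n}(\func{n})\le\asvar{n}(\func{n})$ directly to $\prob\bigl(\bigcap_{\lambda=0}^{n-1}\{\asvarp{n,n}(\func{n})\ge\asvarp{n,\lambda}(\func{n})\}\bigr)\to 1$; as you correctly observe, that implication is only valid when the inequality between the limits is strict, which by the monotonicity of $\ell\mapsto\asvar{\ell,n}(\func{n})$ reduces to strict positivity of the leading ($\ell=0$) term of \eqref{eq:asvar}. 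So you have identified an implicit assumption in the published argument rather than created a new difficulty, and up to that point your argument is correct and essentially identical to the paper's.

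However, your proposal does not close the gap you found. Imposing a non-degeneracy condition on $(\func{m})_m$ changes the statement of the theorem, and the alternative of isolating the fully degenerate regime $\asvar{n}(\func{n})=0$ does not cover the intermediate case in which the leading term of \eqref{eq:asvar} vanishes (e.g.\ because $\lk{0}$ does not depend on the initial state, so that $\lk{0}[n-1](\func{n}-\post{n}\func{n})\equiv 0$) while later terms do not: there the fixed-lag estimates are not identically zero, the limiting maximiser is attained at both $\lambda=n$ and $\lambda=n-1$, and $\prob(\asvarp{n,n}(\func{n})\ge\asvarp{n,n-1}(\func{n}))$ need not tend to one, since the difference of two estimators converging in probability to the same constant has no preferred sign. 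If you want a complete proof you must either establish this strict inequality under Assumption~\ref{assum:1} alone (which is not possible in general) or state the required non-degeneracy explicitly; the paper's own proof silently assumes it.
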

\begin{proof}
We proceed by induction, assuming that the claim holds true for $n-1$. For every $\varepsilon>0$, it holds that 
\begin{multline}
\prob(|\asvarp{n,\lambda_n}(\func{n})-\asvar{n}(\func{n})|\geq 2\varepsilon)\\
\leq \prob(|\asvarp{n,\lambda_n}(\func{n})-\asvarp{n}(\func{n})|\geq \varepsilon)+\prob(|\asvarp{n}(\func{n})-\asvar{n}(\func{n})| \geq \varepsilon), \label{eq:decomp-1}
\end{multline}
where $\asvarp{n}(\func{n})$ is the CLE defined in \eqref{eq:CLE} and based on the same particle system.  The second term on the right-hand side converges to zero as $N \to \infty$, since $ \asvarp{n}(\func{n})=\asvarp{n,n}(\func{n}) $ is consistent for $ \asvar{n}(\func{n})$ by Theorem~\ref{thm:consistency-fixed-lag}. To treat the first term, write 
\begin{equation} \label{eq:decomp-2}
\prob(|\asvarp{n,\lambda_n}(\func{n})-\asvarp{n}(\func{n})| \geq \varepsilon)\le\prob(\asvarp{n,\lambda_n}(\func{n})\ne\asvarp{n}(\func{n})).
\end{equation}
Since $\asvarp{n}(\func{n})=\asvarp{n,n}(\func{n})$, it holds necessarily that $\lambda_n\ne n$ on the event $\asvarp{n,\lambda_n}(\func{n})\ne\asvarp{n}(\func{n})$; thus,
\begin{equation} \label{eq:decomp-3}	
\prob(\asvarp{n,\lambda_n}(\func{n})\ne\asvarp{n}(\func{n}))\le 1-\prob(\lambda_n=n).
\end{equation}
% To treat the probability  $\prob(\lambda_n=n)$ we may write 
% \begin{multline}
% 	\prob(\lambda_n=n) = \prob(\lambda_n=n\mid\lambda_{n-1}={n-1} )\prob(\lambda_{n-1}={n-1})\\+\prob(\lambda_n=n\mid\lambda_{n-1}<{n-1} )\prob(\lambda_{n-1}<{n-1}),
% \end{multline}
% where the conditional probability in the second term is zero since $\lambda_n\le \lambda_{n-1}+1$ by construction. By the induction hypothesis,  $\prob(\lambda_{n-1}={n-1})\to 1 $ as $N\to \infty$, and conditioning on $\lambda_{n-1}={n-1}$ does not constrain $\lambda_n$, thus 
% \begin{equation}\label{eq:prob_conv_n}	
% \prob(\lambda_n=n\mid\lambda_{n-1}={n-1} )=\prob(\asvarp{n,n}(\func{n})\ge \asvarp{n,\lambda}(\func{n})\text{ for all }0\le\lambda< n).
% \end{equation}
To treat the probability  $\prob(\lambda_n=n)$ we may write 
\begin{align}
	\prob(\lambda_n=n) &= \prob(\lambda_n=n, \lambda_{n-1}={n-1} )+\prob(\lambda_n=n, \lambda_{n-1}<{n-1}) \label{eq:lambda_n:decomp} \\
	&=\prob(\lambda_n=n, \lambda_{n-1}={n-1}), \label{eq:lambda_n:decomp-1}
\end{align}
where the second term of \eqref{eq:lambda_n:decomp} is zero since $\lambda_n\le \lambda_{n-1}+1$ by construction. Now, 
\begin{equation} \label{eq:lambda_n:decomp:alt}
\prob(\lambda_n=n, \lambda_{n-1}={n-1}) = \prob \left( \{\lambda_{n-1}={n-1}\} \bigcap_{\lambda = 0}^{n - 1} \{\asvarp{n,n}(\func{n})\geq \asvarp{n,\lambda}(\func{n})\}  \right), 
\end{equation}
where, by the induction hypothesis,  $\prob(\lambda_{n-1}={n-1})\to 1$ as $N\to \infty$. Moreover, by Theorem~\ref{thm:consistency-fixed-lag}, it holds that $\asvarp{n,\lambda}(\func{n})\convp\asvar{n\langle\lambda\rangle,n}(\func{n})$ and $\asvarp{n,n}(\func{n})\convp\asvar{n}(\func{n})$ as $N\to\infty$, where $\asvar{n\langle\lambda\rangle,n}(\func{n})\le\asvar{n}(\func{n})$ for all $\lambda \in \intvect{0}{n - 1}$, implying that  \eqref{eq:lambda_n:decomp:alt} converges to one as $N\to\infty$. Hence, $\lambda_n\convp n$ and combining this with \eqref{eq:lambda_n:decomp-1}  \eqref{eq:decomp-3}, \eqref{eq:decomp-2}, and  \eqref{eq:decomp-1} 
that $ \asvarp{n,\lambda_n}(\func{n})\convp \asvar{n}(\func{n})$. Finally, the base case holds trivially true since $\lambda_0=0$ and $
\asvarp{0,0}(\func{0})\convp \asvar{0}(\func{0})
$ for all $\func{0}\in\bmf{\alg{X}_0}$.
\end{proof}

\subsubsection{Heuristics on the dependence of the lag on the number of particles}\label{subsubsec:loglag}

% In the previous section we have shown that $\lambda_n$ tends to $n$ in probability as the number of particles $N$ grows, and we hence expect $\lambda_n$ to increase with $N$. 
In the light of Theorem~\ref{thm:consistency-fixed-lag}, we expect $\lambda_n$ to increase with $N$. It is however crucial to understand how the values of the lags $(\lambda_n)_{n\in\nset}$ depend on $N$, since this will determine the performance and memory requirement of our algorithm. For instance, a linear dependence would imply a quadratic complexity, which is not desirable. In the rest of this section we provide a heuristic showing that if we minimise the MSE \eqref{eq:MSE}, then we may expect $\lambda_n$ to be $\ordo(\log N)$ for all $n\in\nset$. 

If we approximate $\E[\asvarp{n,\lambda}(\func{n})]$ in \eqref{eq:MSE} by the asymptotic limit $\asvar{n\langle\lambda\rangle,n}(\func{n})$, then the second term on the right-hand side is approximately the square of the asymptotic bias,  $%\bias{n,\lambda}(\func{n}):=
(\asvar{n}(\func{n})-\asvar{n\langle\lambda\rangle,n}(\func{n}))^2$. In \cite{olsson:douc:2017} it is shown that under mild assumptions, the asymptotic bias is $\ordo(\rho^\lambda)$ for some mixing rate $\rho\in(0,1)$. Regarding the variance of $\asvarp{n,\lambda}(\func{n})$, we know that it increases with the lag and therefore as the number of distinct Enoch indices decreases.
Since the variance of a Monte Carlo estimator is generally inversely proportional to the Monte Carlo sample size, we may expect  $\var(\asvarp{n,\lambda}(\func{n}))$ to be $\ordo(1/N_\lambda)$, where $N_\lambda$
%\eqdef |(\enoch{n\langle\lambda\rangle,n}{i})_{i=1}^N|$ 
is the number of distinct Enoch indices $(\enoch{n\langle\lambda\rangle,n}{i})_{i=1}^N$ at generation $n\langle\lambda\rangle$. Now, by adopting the proof of Corollary~2 in \cite{koskela:jenkins:johansen:spano:2020}, we may argue that under standard mixing assumptions which can be are relaxed in practice, $ N_\lambda$ is $\ordo(N/\lambda)$.\footnote{In order to clarify this reasoning, we redefine $\lambda$ as the random variable such that $n\langle\lambda\rangle$ is the largest time before $n$ for which there are $N' < N$ distinct Enoch indices. Now, in \emph{Kingman’s $ n $-coalescent model}, the expected (continuous) time needed to reach $N'$ distinct ancestors of $N$ offspring is $2/N'-2/N$. If we plug this value (instead of the time to the most recent common ancestor) into the proof of the mentioned corollary, we obtain that $\E[\lambda]$ is $\ordo(N/N')$. Thus, letting $N'=N_\lambda$ we may conclude that $N_\lambda$ is $\ordo(N/\E[\lambda])$.} %$\var(\asvarp{n,\lambda}(\func{n}))$ is $\ordo(\lambda/N)$. 
Finally, we determine the order of the optimal lag $\lambda^*$ by letting it be the minimum of the resulting crude approximation 
$$
\lambda \mapsto c \frac{\lambda}{N}+ c' \rho^{2\lambda}
$$
of the MSE \eqref{eq:MSE} as a function of $\lambda$,  
% $$
% \argmin_{\lambda}\left\{c_1\frac{\lambda}{N}+c_2\rho^{2\lambda}\right\}
% $$
% \begin{equation}
% \lambda^*=\ordo\left(\argmin_{\lambda}\left\{c_1\frac{\lambda}{N}+c_2\rho^{2\lambda}\right\}\right),
% \end{equation}
where $c > 0$ and $c' > 0$ are constants independent of $\lambda$ and $N$. It is then easily seen that $\lambda^\ast$ is  
$$
\ordo\left(\frac{1}{2}\log_\rho\left(-\frac{c}{2 c'  N\log\rho}\right)\right)=\ordo(\log_{1/\rho}N)=\ordo(\log N).
$$ 
Although this argument is heuristic, we will see later that it is well supported by our numerical simulations, in which the lags produced are very close the ones minimising the MSE, with a logarithmic dependence on $N$.

\subsection{Extension to particle filters with adaptive resampling}
\label{sec:extension:adaptive:resampling}
We now consider the case in which selection is not necessarily performed at each iteration. Selection is essential in particle filters, as it copes with the well-known importance-weight degeneracy phenomenon (see, \eg,  \cite[Section~7.3]{cappe:moulines:ryden:2005}); however, since resampling adds variance to the estimator, this operation should not be used unnecessarily. A common procedure is hence to resample only when flagged by some weight-degeneracy criterion. One popular such criterion among others is the \emph{effective sample size} (ESS)  \cite{liu:1996} defined by $\ess_n^N \eqdef 1 / \sum_{i=1}^{\N}(\wgt{n}{i}/\wgtsum{n})^2$, which gives an approximation of the number of active particles, \ie, particles with non-degenerated importance weight at time $n$. The ESS is minimal and equal to one when all the weights are equal to zero except one and maximal and equal to $N$ when all weights are non-zero and equal. Using the ESS, one may, \eg, let the resampling operation be triggered only when $\ess_n^N\le \alpha N$, where $\alpha\in(0,1)$ is a design parameter. More generally, we may let $(\res[\N]{n})_{n\in\nset}$ be a sequence of binary-valued random variables indicating whether resampling should be triggered or not. The sequence $(\res[\N]{n})_{n \in \nset}$ is assumed to be adapted to the filtration $(\partfilt{n})_{n \in \nset}$ generated by the particle filter, where $\partfilt{n} \eqdef \sigma((\epart{0}{i})_{i=1}^\N, (\epart{m}{i}, \I{m}{i})_{i=1}^\N, m \in \intvect{1}{n})$. Thus, these indicators may be based on the ESS, letting $\res[\N]{n}=\1{\{ \ess_n^N<\alpha\N\}}$, but also on $n$ only, implying a deterministic selection schedule. Algorithm \ref{algo:adaptivePF} shows one iteration of this adaptive procedure, which we later express in the compact form $(\epart{n+1}{i},\wgt{n+1}{i},\I{n+1}{i})_{i=1}^N\leftarrow\adpf((\epart{n}{i},\wgt{n}{i})_{i=1}^N,\res[N]{n})$. 
\begin{algorithm}[htb]
	\caption{APF with adaptive resampling}\label{algo:adaptivePF}
	\begin{algorithmic}[1]
		\Statex \textbf{Data}: $\{(\epart{n}{i},\wgt{n}{i})\}_{i=1}^N$, $\res[N]{n}$
		\Statex \textbf{Result}: $\{(\epart{n+1}{i},\wgt{n+1}{i},\I{n+1}{i})\}_{i=1}^N$
		\For{$i=1\rightarrow N$}
		\If{$\res[N]{n}=1$}
		\State draw $\I{n+1}{i}\sim \catdist((\wgt{n}{\ell}\am{n}(\epart{n}{\ell}))_{\ell=1}^\N)$;
		\Else
		\State set $\I{n+1}{i}\gets i$;
		\EndIf
		\State draw $\epart{n+1}{i}\sim \propker{n}(\epart{n}{\I{n+1}{i}},\cdot)$;
		\State set $\wgt{n+1}{i}\leftarrow (\wgt{n}{i})^{1-\res[N]{n}} \rn{n}(\epart{n}{\I{n+1}{i}}, \epart{n+1}{i})/(\am{n}(\epart{n}{\I{n+1}{i}}))^{\res[N]{n}}$;
		\EndFor
	\end{algorithmic}
\end{algorithm}
As described in the following, particle filters with adaptive resampling still satisfy a CLT, with asymptotic variance having a structure similar to that of \eqref{eq:asvar} but depending also on an ``asymptotic" resampling schedule to be defined next. 
\begin{assumption}\label{assum:adaptiveRes}
For given $\alpha\in(0,1)$, let $ (\res[\N]{n})_{n\in\nset} $ be defined as 
\begin{equation}\label{eq:rhoN}
	\res[\N]{n} \eqdef \1{\{\ess_n^N<\alpha\N\}}.
\end{equation}
\end{assumption}
The following lemma is adopted from \citep[Lemma 3.5]{mastrototaro:olsson:alenlov:2021} (with $d=\infty$).
%Then we may introduce the following lemma \citep[Lemma 3.5, letting $d=\infty$]{mastrototaro:olsson:alenlov:2021}.
\begin{lemma}\label{lemma:ess}
	Let Assumption \ref{assum:adaptiveRes} hold in Algorithm \ref{algo:adaptivePF}. Then for every $n\in\nset$  there exists $\res[\alpha]{n}\in\{0,1\}$ such that, as $\N\to\infty$,
	\begin{equation}
		\res[\N]{n}\convp \res[\alpha]{n}.
	\end{equation}
\end{lemma}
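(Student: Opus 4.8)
The plan is to argue by induction on $n$, combining a law of large numbers for the weighted particle sample with the continuous mapping theorem. First I would rewrite the normalised effective sample size as a ratio of empirical weight moments,
\begin{equation}
\frac{\ess_n^N}{\N} = \frac{\bigl(\N^{-1}\sum_{i=1}^\N \wgt{n}{i}\bigr)^2}{\N^{-1}\sum_{i=1}^\N (\wgt{n}{i})^2},
\end{equation}
so that $\res[\N]{n} = \1{\{\ess_n^N/\N < \alpha\}}$ becomes a measurable functional of the two averages appearing in the numerator and denominator. The goal is then to show that this ratio converges in probability to a deterministic limit $\ell_n \in (0,1]$, whence the indicator itself converges as soon as $\ell_n \neq \alpha$.

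The difficulty is that the law of the particle system at time $n$ depends on the \emph{random} resampling decisions $\res[\N]{0}, \dots, \res[\N]{n-1}$ taken at earlier steps, so Theorem~\ref{prop:clt} and its companion law of large numbers cannot be invoked verbatim. To circumvent this I would proceed inductively, assuming that for every $m \in \intvect{0}{n-1}$ there is a deterministic $\res[\alpha]{m} \in \{0,1\}$ with $\res[\N]{m} \convp \res[\alpha]{m}$. Since each $\res[\N]{m}$ is $\{0,1\}$-valued, this is equivalent to $\prob(\res[\N]{m} = \res[\alpha]{m}) \to 1$, and hence the event $A_\N \eqdef \bigcap_{m=0}^{n-1}\{\res[\N]{m} = \res[\alpha]{m}\}$ satisfies $\prob(A_\N) \to 1$. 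On $A_\N$ the adaptively resampled filter coincides pathwise with the particle filter run under the \emph{fixed}, deterministic schedule $(\res[\alpha]{m})_{m=0}^{n-1}$, for which the standard consistency theory of Theorem~\ref{prop:clt} applies.

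For the fixed-schedule filter, the law of large numbers gives that both $\N^{-1}\sum_{i=1}^\N \wgt{n}{i}$ and $\N^{-1}\sum_{i=1}^\N (\wgt{n}{i})^2$ converge in probability to deterministic, strictly positive constants: each is a weighted particle average of a function that is bounded by Assumption~\ref{assum:bounded} (the constant $1$ and the weight function $\rn{n}/\am{n}$ itself, respectively), and positivity of the denominator limit guarantees that the ratio is well defined in the limit. Transferring this convergence back to the adaptive filter through the overwhelming-probability event $A_\N$ yields $\ess_n^N/\N \convp \ell_n$ for a deterministic $\ell_n$. Provided $\ell_n \neq \alpha$, the map $x \mapsto \1{\{x < \alpha\}}$ is continuous at $x = \ell_n$, so the continuous mapping theorem delivers $\res[\N]{n} \convp \1{\{\ell_n < \alpha\}} \eqdef \res[\alpha]{n} \in \{0,1\}$, which closes the induction; the base case $n = 0$ is immediate, since the initial weights are i.i.d.\ and the same law of large numbers applies directly without any prior schedule to condition on.

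The main obstacle is twofold. The genuine technical point is the transfer between the random and the fixed schedule: one must check that intersecting with $A_\N$ does not perturb the limits, which is clean here because $\prob(A_\N) \to 1$ and the averages involved are bounded, so convergence in probability is preserved under the restriction. The remaining delicate case is the boundary $\ell_n = \alpha$, where $\1{\{\ess_n^N/\N < \alpha\}}$ need not converge at all; the statement is to be understood for $\alpha$ avoiding the (at most countable) set of critical values $\{\ell_n : n \in \nset\}$, and I would either impose this genericity explicitly or inherit it from the corresponding hypothesis in \cite{mastrototaro:olsson:alenlov:2021}.
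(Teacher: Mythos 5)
Your argument is correct and is essentially the expected proof of this result: the paper itself offers no proof, merely adopting the lemma from \cite[Lemma~3.5]{mastrototaro:olsson:alenlov:2021}, and the argument there amounts to precisely your combination of a law of large numbers for the two empirical weight moments (noting that $(\wgt{n}{i})^2$ must be treated as a bounded function on the extended path space, since $\wgt{n}{i}$ depends on the particle and its parent), a coupling with the fixed-schedule filter on the high-probability event $A_\N$ built from the induction hypothesis, and the continuous mapping theorem applied to $x \mapsto \1{\{x < \alpha\}}$. Your observation that the indicator need not converge at the critical value $\ell_n = \alpha$ is a genuine caveat that the lemma statement silently inherits from the cited reference, and flagging the need for a genericity condition on $\alpha$ is appropriate rather than a gap in your argument.
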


We now have the following CLT for adaptive APFs, whose proof is found in Section~\ref{sec:proof:lemma:CLTadaptive}. 

\begin{theorem}\label{lemma:CLTadaptive}
Let Assumption~\ref{assum:bounded} hold. Let $(\epart{n}{i},\wgt{n}{i})_{i=1}^\N$ be generated by $n$ iterations of Algorithm \ref{algo:adaptivePF} according to a selection schedule $ (\res[\N]{n})_{n\in\nset} $ satisfying Assumption \ref{assum:adaptiveRes}. Then for every $n\in\nset$ and $\func{n}\in\bmf{\alg{X}_n}$, as $N\to \infty$,
\begin{equation}\label{eq:CLTadaptive}
\sqrt{N}(\post[N]{n}\func{n}-\post{n}\func{n})\convd \sigma_n\langle\res[\alpha]{0:n-1}\rangle(\func{n})Z,
\end{equation}
%where $Z$ has standard Gaussian distribution and $\asvar{n}\langle\res[\alpha]{0:n-1}\rangle(\func{n})$ is the asymptotic variance, depending on $\alpha$, whose expression, similar to \eqref{eq:truncVarExt}, is explained in detail in Appendix~\ref{sec:appendix2}.
where $Z$ is standard normally distributed random variable and the asymptotic variance $\asvar{n}\langle\res[\alpha]{0:n-1}\rangle(\func{n})$, depending on $\alpha$, is given in detail in Appendix~\ref{sec:appendix2}.
\end{theorem}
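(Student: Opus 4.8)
The plan is to reduce the adaptive (random-schedule) setting to a fixed (deterministic-schedule) one via the convergence in Lemma~\ref{lemma:ess}, and then to transport the resulting limit through a coupling-plus-Slutsky argument. Accordingly the proof splits into three parts: a CLT for a frozen schedule, joint convergence of the whole schedule, and a coupling that identifies the two estimators on a high-probability event.

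First I would establish a CLT for Algorithm~\ref{algo:adaptivePF} run under an \emph{arbitrary but deterministic} selection schedule $r_{0:n-1} \in \{0,1\}^n$. For a fixed schedule the algorithm is simply an APF in which the steps with $r_m = 0$ carry the weights forward (with $\I{m+1}{i} \gets i$ and a weight update $\wgt{m+1}{i} \gets \wgt{m}{i}\,\rn{m}(\epart{m}{i},\epart{m+1}{i})$) instead of resampling. Grouping each maximal block of consecutive non-resampling steps into a single composed proposal-and-weighting step turns the procedure into a \emph{standard} APF---one that resamples at every (grouped) step---on a coarsened time grid with composed proposal kernels and composed Radon--Nikodym weights. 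Theorem~\ref{prop:clt} (equivalently the general result \cite[Theorem~B.6]{mastrototaro:olsson:alenlov:2021}) then applies on this coarsened grid and yields \eqref{eq:CLTadaptive} with a deterministic-schedule standard deviation $\sigma_n\langle r_{0:n-1}\rangle(\func{n})$, whose closed form I would obtain by unfolding the composed kernels; this bookkeeping is exactly what is deferred to Appendix~\ref{sec:appendix2}, where one also checks that it reduces to \eqref{eq:asvar} when $r_m \equiv 1$.

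Next I would upgrade the per-time convergence of Lemma~\ref{lemma:ess} to convergence of the entire schedule. Since $\res[\N]{m}\convp\res[\alpha]{m}$ and both are $\{0,1\}$-valued, a union bound over the finitely many times $m \in \intvect{0}{n-1}$ gives
\[
\prob\!\left(\res[\N]{0:n-1}\neq\res[\alpha]{0:n-1}\right)\le\sum_{m=0}^{n-1}\prob\!\left(\res[\N]{m}\neq\res[\alpha]{m}\right)\longrightarrow 0,
\]
so the adaptive schedule equals its deterministic limit with probability tending to one. I would then couple the adaptive run with the fixed-schedule run by realising both on the same probability space, using identical proposal draws and resampling uniforms. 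An induction on time shows that as long as the adaptive ESS-decisions agree with $\res[\alpha]{0:n-1}$, the two particle systems (and hence their ESS values) stay identical; consequently, on the event $A_N \eqdef \{\res[\N]{0:n-1}=\res[\alpha]{0:n-1}\}$, the two estimators $\post[N]{n}\func{n}$ coincide exactly. Writing $X_N$ and $Y_N$ for the centred, $\sqrt{N}$-scaled adaptive and fixed-schedule estimators, we have $X_N = Y_N$ on $A_N$, whence $\prob(X_N\neq Y_N)\le\prob(A_N^c)\to 0$, so $X_N - Y_N \convp 0$. Combining this with $Y_N \convd \sigma_n\langle\res[\alpha]{0:n-1}\rangle(\func{n})\,Z$ from the first part, Slutsky's lemma delivers \eqref{eq:CLTadaptive} and identifies the asymptotic variance as $\asvar{n}\langle\res[\alpha]{0:n-1}\rangle(\func{n})$.

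The main obstacle is the first part: making the fixed-schedule CLT precise and extracting the explicit form of $\sigma_n\langle r_{0:n-1}\rangle(\func{n})$. The grouping construction must faithfully track how a block of mutation-only steps rescales both the normalising constants and the accumulated $\rn{}$-weights, and one must verify that the coarsened model still satisfies the hypotheses under which Theorem~\ref{prop:clt} was established. Once this is in place, the joint schedule convergence and the coupling argument are essentially routine, since they only exploit that there are finitely many decision times and that each indicator converges to a deterministic constant.
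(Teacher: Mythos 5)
Your proposal is correct and follows essentially the same route as the paper: the block-grouping of consecutive non-resampling steps into composed proposal/weight steps is exactly the extended path-space model of Appendix~\ref{sec:appendix2} (Propositions~\ref{prop:original:vs:extended} and~\ref{prop:cltext}), and the reduction to the limiting deterministic schedule via Lemma~\ref{lemma:ess} followed by Slutsky is the paper's argument as well. The only cosmetic difference is that you identify the adaptive and fixed-schedule estimators through an explicit coupling on the event $\{\res[\N]{0:n-1}=\res[\alpha]{0:n-1}\}$, whereas the paper writes the adaptive estimator as a distributional sum over all $2^n$ candidate schedules weighted by indicator products and lets every term but one vanish.
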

When designing a lag-based estimator of the asymptotic variance provided by Theorem~\ref{lemma:CLTadaptive}, it turns out to be more convenient to define the lag in terms of the number of resampling operations rather than the number of iterations of the particle filter.  
%The resulting algorithm will consider a lag not in terms of iterations of the particle updates, but rather the number of resampling operations. 
For this purpose, let $\nres{n} \eqdef \sum_{m=0}^{n-1}\res[N]{m}$ be the counter of the number of times selection is performed before time $n$ (with the convention $\nres{0}=0$). Then the Enoch indices at each time $n$ will be indexed by the resampling times $\nres{n}$ rather than $n$, since every iteration without resampling leaves these unaltered. More specifically, in the following, a generic Enoch index $\enoch{m,\nres{n}}{i}$ will indicate the ancestor of the particle $\epart{n}{i}$ at any time $n'\in\intvect{0}{n}$ such that $\nres{n'}=m \in \intvect{0}{\nres{n}}$. Then for all $i\in\intvect{1}{N}$ and $m\in\intvect{0}{\nres{n+1}}$, the update \eqref{eq:enoch-update} can been rewritten as
\begin{equation}
\enoch{m,\nres{n+1}}{i} \eqdef
\begin{cases}
i\quad &\text{for }m=\nres{n+1},\\
\enoch{m,\nres{n}}{\I{n+1}{i}}&\text{for }m<\nres{n+1}.
\end{cases}
\end{equation}
Notice that when we do not have resampling at time $n$, it holds that $\nres{n+1}=\nres{n}$ and $\I{n+1}{i}=i$, implying $ \enoch{m,\nres{n}}{i} = \enoch{m,\nres{n+1}}{i} $ for all $m\in\intvect{0}{\nres{n+1}}$ and $i\in\intvect{1}{N}$. In practice, for a given $n\in\nset$, the lag takes on values in $\intvect{0}{\nres{n}}$ instead of $\intvect{0}{n}$ and, as before, the expression $\nres{n}\langle\lambda\rangle$, $\lambda\in\nset$, indicates the quantity $(\nres{n}-\lambda)\vee 0$. In this setting, the estimator \eqref{eq:OD} is rewritten as 
\begin{equation}\label{eq:ODadapt}
	\asvarp{n,\lambda}(\func{n}) \eqdef N\sum_{i=1}^N\left(\sum_{j:\enoch{\nres{n}\langle\lambda\rangle,\nres{n}}{j}=i}\frac{\wgt{n}{j}}{\wgtsum{n}}\{\func{n}(\epart{n}{j})-\post[N]{n}\func{n}\}\right)^2.
\end{equation}
Algorithm~\ref{algo:adaptive2} shows one update of the adaptive-resampling APF along with the calculation of the corresponding \namealgo estimate. Corollary~\ref{cor:convAdaptive}, whose proof is found in Appendix~\ref{subsec:proofCor}, provides the consistency of the variance estimator produced by the algorithm.

\begin{algorithm}[htb]
	\caption{\namealgo estimator for APF with adaptive resampling}\label{algo:adaptive2}
	\begin{algorithmic}[1]
		\Statex \textbf{Data}: $(\epart{n}{i},\wgt{n}{i})_{i=1}^N$, $\res[N]{n}$, $\lambda_n$, $(\enoch{\nres{n}\langle\lambda_n\rangle,{\nres{n}}}{i},\dots, \enoch{{\nres{n}},{\nres{n}}}{i})_{i=1}^N$
		\Statex \textbf{Result}: $(\epart{n+1}{i},\wgt{n+1}{i})_{i=1}^N$, $\lambda_{n+1}$, $(\enoch{\nres{n+1}\langle\lambda_{n+1}\rangle,\nres{n+1}}{i},\dots, \enoch{\nres{n+1},\nres{n+1}}{i})_{i=1}^N$
		\State run $(\epart{n+1}{i},\wgt{n+1}{i},\I{n+1}{i})_{i=1}^N\leftarrow\adpf((\epart{n}{i},\wgt{n}{i})_{i=1}^N,\res[N]{n})$;
		\If{$\res[N]{n}=1$}
		\State set $\nres{n+1}\leftarrow \nres{n}+1$;
		\For{$i=1\rightarrow N$}
		\For{$ m=\nres{n+1}\langle\lambda_n+1\rangle\rightarrow \nres{n} $}
		\State set $\enoch{m,\nres{n+1}}{i}\leftarrow \enoch{m,\nres{n}}{\I{n+1}{i}}$;
		\EndFor
		\State set $\enoch{\nres{n+1},\nres{n+1}}{i}\leftarrow i$;
		\EndFor
		\State set $\displaystyle\lambda_{n+1}\gets\argmax_{\lambda\in\intvect{0}{\lambda_n+1}}\asvarp{n+1,\lambda}(\func{n+1})$; 
		\Comment{each estimator computed according to \eqref{eq:ODadapt}}
		\Else
%		\State set $\nres{n+1}\leftarrow \nres{n}$;
		\State set $\lambda_{n+1}\gets \lambda_n$;
%		\Statex (implication: $\{\enoch{\nres{n+1}\langle\lambda_{n+1}\rangle,\nres{n+1}}{i},\dots, \enoch{\nres{n+1},\nres{n+1}}{i}\}_{i=1}^N=\{\enoch{\nres{n}\langle\lambda_n\rangle,{\nres{n}}}{i},\dots, \enoch{{\nres{n}},{\nres{n}}}{i}\}_{i=1}^N$);
		\Comment{implication: $(\enoch{\nres{n+1}\langle\lambda_{n+1}\rangle,\nres{n+1}}{i},\dots, \enoch{\nres{n+1},\nres{n+1}}{i})_{i=1}^N=(\enoch{\nres{n}\langle\lambda_n\rangle,{\nres{n}}}{i},\dots, \enoch{{\nres{n}},{\nres{n}}}{i})_{i=1}^N$}
		\EndIf
	\end{algorithmic}
\end{algorithm}

\begin{corollary}\label{cor:convAdaptive}
	Let Assumptions~\ref{assum:1}~and~\ref{assum:adaptiveRes} hold. Moreover, for every $n\in\nset$ and $\func{m}\in\bmf{\alg{X}_m}$, $m\in\intvect{1}{n}$, let $(\lambda_m)_{m=1}^n$ be the lags produced by $n$ iterations of Algorithm~\ref{algo:adaptive2}. Then, letting $ \asvarp{n,\lambda_n}(\func{n}) $ be computed according to \eqref{eq:ODadapt}, it holds, as $N\to\infty$,
	$$
	\asvarp{n,\lambda_n}(\func{n})\convp \asvar{n}\langle\res[\alpha]{0:n-1}\rangle(\func{n}).
	$$
\end{corollary}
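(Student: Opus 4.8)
The plan is to follow the inductive structure of the proof of Theorem~\ref{thm:conv}, but with the running time index $n$ replaced throughout by the resampling counter $\nres{n}$, so that the target event becomes $\{\lambda_n=\nres{n}\}$: when this holds the adaptive estimator $\asvarp{n,\lambda_n}(\func{n})$ coincides with the full Eve-index estimator $\asvarp{n,\nres{n}}(\func{n})$ obtained by stratifying by the ancestors at generation $\nres{n}\langle\lambda_n\rangle=0$. The crucial preliminary reduction is to pass from the random selection schedule to its deterministic limit. By Lemma~\ref{lemma:ess}, $\res[N]{m}\convp\res[\alpha]{m}$ for each $m\in\intvect{0}{n-1}$, so the event that all indicators agree with their limits, $\{\res[N]{m}=\res[\alpha]{m},\ m\in\intvect{0}{n-1}\}$, has probability tending to one; on this event $\nres{n}$ equals the deterministic counter $\nres[\alpha]{n}\eqdef\sum_{m=0}^{n-1}\res[\alpha]{m}$ and Algorithm~\ref{algo:adaptive2} runs exactly as a particle filter with a \emph{fixed}, time-inhomogeneous selection schedule. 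Since all assertions in the corollary are convergence-in-probability statements, it suffices to establish them on this high-probability event, which is what I would do.

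Second, I would record the fixed-lag consistency needed to feed the induction, the adaptive analog of Theorem~\ref{thm:consistency-fixed-lag}: for each fixed $\lambda$, $\asvarp{n,\lambda}(\func{n})$ converges in probability to a deterministic limit — the truncated adaptive asymptotic variance with lag $\lambda$ — which is non-decreasing in $\lambda$ and attains the full value $\asvar{n}\langle\res[\alpha]{0:n-1}\rangle(\func{n})$ exactly at $\lambda=\nres[\alpha]{n}$; moreover this limit is \emph{strictly} smaller than the full value whenever $\lambda<\nres[\alpha]{n}$, thanks to the positivity in Assumption~\ref{assum:1}. On the deterministic-schedule event this follows by collapsing each maximal run of mutation-only steps into the composite kernel obtained by multiplying the corresponding $\lk{m}$, recasting the adaptive filter as a standard APF on the coarsened time grid indexed by the resampling times, whereupon Theorem~\ref{thm:consistency-fixed-lag} applies. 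The fact that the Enoch indices are indexed by $\nres{n}$ rather than by $n$ is precisely what makes this bookkeeping consistent under the reinterpretation.

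With these in hand the induction runs as in Theorem~\ref{thm:conv}. Using that $\{|\asvarp{n,\lambda_n}(\func{n})-\asvarp{n,\nres{n}}(\func{n})|\ge\varepsilon\}\subseteq\{\asvarp{n,\lambda_n}(\func{n})\ne\asvarp{n,\nres{n}}(\func{n})\}$, I would split
\[
\prob(|\asvarp{n,\lambda_n}(\func{n})-\asvar{n}\langle\res[\alpha]{0:n-1}\rangle(\func{n})|\ge 2\varepsilon)\le \prob(\asvarp{n,\lambda_n}(\func{n})\ne\asvarp{n,\nres{n}}(\func{n}))+\prob(|\asvarp{n,\nres{n}}(\func{n})-\asvar{n}\langle\res[\alpha]{0:n-1}\rangle(\func{n})|\ge\varepsilon),
\]
bound the second term using the full-lag consistency just recorded, and bound the first by $1-\prob(\lambda_n=\nres{n})$. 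To show $\prob(\lambda_n=\nres{n})\to 1$ I would distinguish the two branches of Algorithm~\ref{algo:adaptive2}: on a non-resampling step both $\lambda$ and $\nres{}$ are unchanged, so the equality propagates for free from the induction hypothesis $\prob(\lambda_{n-1}=\nres{n-1})\to 1$; on a resampling step $\nres{n}=\nres{n-1}+1$, and the $\argmax$ over $\intvect{0}{\lambda_{n-1}+1}$ selects $\lambda_n=\lambda_{n-1}+1=\nres{n}$ with probability tending to one, because by the fixed-lag consistency and the strict ordering the full-lag estimate $\asvarp{n,\nres{n}}(\func{n})$ exceeds every truncated estimate $\asvarp{n,\lambda}(\func{n})$, $\lambda<\nres{n}$, in the limit. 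The base case $n=0$ is immediate since $\lambda_0=0=\nres{0}$, and combining the displayed bound with $\prob(\lambda_n=\nres{n})\to1$ yields the claim.

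The step I expect to be the main obstacle is making the two sources of randomness — the resampling schedule and the $\argmax$ selection — coexist cleanly. One must intersect the deterministic-schedule event, the induction event $\{\lambda_{n-1}=\nres{n-1}\}$, and the comparison events $\{\asvarp{n,\nres{n}}(\func{n})\ge\asvarp{n,\lambda}(\func{n})\}$, and argue via a finite union bound that the joint probability still tends to one; in particular the fixed-lag convergence must hold \emph{jointly} over all $\lambda\in\intvect{0}{\nres[\alpha]{n}}$, not merely marginally, which is exactly why passing to the deterministic schedule — where the index set $\intvect{0}{\nres[\alpha]{n}}$ is non-random — is essential. A secondary technical point is verifying that the kernel-composition reinterpretation is compatible with the resampling-time indexing of the Enoch indices so that Theorem~\ref{thm:consistency-fixed-lag} transfers without modification; here the strictness of the variance ordering under Assumption~\ref{assum:1} is what ultimately pins the $\argmax$ to the full lag.
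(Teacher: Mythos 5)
Your proposal follows the paper's route essentially step for step: reduce to a deterministic selection schedule via Lemma~\ref{lemma:ess} (the paper does this through the distributional decomposition \eqref{eq:sumVarEst} over all schedules in $\set{R}_n$ rather than by conditioning on a high-probability event, but the effect is the same), recast the deterministic-schedule filter as a standard APF on the coarsened, extended model of Appendix~\ref{sec:appendix2} so that Theorem~\ref{thm:consistency-fixed-lag} yields \eqref{eq:convTruncMult}, and then run the induction of Theorem~\ref{thm:conv} with $n$ replaced by $\nres{n}$, splitting on whether $\res{n-1}$ equals $0$ or $1$ (this is precisely the paper's Lemma~\ref{lemma:convVarDetRes}). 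The only point where you deviate is the claim that Assumption~\ref{assum:1} makes the ordering of the truncated variances \emph{strict}; that is not true in general (the additional terms in \eqref{eq:truncVarExt} are nonnegative but can vanish, e.g.\ for constant $\func{n}$), and the paper's argument, like the rest of yours, uses only the non-strict inequality to conclude that $\prob(\lambda_n=\nres{n})\to1$.
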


\section{Numerical illustrations}\label{sec:simul}
In this section we apply, as an illustration, our approach to optimal filtering in SSMs. In order to benchmark carefully our variance estimator against the fixed-lag estimator of \cite{olsson:douc:2017}, we tested \namealgo on the same SSMs as in the latter work, namely
\begin{itemize}
	\item[--] the \emph{stochastic volatility model} introduced by \citet{hull:white:1987} and
	\item[--] a \emph{linear Gaussian state space model} for which exact computation of the filter is possible using the Kalman filter.
\end{itemize}

\subsection{Stochastic volatility model}\label{sec:stoc_vol}
Out first SSM is governed by the equations
\begin{equation}\label{eq:stocvol}
\begin{split}
	X_{n+1}&=a X_n +\sigma U_{n+1}, \\
	Y_n&=b \exp(X_n/2) V_n, 
\end{split}
\quad n \in \nset,
\end{equation}
where $(U_n)_{n\in \nsetpos}$ and $(V_n)_{n\in\nset}$ are sequences of uncorrelated standard Gaussian noise variables. The parameters are assumed to be known, with $(a,b,\sigma)=($0.975, 0.641, 0.165). We only observe the process $(Y_n)_{n\in\nset}$, representing stock log-returns, while $(X_n)_{n\in\nset}$, representing the log-volatility, is a hidden state process which we want to infer. The state $X_0$ is initialised according to a zero-mean Gaussian distribution with variance $ \sigma^2/(1-a^2)$, \ie, the stationary distribution of the state process. Thus, we deal with a fully dominated nonlinear SSM with $\set{X}=\set{Y}=\rset$, $\alg{X}=\alg{Y}=\borel{\rset}$, the Borel $\sigma$-field on $\rset$, in which both $\kernel{M}$ and $\kernel{G}$ are Gaussian kernels.
%\begin{align}
%	\kernel{M}(x,A)&=\int_A\frac{1}{\sqrt{2\pi\sigma^2}}\exp\left(\frac{(x'-ax)^2}{2\sigma^2}\right)\,dx',\\
%	g(x,y)&=\frac{1}{\sqrt{2\pi b^2e^{x}}}\exp\left(\frac{y^2}{2\beta^2e^x}\right).
%\end{align}

A record $y_{0:5000}$ of observations was obtained by simulating the process  $(X_n, Y_n)_{n\in \mathbb{N}}$ under the dynamics \eqref{eq:stocvol} and the given parameterisation. 
%The processes $\{X_n\}_{n\in \mathbb{N}}$ and $\{Y_n\}_{n\in \mathbb{N}}$ have been artificially simulated until $n=5000$, with the dynamics described by equations \eqref{eq:stocvol} and the given parameterization, producing the observations $y_{0:5000}$. 
For all $n\in\nset$, we let $\func{n}$ be the identity function. In order to have a reliable benchmark for the variance we first implemented the naive, brute-force estimation technique described in Section~\ref{eq:estimation:asymptotic:variance}, producing 2000 replicates of the particle filter with $N=5000$. Then we computed the sample variance of these filter estimates at each iteration and multiplied the same by $N$. 

Algorithm~\ref{algo:adaptive} was implemented with the two different sample sizes $N=1000$ and $N=100000$ in order to assess stability as well as convergence. The output is displayed in Figures~\ref{fig:compare_1}~and~\ref{fig:compare_2}, where the \namealgo estimator is compared to the brute-force benchmark, the CLE, and the fixed-lag approach of \cite{olsson:douc:2017} with $\lambda \in \{14, 24\}$. In both cases, our estimator produces more precise and stable estimates of the asymptotic variance. Moreover, increasing the number of particles leads to significantly better accuracy, demonstrating the convergence properties of our method. These patterns can also be noticed in Figure~\ref{fig:compare_zoom}, where we focus on large values of $n$. We see that, as expected, that when $n$ is large the CLE either drops to zero or suffers from large variance due to the depletion of the Eve indices. The fixed-lag approach has a similar behavior as our adaptive approach, being both close to the benchmark brute-force values. The fundamental difference is that in the adaptive method the lag is designed adaptively and dynamically, whereas for the fixed-lag method the lag is set to a constant value close to the average lag produced by the \namealgo estimator. We stress again that without access to the \namealgo procedure, the design of a suitable fixed lag $\lambda$ would require an exhaustive prefatory simulation-based analysis, where $\lambda$ is selected by producing multiple fixed-lag variance estimates for a range of different lags and repeated runs of the particle filter and comparing the same to an estimate obtained using the brute-force estimator. 

\begin{figure}[htb]
	\centering
	\includegraphics[width=\linewidth]{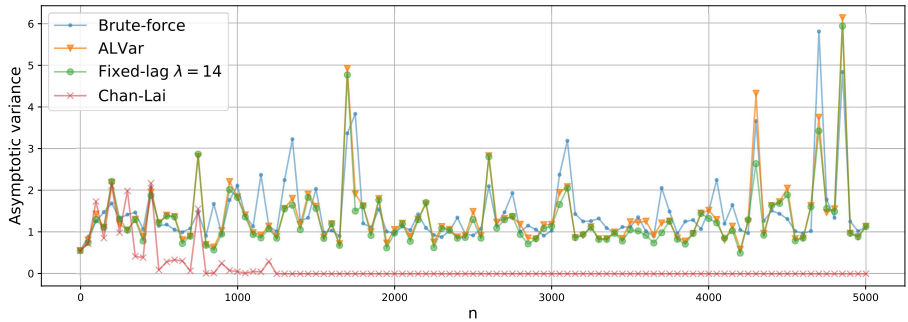}
	\caption{Comparison between variance estimators in the context of optimal filtering in the stochastic volatility model \eqref{eq:stocvol}. The particle sample size was $N=1000$ and the plot displays every 50th variance estimate. The brute-force estimates are based on 2000 replicates of the particle filter.}\label{fig:compare_1}
\end{figure}
\begin{figure}[htb]
	\includegraphics[width=\linewidth]{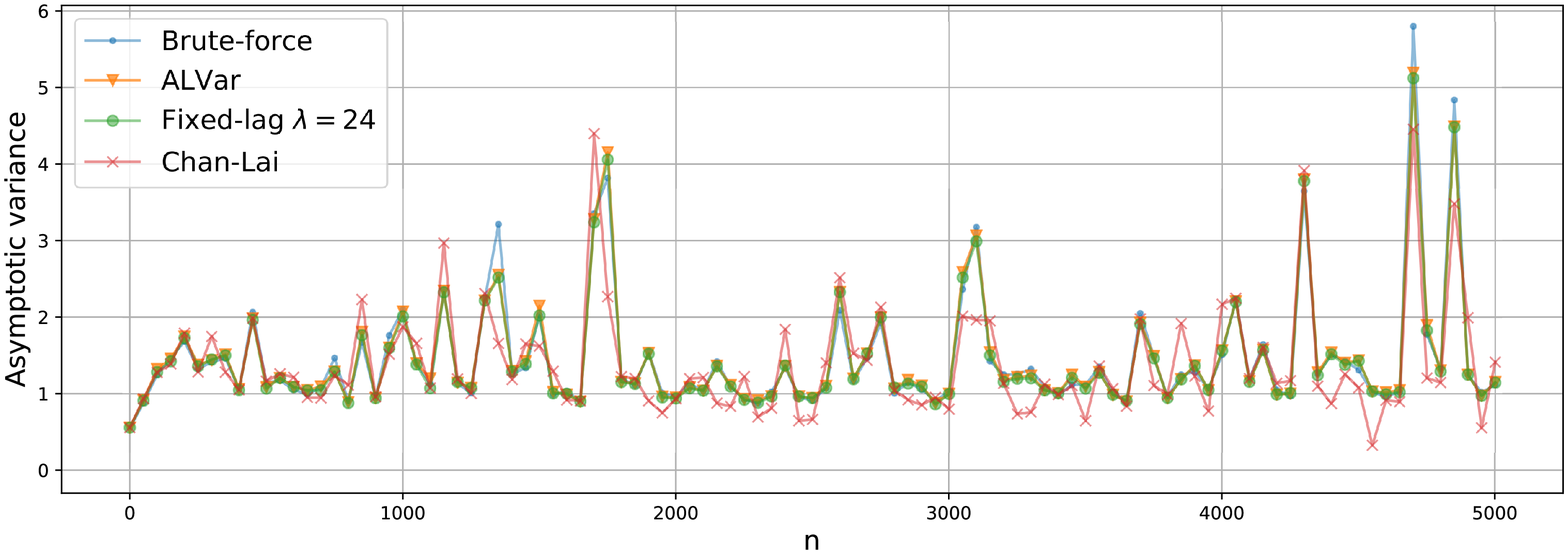}
	\caption{Comparison between variance estimators in the context of optimal filtering in the stochastic volatility model \eqref{eq:stocvol}. The particle sample size was $N=100000$ and the plot displays every 50th variance estimate. The brute-force estimates are based on 2000 replicates of the particle filter.}\label{fig:compare_2}
\end{figure}
\begin{figure}[htb]
	\includegraphics[width=0.49\linewidth]{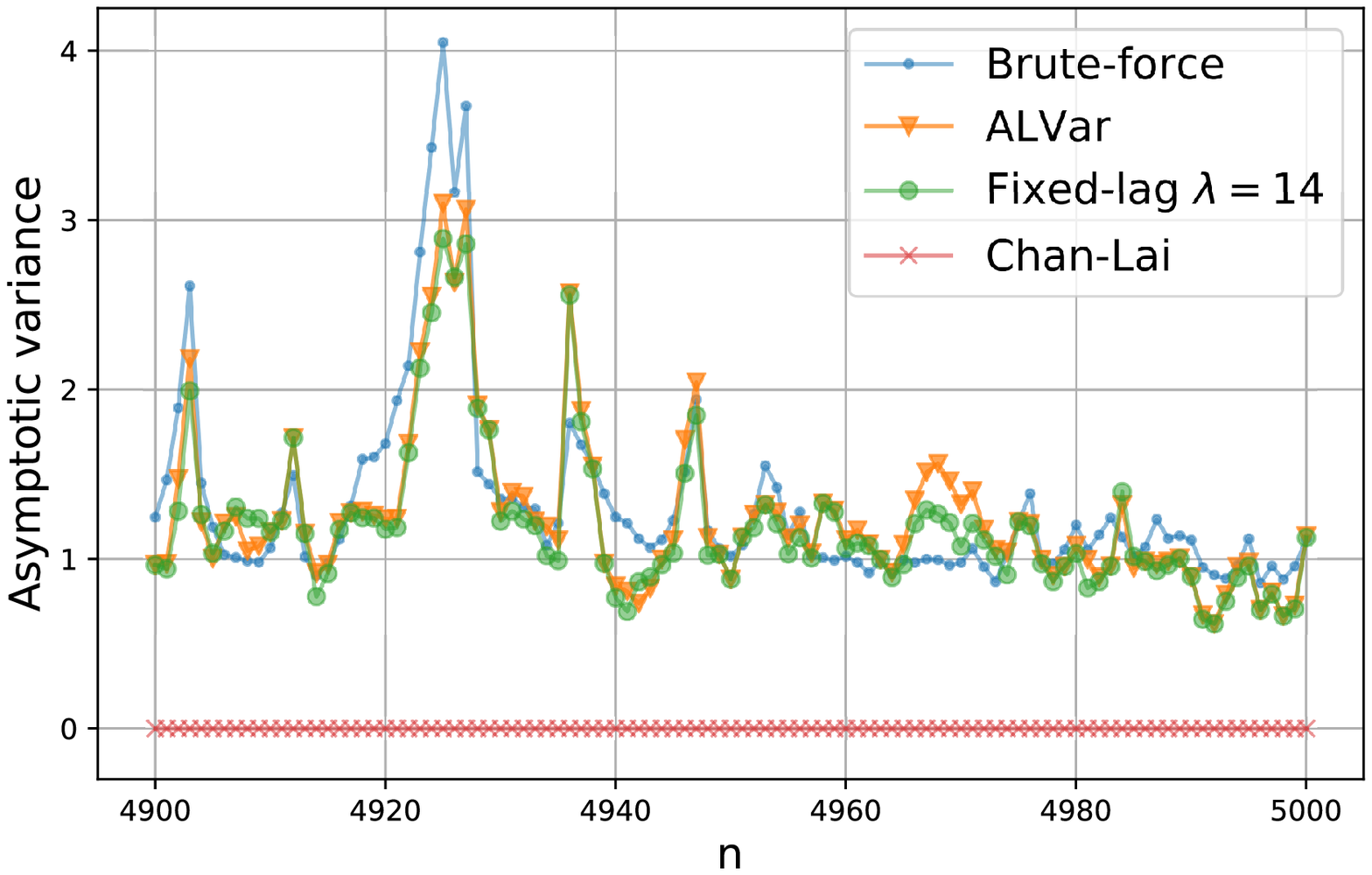}
	\includegraphics[width=0.49\linewidth]{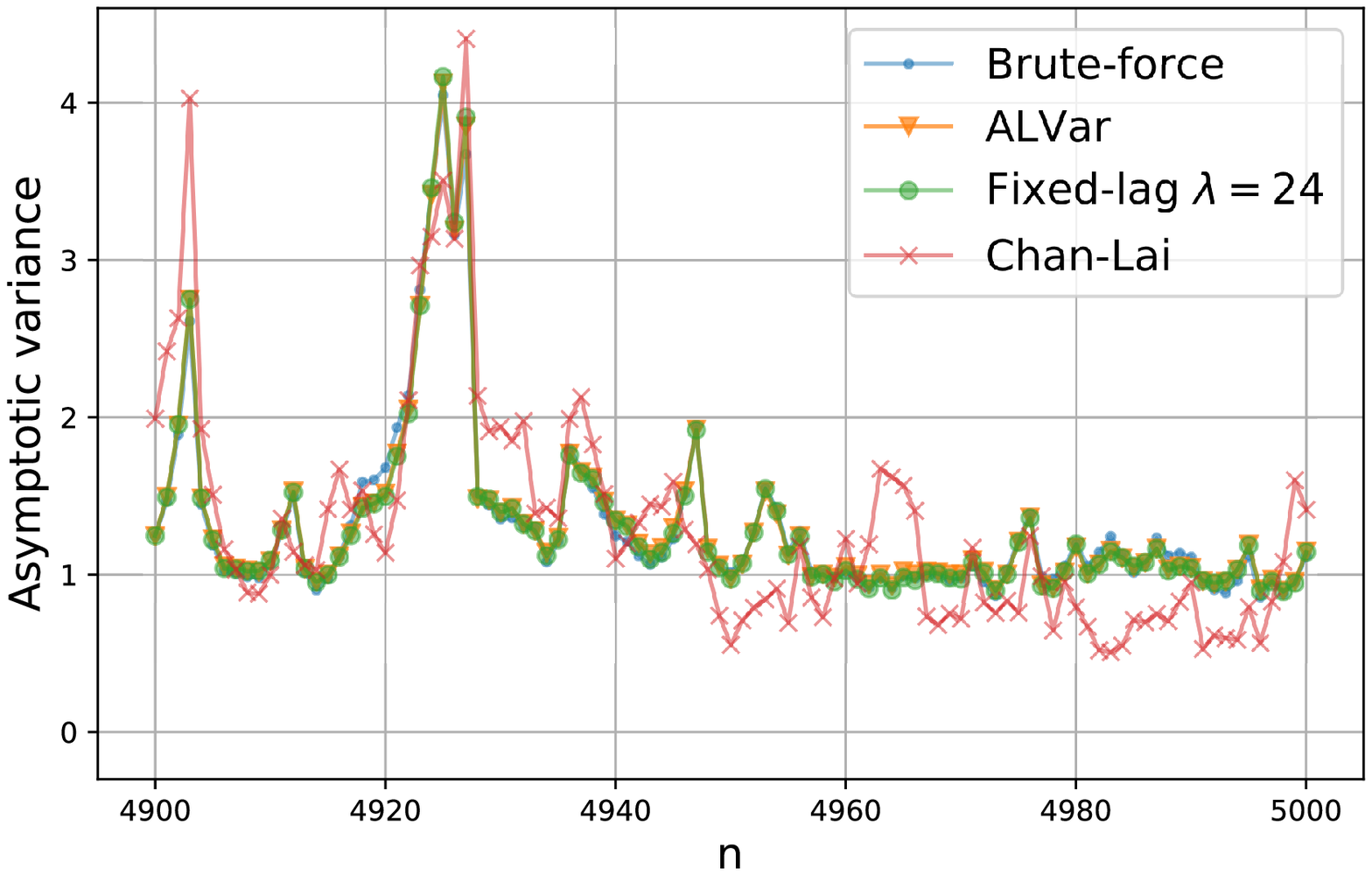}
	\caption{Similar plots as in Figures~\ref{fig:compare_1} and \ref{fig:compare_2}, but with focus on large $n \in \intvect{4900}{5000}$. The left and right panels correspond to $N = 1000$ and $N = 100000$, respectively.}\label{fig:compare_zoom}
\end{figure}

The previous plots are based upon single runs of the algorithms producing variance estimates for different $n\in\intvect{0}{5000}$; we now focus instead on how several estimates are distributed for some specific times $n$, similar to \cite[Figure~2]{olsson:douc:2017}. In the boxplots in Figure~\ref{fig:boxplot}, each box represents the distribution of variance estimates at time $n = 1000$ using the \namealgo algorithm, the CLE, and the fixed-lag approach with several choices of $\lambda$, obtained on the basis of 100 replicates of Algorithm~\ref{algo:pfenoch} for each of these lags. For the box dedicated to the \namealgo we have indicated the average $\lambda_{1000}$ across the 100 independent particle filter replicates (not to be confused with the average lag across all iterations of a single realisation of the particle filter). We observe that our estimator manifests negligible bias, with variability similar to the one of the best fixed-lag estimators.
\begin{figure}[htb]
	\centering
	\includegraphics[width=\linewidth]{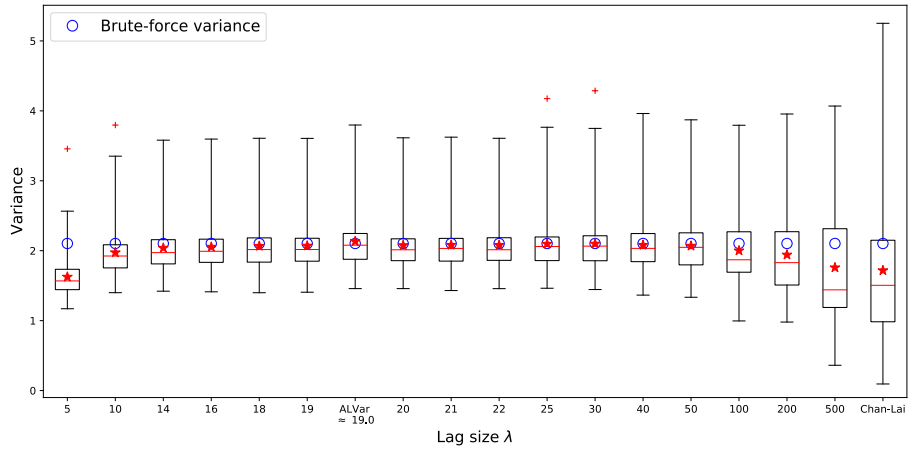}
	\caption{Distribution of fixed-lag and adaptive variance estimates at iteration $n=1000$. Each box is based on 100 replicates of Algorithm~\ref{algo:pfenoch} for a given value of $\lambda$, except the one marked \texttt{\namealgo}, which corresponds to our adaptive method. The average adaptive lag was approximately 19.0. Each particle filter used $N=10000$ particles. The circles correspond to the estimate produced by the brute-force algorithm and the lines and stars in the boxes indicate the medians and means of each sample, respectively.}\label{fig:boxplot}
\end{figure}

\subsubsection{Adaptive-lag analysis }
\label{sec:num:lag:analysis}

In this part we investigate how the values of the lags chosen adaptively at each iteration of Algorithm~\ref{algo:adaptive} are distributed and depend on the number of particles $N$. In Figure~\ref{fig:evolag1000} we show the evolution of the chosen lags over time for $N=1000$ particles. We see that after an initial constant increase, the values stabilise in a range where most of these are between 5 and 30. An interesting pattern is the presence of regimes with constant increase of the lag, during which the same generation of Enoch indices is used, and possibly sudden drops, when the so-far-used generation becomes depleted and a more recent one comes into substitution.
\begin{figure}[H]
	\centering
	\includegraphics[width=0.49\linewidth]{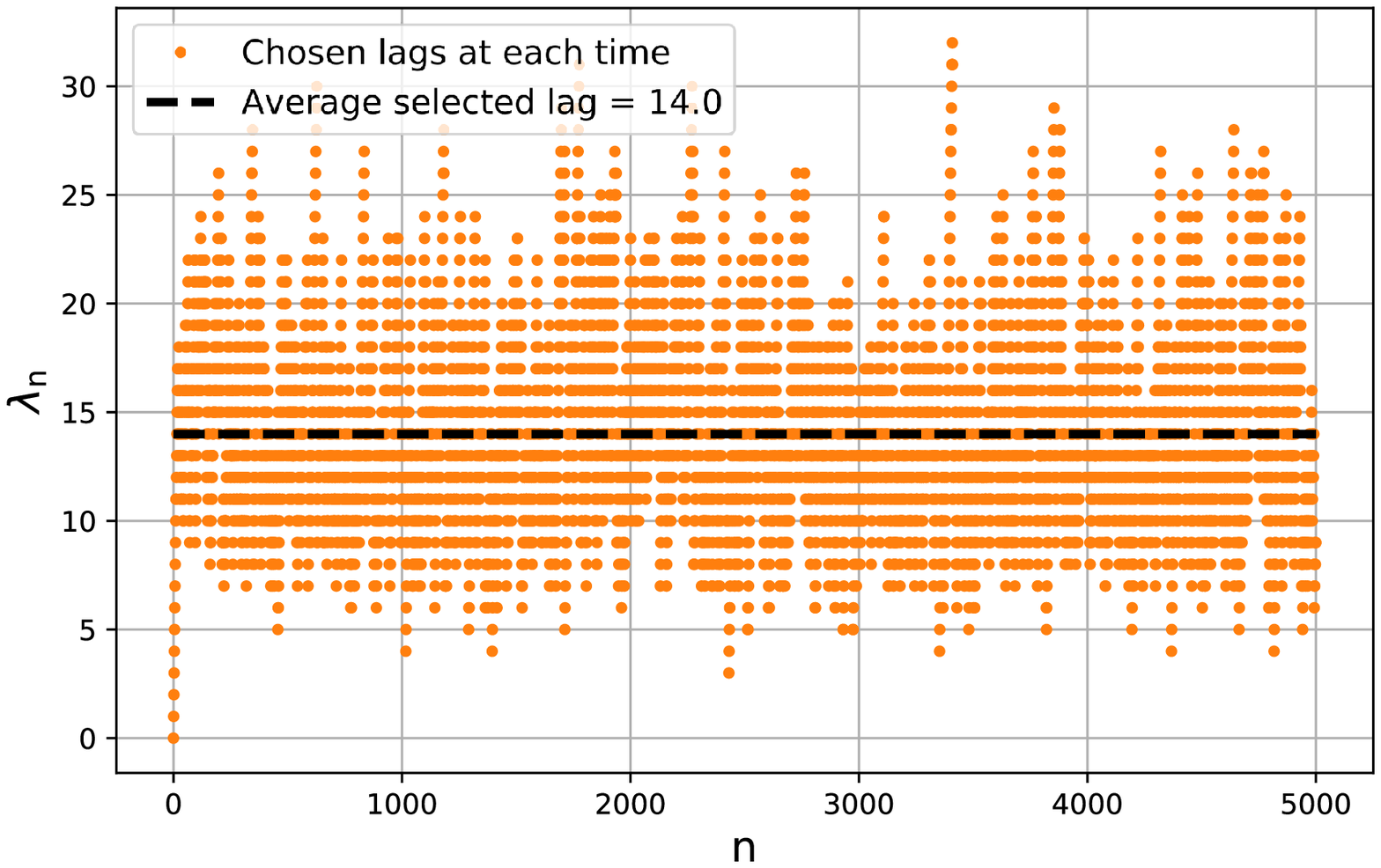}
	\includegraphics[width=0.49\linewidth]{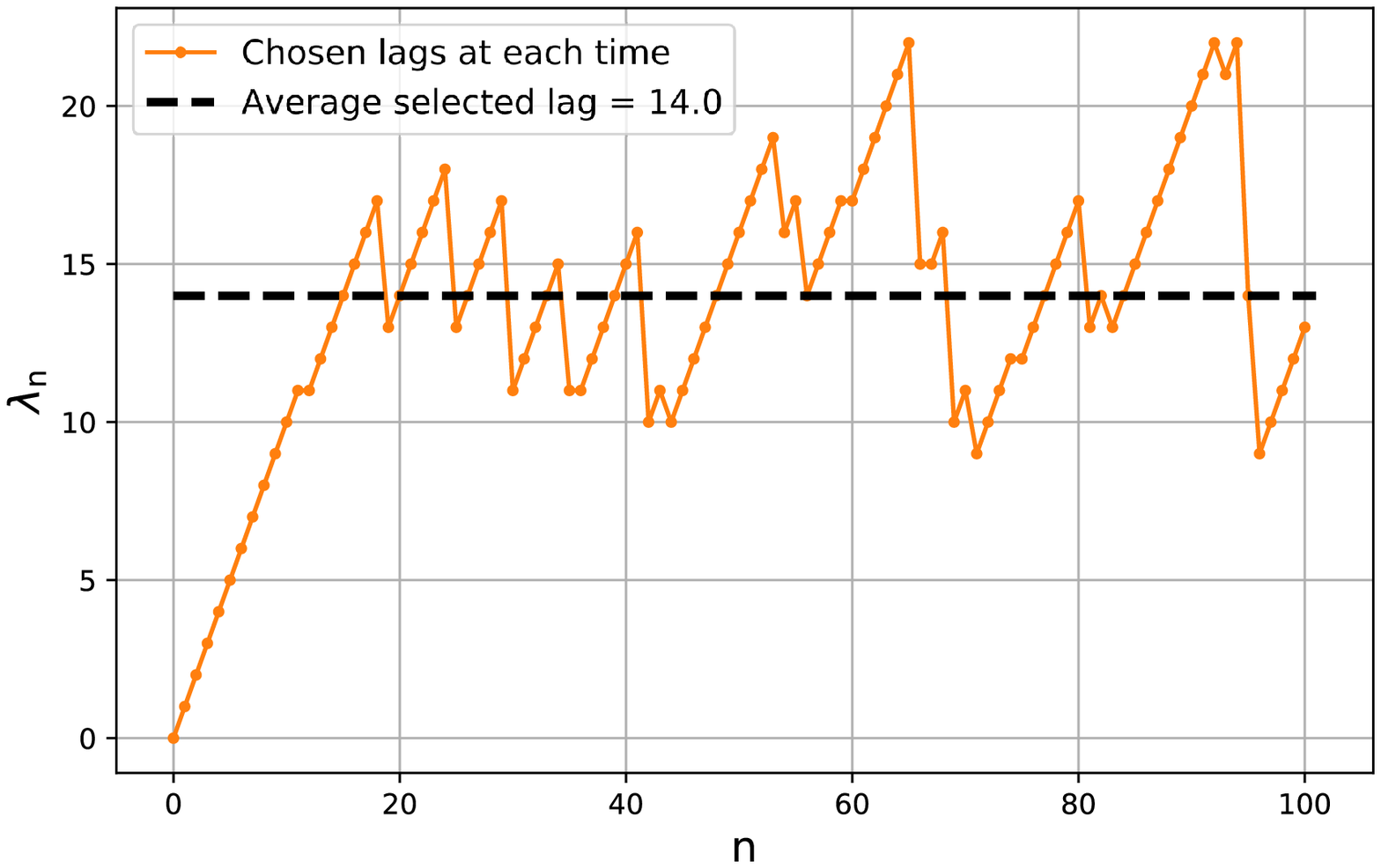}
	\caption{Evolution of the choice of optimal lag for $n \in \intvect{0}{5000}$ (left panel) and $N = 1000$ particles. The right panel shows the 100 first iterations. The average lag was approximately 14.0.}\label{fig:evolag1000}
\end{figure}
We do not show it here, but we notice a similar pattern also when $N$ is increased from $1000$ to $100000$, in which case the range of selected lags is between 10 and 50, with an average around 24. This is not a surprise, since in Theorem~\ref{thm:conv} we have shown that the adaptive lag is expected to converge to the maximum possible value $\lambda_{n}=n$, why we expect larger lags for large sample sizes. The good news is that the complexity does not explode as $N$ is increasing; indeed, the \namealgo algorithm is between 1.5 and 2 times slower than a standard particle filter for $N=1000$ particles and between 2 and 2.5 times slower for $N=100000$ particles. Moreover, our novel method always took significantly less than twice the time of a fixed-lag algorithm with $\lambda$ selected around the average value of the adaptive approach (1.4 and 1.7 times slower for $N=1000$ and $N=100000$ particles, respectively).
%; however, notice that the empirical calibration procedure of the parameter may almost certainly take longer than this additional time spent by the adaptive approach. 
The computational time of the \namealgo procedure is closely related to the values of the lags that it chooses across all iterations, since the larger these are, the longer it takes to update the Enoch indices and to make the update on Line~8 in Algorithm~\ref{algo:adaptive}. In Figure~\ref{fig:lag_vs_particles} we illustrate how the lags are distributed for different particle sample sizes; as predicted by our heuristic argument in Section~\ref{subsubsec:loglag}, the dependence of the average lag on $N$ clearly appears to be logarithmic with respect to $N$. Also the maxima behave similarly.  

\begin{figure}[htb]
	\centering
	\includegraphics[width=\linewidth]{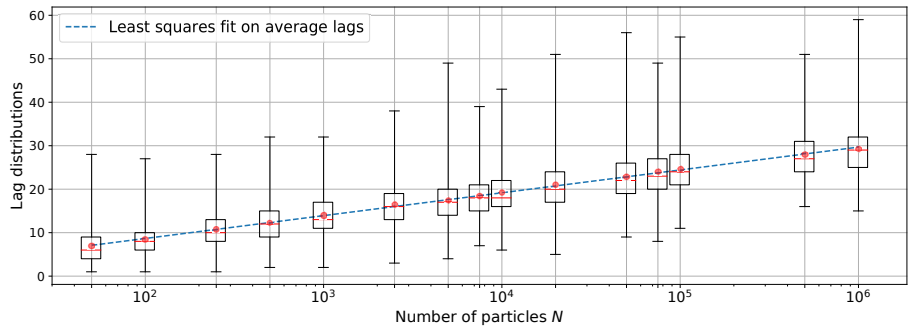}
	\caption{Each box shows the distribution of lags selected adaptively by the \namealgo algorithm in a single run up to iteration $n=5000$ for a given particle sample size $N$ (excluding the first 100 lags in each run). Different boxes correspond to different sample sizes. The dots and lines in each box represent the means and the medians of the sample distributions, respectively. The dashed line is the least squares fit between the average lag of each run and $\log_{10}(N)$.}
	\label{fig:lag_vs_particles}
\end{figure}

\subsubsection{\namealgo vs. empirical MSE}
At the beginning of Section~\ref{sec:results} we claimed that an optimal choice of the lag could be the one minimising the MSE. We now want to check that the lags selected by the \namealgo algorithm are sufficiently close to the ones minimising \eqref{eq:MSE} for most iterations. As we mentioned, \eqref{eq:MSE} is hard to evaluate analytically but can be estimated by means of the \emph{empirical MSE} obtained by running $M\in\nsetpos$ independent particle filters. More precisely, for every $n\in\nset$, $\lambda\in\nset$, and $\func{n}\in\bmf{\alg{X}_n}$, we define the empirical MSE
\begin{equation}
	\emse_n\langle\func{n} \rangle(\lambda) \eqdef \frac{1}{M}\sum_{j=1}^{M}\left( \hat{\sigma}^{2, j}_{n, \lambda}(\func{n}) -\asvar{n}(\func{n})\right)^2,
\end{equation}
at time $n$, where $\hat{\sigma}^{2, j}_{n, \lambda}(\func{n})$ is the estimate produced by the $j$-th particle filter and $\asvar{n}(\func{n})$ can be approximated by a brute-force estimate. Then, for every $n\in\nset$ and $\func{n}\in\bmf{\alg{X}_n}$ we determine the optimal lag by selecting
\begin{equation} \label{eq:MSE:optimal:lag}
	\hat{\lambda}_n^*\gets \argmin_{\lambda\in\intvect{0}{n}}\emse_n\langle\func{n} \rangle(\lambda).
\end{equation}
In order to compare the adaptive lags formed by the \namealgo estimator to the empirical MSE-optimal lags \eqref{eq:MSE:optimal:lag}, we run $M=1000$ particle filters, with each $N=10000$ particles, for $n=500$ iterations; letting $\func{n}=\mathrm{id}$ for all $n$, we determined, for each replicate, the adaptive lags selected by the \namealgo procedure as well as the ones minimising the empirical MSE. Figure~\ref{fig:mse} reports adaptive-lag distributions at some iterations, together with the lag values $\hat{\lambda}_n^*$ minimising the empirical MSE; remarkably, we observe that the empirically optimal lags are within the range of lags selected by the \namealgo algorithm, although the latter tends to choose slightly larger values on average.
\begin{figure}[htb]
	\centering
	\includegraphics[width=\linewidth]{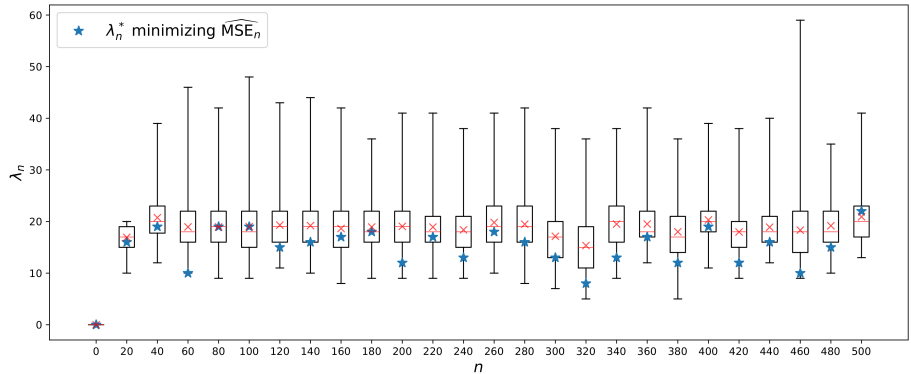}
	\caption{Each box represents the distribution, at a given iteration, of the adaptive lags selected at that time by the \namealgo estimator for the $M=1000$ different replicates. The crosses and the lines in each box are the corresponding means and medians, respectively. The stars are the lags minimising the empirical MSE evaluated at each iteration on the basis of the $M=1000$ replicates.}\label{fig:mse}
\end{figure}

\subsubsection{Variance estimation in the case of adaptive resampling}
In this section we test the \namealgo estimator in the case where the resampling operation is triggered by the ESS criterion according to Algorithm~\ref{algo:adaptivePF}. Figure~\ref{fig:adaptive} displays brute-force estimates of the asymptotic variance as well as estimates produced by the \namealgo estimator in Algorithm~\ref{algo:adaptive2} with two distinct choices of the parameter $\alpha \in \{0,5, 0.2\}$. In both cases, the observations $y_{0:5000}$ generated in the previous section were used as input to the particle filter. Although being based on the same observations and exhibiting similar patterns, we notice that the two brute-force-estimated asymptotic variances differ, as expected, from each other and from the ones reported  Figures~\ref{fig:compare_1}~and~\ref{fig:compare_2}. Still, in both cases the \namealgo estimator is capable of tracking closely the time evolution of the variance. Since the lag is now chosen in terms of selection times and not of simple iterations, the resulting average values are significantly smaller than before, namely 3.0 when $\alpha=0.5$ and $1.9$ when $\alpha=0.2$.
\begin{figure}[htb]
	\centering
	\includegraphics[width=\linewidth]{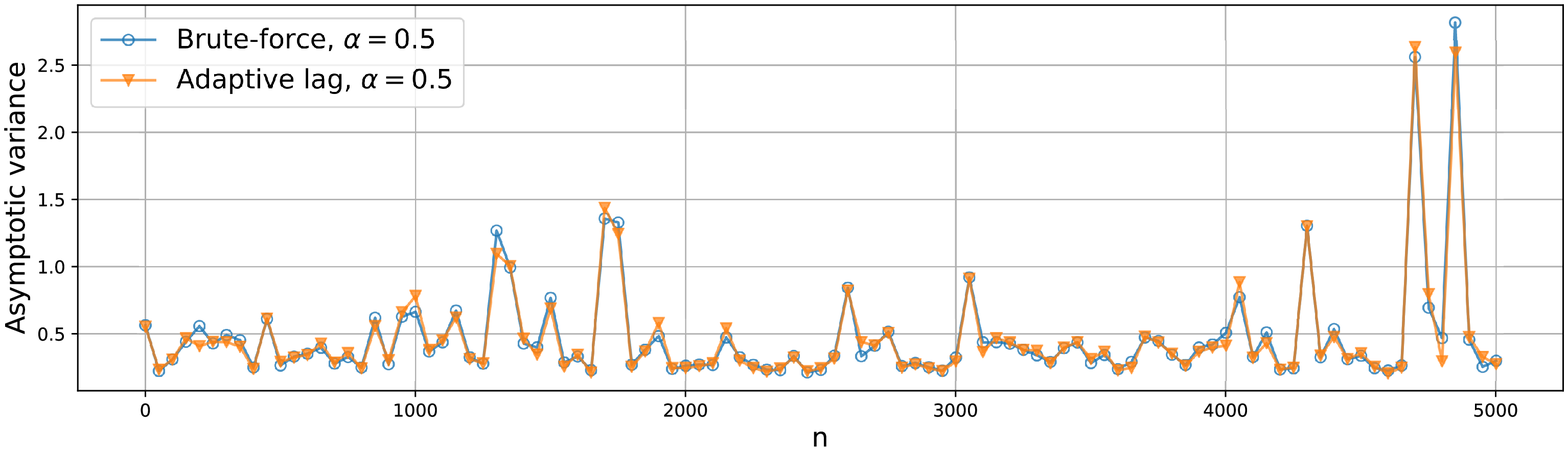}
	\includegraphics[width=\linewidth]{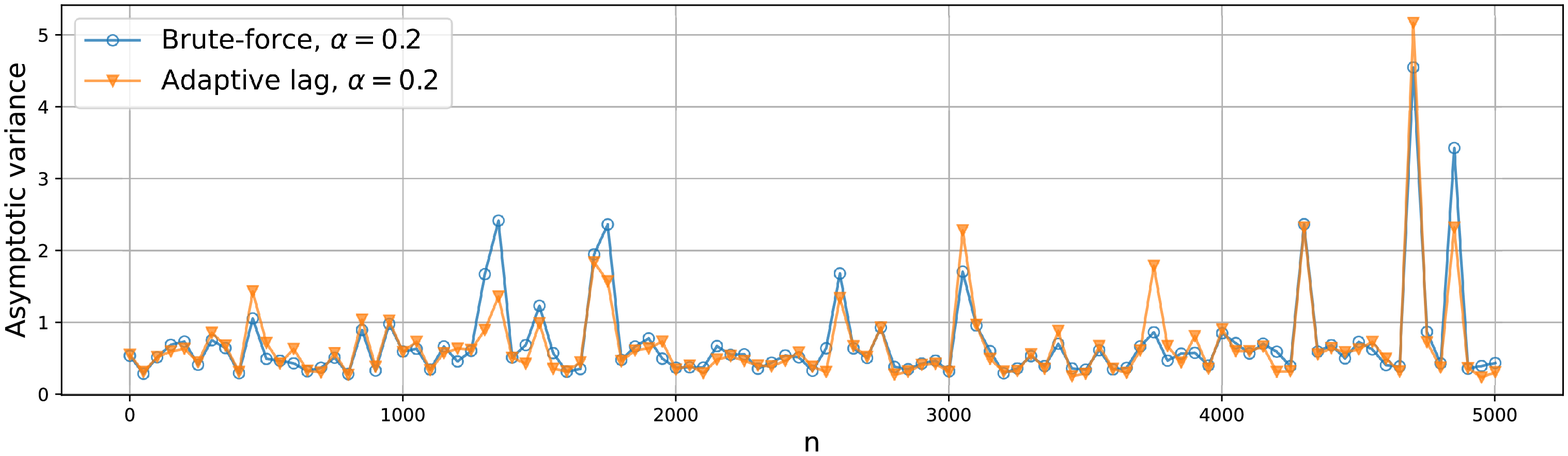}
	\caption{Variance estimation in the APF with ESS-based selection schedule using $N=10000$ particles and a resampling parameter $\alpha$ equal to 0.5 (top panel) and 0.2 (bottom panel), respectively. The plot displays every 50th estimate.}\label{fig:adaptive}
\end{figure}

\subsection{Linear Gaussian SSM and approximate confidence bounds}

In this section we are interested in evaluating the ability of the \namealgo estimator to provide reliable Monte Carlo confidence bounds for the quantities interest. In order to do that, we consider the linear Gaussian SSM
\begin{equation}\label{eq:lingauss}
\begin{split}
	X_{n+1}&=a X_n +\sigma_u U_{n+1}, \\
	Y_n&=b X_n  +\sigma_v V_n, 
\end{split}
\quad n \in \nset,
\end{equation}
where $(U_n)_{n\in \nsetpos}$ and $(V_n)_{n\in\nset}$ are again sequences of uncorrelated standard Gaussian noise variables. We let the parameters be equal to $(a,b,\sigma_u,\sigma_v)=($0.98, 1, 0.2, 1). For linear Gaussian models, the filter distributions are Gaussian and available in a closed form through the \emph{Kalman filter} (see \eg \cite[Section~5.2.3]{cappe:moulines:ryden:2005}), which makes these models particularly well suited for assessing the performance of particle methods. We proceed as follows. Given a sequence  $y_{0:1000}$ of observations, the Kalman filter produces the exact values of $\post{n}(\mathrm{id})=\E[X_n \mid Y_{0:n}=y_{0:n}]$, $n\in\intvect{0}{1000}$; then, for a single run of the \namealgo we may create, for each $n\in\intvect{0}{1000}$ a 95\% confidence interval 
\begin{equation}\label{eq:confint}
\post[N]{n}(\mathrm{id}) \pm \uplambda_{0.025}\frac{\asvarp{n,\lambda_n}(\mathrm{id})}{\sqrt{N}},
\end{equation}
%\begin{equation}\label{eq:confint}
%    \left(\post[N]{n}(\mathrm{id})-q_{0.025}\frac{\asvarp{n,\lambda_n}(\mathrm{id})}{\sqrt{N}},\quad\post[N]{n}(\mathrm{id})+q_{0.025}\frac{\asvarp{n,\lambda_n}(\mathrm{id})}{\sqrt{N}}\right),
%\end{equation}
where $\uplambda_{0.025}$ is the 2.5\% quantile of the standard Gaussian distribution. Finally, we let $N=10000$ and produce 200 independent replicates of the APF and associated \namealgo estimator (Algorithm~\ref{algo:adaptive}). For each replicate we calculate, for every $n \in \intvect{0}{1000}$, a 95\% confidence interval \eqref{eq:confint}. Figure~\ref{fig:failrate1} reports the failure rate, \ie, the ratio of cases in which the true value falls outside the corresponding confidence interval, at each iteration. We observe an average failure rate across all times around $5.2\%$ (instead of the ideal $5\%$ failure rate), which may suggest a slight underestimation of the asymptotic variance.
\begin{figure}[htb]
	\centering
	\includegraphics[width=\linewidth]{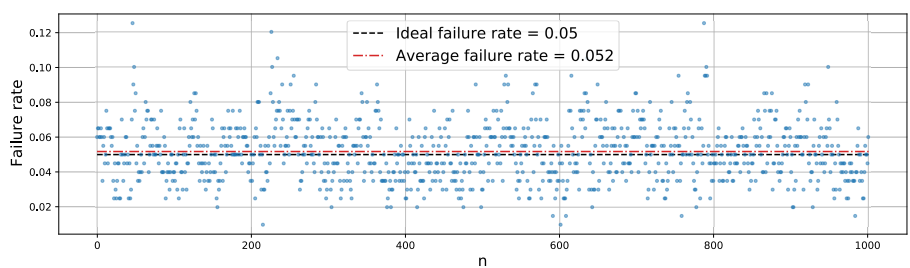}
	\caption{Empirical failure rates across iterations $n\in\intvect{0}{1000}$. Each rate is obtained on the basis of 200 replicates of Algorithm~\ref{algo:adaptive} with $N=10000$. The dashed line represents the perfect rate of $5\%$, while the dash-dotted line is the overall average failure rate.}\label{fig:failrate1}
\end{figure}
Similar results was obtained with the ESS-based approach described by Algorithm~\ref{algo:adaptive2}, where using $\alpha = 0.2$ and $\alpha = 0.5$ lead to average failure rates of $5.2\%$ and $5.0\%$, respectively. All in all, these results confirm that our approach is reliable and has small or negligible bias.

\begin{funding}
    This work is supported by the Swedish Research Council, Grant 2018-05230.
\end{funding}

%\bibliographystyle{imsart-nameyear}
%\bibliography{motherofallbibs}

\appendix
\section{Proof of Theorem~\ref{thm:consistency-fixed-lag}}\label{sec:appendix1}

In the following we will prove Theorem~\ref{thm:consistency-fixed-lag} by extending \cite[Proposition~8]{olsson:douc:2017}, which provides the convergence of interest in the special case of Feynman--Kac models and standard bootstrap filters, to the general model context in Section~\ref{subsec:model} and APFs. For this purpose, we first need to reformulate the APF introduced Section~\ref{subsec:smc} as a simple bootstrap filter operating on an auxiliary, extended Feynman--Kac model. More precisely, define a sequence  $(\Xbar{n},\Xalgbar{n})_{n\in\nset}$ of measurable spaces by setting $\Xbar{0} \eqdef \set{X}_0$ and $\Xalgbar{0} \eqdef \alg{X}_0$ and, for $n \in \nsetpos$, $\Xbar{n} \eqdef \set{X}_{n-1} \times \set{X}_n$ and $\Xalgbar{n} \eqdef \alg{X}_{n-1} \varotimes \alg{X}_n$. Moreover, elements in these spaces will be denoted by $\xbar{0} \eqdef x_0$, $\xbar{n} \eqdef (x_{n-1}, x_n)$ and we will also write $\xbar{n}^1 \eqdef x_{n - 1}$ and $\xbar{n}^2 \eqdef x_n$ to indicate the first and second element of $\xbar{n}$, respectively. Let $\bar{\Xinitprop} \eqdef \Xinitprop$ and define $\gbar{0}(\xbar{0}) \eqdef \am{0}(x_0)\rn{-1}(x_0)$. We also define the initial unnormalised measure $\bar{\Xinit} \in \meas{\Xalgbar{0}}$ such that for $A \in \Xalgbar{0}$, $\bar{\Xinit}(A) = \bar{\Xinitprop}(\gbar{0} \1{A})$ (notice that $\bar{\Xinit}\ne\Xinit$). Now, define, for each $n\in\nsetpos$, the auxiliary potential function
\begin{equation}\label{eq:gbar}
\gbar{n}(\xbar{n}) \eqdef \rn{n-1}(x_{n-1}, x_{n}) \frac{\am{n}(x_{n})}{\am{n-1}(x_{n-1})},\quad \xbar{n}\in\Xbar{n}.
\end{equation}
and the Markov transition kernel 
\begin{equation}\label{eq:hkbar}
\hkbar{n}(\xbar{n}, d\xbar{n+1}) \eqdef \delta_{\xbar{n}^2}(d\xbar{n+1}^1) \, \propker{n}(\xbar{n+1}^1, d\xbar{n+1}^2), \quad \xbar{n}\in\Xbar{n}.
\end{equation}
Thus, as established by Lemma~\eqref{lemma:apf_bpf_dist} below, Algorithm~\ref{algo:apf} may now be reinterpreted as a bootstrap particle filter operating on the auxiliary Feynman--Kac model given by \eqref{eq:gbar} and \eqref{eq:hkbar}. Algorithm~\ref{algo:bpf} shows one iteration of this procedure, which is initialised by sampling $(\epartbar{0}{i})_{i=1}^N$ from $\bar{\Xinitprop}^{\varotimes N}=\Xinitprop^{\varotimes N}$ and letting $\wgtbar{0}{i}\gets\gbar{0}(\epartbar{0}{i})$ for all $i$.
\begin{algorithm}[htb]
	\caption{Bootstrap particle filter on auxiliary Feynman--Kac model}\label{algo:bpf}
	\begin{algorithmic}[1]
		\Statex Input: $(\epartbar{n}{i},\wgtbar{n}{i})_{i=1}^\N$.
		\Statex Output $(\epartbar{n+1}{i},\wgtbar{n+1}{i},\Ibar{n+1}{i})_{i=1}^\N$.
		\For{$i=1\rightarrow\N$}
		\State draw $\Ibar{n+1}{i}\sim \catdist((\wgtbar{n}{\ell})_{\ell=1}^\N)$;
		\State draw $\epartbar{n+1}{i}\sim \hkbar{n}(\epartbar{n}{\Ibar{n+1}{i}},\cdot)$;
		\State set $\wgtbar{n+1}{i}\leftarrow  \gbar{n+1}(\epartbar{n+1}{i})$;
		\EndFor
	\end{algorithmic}
\end{algorithm}
\begin{lemma}\label{lemma:apf_bpf_dist}
	Let $(\epart{n}{i},\wgt{n}{i},\I{n}{i})_{i=1}^\N$ and $(\epartbar{n}{i},\wgtbar{n}{i},\Ibar{n}{i})_{i=1}^\N$ be the particles, weights, and resampling indices produced by $n$ iterations of Algorithms~\ref{algo:apf} and \ref{algo:bpf}, respectively.
%	, the latter on the Feynman--Kac model $ (\hkbar{n},\gbar{n})_{n\in\nset} $ defined by \eqref{eq:hkbar} and \eqref{eq:gbar}. 
	Then for every $n\in\nsetpos$,  
	\begin{equation}
	(\epart{n}{i},\wgt{n}{i}\am{n}(\epart{n}{i}),\I{n}{i})_{i=1}^\N\eqdist(\epartbar{n}{i,2},\wgtbar{n}{i},\Ibar{n}{i})_{i=1}^\N,
	\end{equation}
	where $\epartbar{n}{i,2}$ indicates the second component of $\epartbar{n}{i}\eqdef(\epartbar{n}{i,1},\epartbar{n}{i,2})$.
\end{lemma}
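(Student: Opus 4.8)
The plan is to build both particle systems on a common probability space and to prove, by induction over $n$, the stronger \emph{almost-sure} coupling statement that $\epart{n}{i}=\epartbar{n}{i,2}$, $\wgt{n}{i}\am{n}(\epart{n}{i})=\wgtbar{n}{i}$, and $\I{n}{i}=\Ibar{n}{i}$ for all $i\in\intvect{1}{\N}$; the claimed equality in distribution is then immediate. The crucial bookkeeping device is to identify the APF \emph{adjusted weight} $\wgt{n}{i}\am{n}(\epart{n}{i})$ with the bootstrap weight $\wgtbar{n}{i}$ and the APF particle $\epart{n}{i}$ with the second component $\epartbar{n}{i,2}$ of the auxiliary particle. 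It therefore suffices to show that this identification is preserved through one iteration when the two algorithms are driven by the same resampling uniforms and the same proposal draws. The induction hypothesis I would carry is stated at the level of the whole weighted array at time $n$, rather than for individual marginals, since the resampling step couples all particles through the shared weight vector.

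For the base case I would check the initialisation. Both samples are drawn i.i.d.\ from $\Xinitprop^{\tensprod\N}$ (recall $\bar{\Xinitprop}=\Xinitprop$ and $\Xbar{0}=\set{X}_0$, so that $\epartbar{0}{i,2}=\epartbar{0}{i}$), and under the coupling $\epart{0}{i}=\epartbar{0}{i}$ the two weight definitions agree, since $\wgt{0}{i}\am{0}(\epart{0}{i})=\rn{-1}(\epart{0}{i})\am{0}(\epart{0}{i})=\gbar{0}(\epartbar{0}{i})=\wgtbar{0}{i}$ by the definition of $\gbar{0}$.

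For the induction step, assuming the identification holds at time $n$, I would verify that the three operations of one iteration transfer correctly. First, since the resampling weights coincide, $\wgt{n}{\ell}\am{n}(\epart{n}{\ell})=\wgtbar{n}{\ell}$ for every $\ell$, the categorical laws on Line~2 of Algorithms~\ref{algo:apf} and~\ref{algo:bpf} are identical, so one may set $\I{n+1}{i}=\Ibar{n+1}{i}$. Second, the defining property \eqref{eq:hkbar} of $\hkbar{n}$ forces the first component of the child to be the deterministic copy $\epartbar{n+1}{i,1}=\epartbar{n}{\Ibar{n+1}{i},2}$ of the selected parent's second component, while its second component is drawn from $\propker{n}(\epartbar{n+1}{i,1},\cdot)$; under the coupling this is exactly the APF proposal $\epart{n+1}{i}\sim\propker{n}(\epart{n}{\I{n+1}{i}},\cdot)$, so $\epart{n+1}{i}=\epartbar{n+1}{i,2}$ and $\epartbar{n+1}{i,1}=\epart{n}{\I{n+1}{i}}$ can be maintained. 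Third, I would carry out the algebraic check on the weights: substituting these identities into the definition \eqref{eq:gbar} of $\gbar{n+1}$ gives
$$
\wgtbar{n+1}{i}=\gbar{n+1}(\epartbar{n+1}{i})=\rn{n}(\epart{n}{\I{n+1}{i}},\epart{n+1}{i})\,\frac{\am{n+1}(\epart{n+1}{i})}{\am{n}(\epart{n}{\I{n+1}{i}})}=\wgt{n+1}{i}\am{n+1}(\epart{n+1}{i}),
$$
the last equality being the APF weight update of Line~4 multiplied by $\am{n+1}(\epart{n+1}{i})$. This closes the induction.

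The argument is essentially bookkeeping and requires no analytic estimates; the main thing to get right is the two-component structure of the auxiliary model. One must track carefully that $\hkbar{n}$ deposits the parent's second component into the \emph{first} slot of the child, because it is precisely this copy that makes the factor $\am{n}(\epart{n}{\I{n+1}{i}})$ appearing in the APF weight cancel against the denominator $\am{n}(\xbar{n+1}^1)$ of $\gbar{n+1}$, so that the two weight recursions agree. The only other subtlety is to state the induction hypothesis for the full $\N$-particle array, ensuring that the resampling step—global across particles—transfers correctly rather than merely particle by particle.
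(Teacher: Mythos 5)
Your proposal is correct and follows essentially the same route as the paper's proof: the same induction on the full weighted array, the same identification of $\wgt{n}{i}\am{n}(\epart{n}{i})$ with $\wgtbar{n}{i}$ and of $\epart{n}{i}$ with $\epartbar{n}{i,2}$, and the identical algebraic cancellation of $\am{n}$ in the weight update. The only (cosmetic) difference is that you phrase the induction as an almost-sure coupling on a common probability space, whereas the paper argues directly with equality in distribution of the arrays; both are valid and yield the stated conclusion.
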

\begin{proof}
	First note that by construction, $(\epart{0}{i},\wgt{0}{i}\am{0}(\epart{0}{i}))_{i=1}^\N\eqdist(\epartbar{0}{i},\wgtbar{0}{i})_{i=1}^\N$. We may hence proceed by induction and assume that $(\epart{n}{i},\wgt{n}{i}\am{n}(\epart{n}{i}))_{i=1}^\N\eqdist(\epartbar{n}{i,2},\wgtbar{n}{i})_{i=1}^\N$ for some $n\in\nset$ (in the case $n=0$, we denote $\epartbar{0}{i,2} \eqdef \epartbar{0}{i}$). In Algorithm~\ref{algo:apf} we draw the conditionally i.i.d. resampling indices $(\I{n+1}{i})_{i=1}^N$ according to
	\begin{equation}
	\I{n+1}{i} \sim \catdist((\wgt{n}{\ell}\am{n}(\epart{n}{\ell}))_{\ell=1}^\N),\quad i\in\intvect{1}{N},
	\end{equation}
	whereas in Algorithm~\ref{algo:bpf} we draw $(\Ibar{n+1}{i})_{i=1}^N$ according to
	\begin{equation}
	\Ibar{n+1}{i} \sim \catdist((\wgtbar{n}{\ell})_{\ell=1}^\N) \eqdist \catdist((\wgt{n}{\ell}\am{n}(\epart{n}{\ell}))_{\ell=1}^\N),\quad i\in\intvect{1}{N},
	\end{equation}
	which implies that $(\I{n+1}{i})_{i=1}^N\eqdist(\Ibar{n+1}{i})_{i=1}^N$. Furthermore, in Algorithm~\ref{algo:apf} the particles are propagated by sampling, for $i \in\intvect{1}{N}$, 
	\begin{equation}
	\epart{n+1}{i}\sim \propker{n}(\epart{n}{\I{n+1}{i}}, dx_{n+1}),
	\end{equation}
	whereas in Algorithm~\ref{algo:bpf}, for $i \in\intvect{1}{N}$, 
	\begin{equation}
	\epartbar{n+1}{i}\sim \hkbar{n}(\epartbar{n}{\Ibar{n+1}{i}},d\xbar{n+1})
	%=\delta_{\epartbar{n}{\Ibar{n+1}{i},2}}(d\xbar{n+1}^1) \, \propker{n}(\xbar{n+1}^1, d\xbar{n+1}^2)
	%,\quad i\in\intvect{1}{N},
	\end{equation}
	which corresponds to assigning
	\begin{equation}
	\epartbar{n+1}{i,1}\gets \epartbar{n}{\Ibar{n+1}{i},2} %,\quad i\in\intvect{1}{N},
	\end{equation}
	and then sampling 
	\begin{equation}
	\epartbar{n+1}{i,2}\sim \propker{n}(\epartbar{n}{\Ibar{n+1}{i},2}, d\xbar{n+1}^2).%,\quad i\in\intvect{1}{N}.
	\end{equation}
	Using the induction hypothesis and the previous result on the resampling indices, this implies that $(\epart{n+1}{i})_{i=1}^\N\eqdist(\epartbar{n+1}{i,2})_{i=1}^\N$. Finally, since the weights are functions only of the particles and the indices, it holds, for $i\in\intvect{1}{N}$,
	\begin{multline}
	\wgt{n+1}{i}\am{n+1}(\epart{n+1}{i})= \rn{n}(\epart{n}{\I{n+1}{i}}, \epart{n+1}{i}) \frac{\am{n+1}(\epart{n+1}{i})}{\am{n}(\epart{n}{\I{n+1}{i}})}
	\\
	\eqdist\rn{n}(\epartbar{n}{\Ibar{n+1}{i},2}, \epartbar{n+1}{i,2}) \frac{\am{n+1}(\epartbar{n+1}{i,2})}{\am{n}(\epartbar{n}{\Ibar{n+1}{i},2})}=\rn{n}(\epartbar{n+1}{i,1}, \epartbar{n+1}{i,2})\frac{\am{n+1}(\epartbar{n+1}{i,2})}{\am{n}(\epartbar{n+1}{i,1})}
	\\
	=\gbar{n+1}(\epartbar{n+1}{i})=\wgtbar{n+1}{i}, 
	\end{multline}
	which shows that $(\epart{n+1}{i},\wgt{n+1}{i}\am{n+1}(\epart{n+1}{i}),\I{n+1}{i})_{i=1}^\N\eqdist(\epartbar{n+1}{i,2},\wgtbar{n+1}{i},\Ibar{n+1}{i})_{i=1}^\N$. This completes the proof. 
\end{proof}
We are now ready to prove Theorem~\ref{thm:consistency-fixed-lag}.
\begin{proof}[Proof of Theorem~\ref{thm:consistency-fixed-lag}]
	Consider a given iteration index $n\in\nset$; if we are interested in estimating the variance of the particle estimator after $n$ iterations, we may assume that $\am{n}\equiv 1$, since the adjustment multiplier $\am{n}$ only influences the distribution of the particles of the APF at the next iterations $n + 1, n + 2, \ldots$ Now, the estimate of $\post{n} \func{n}$ produced by Algorithm~\ref{algo:apf} for a given $\func{n}\in\bmf{\alg{X}_n}$ can be interpreted as a statistically equivalent estimate formed by Algorithm~\ref{algo:bpf} by defining $\funcbar{n} : \Xbar{n} \ni \xbar{n} \mapsto \func{n}(\xbar{n}^2)$. Then by Lemma~\ref{lemma:apf_bpf_dist},
	\begin{equation}
	\postbar[N]{n}\funcbar{n} \eqdef \sum_{i=1}^{N}\frac{\wgtbar{n}{i}}{\wgtsumbar{n}}\funcbar{n}(\epartbar{n}{i})\eqdist\sum_{i=1}^{N}\frac{\wgt{n}{i}}{\wgtsum{n}}\func{n}(\epart{n}{i})=\post[N]{n}\func{n}.
	\end{equation}
	Next, we also notice that the expectations $\postbar{n}\funcbar{n}$ and $\post{n}\func{n}$ coincide; indeed, for every $m\in\nsetpos$, we may define 
	\begin{multline}
	\predkerbar{\ell}{m-1} \funcbar{m}(\xbar{\ell}) \eqdef \idotsint \funcbar{m}(\xbar{m}) \prod_{k = \ell}^{m - 1}  \gbar{k+1}(\xbar{k + 1}) \, \hkbar{k}(\xbar{k}, d \xbar{k + 1}), \\\quad \xbar{\ell} \in \Xbar{\ell}, \funcbar{m} \in \bmf{\Xbar{m}}, 
	\end{multline}
	for $\ell \in \intvect{1}{n - 1}$ and $ \predkerbar{\ell}{m-1}=\mathrm{id} $ otherwise; 
	moreover, for any function $\funcbar{m} \in \bmf{\Xalgbar{m}}$ such that for all $\xbar{m}\in\Xbar{m}$, $\funcbar{m}(\xbar{m})=\func{m}(\xbar{m}^2)$ for some $\func{m}\in\bmf{\alg{X}_{m}}$ it holds, for every $\ell \in \intvect{0}{m - 1}$,
	\begin{align*}
	\lefteqn{\predkerbar{\ell}{m-1}\funcbar{m}(\xbar{\ell})} %&\eqdef \int\cdots\int \funcbar{m}(\xbar{m})\prod_{k=\ell}^{m-1}\gbar{k+1}(\xbar{k+1})\,\hkbar{k}(\xbar{k},d\xbar{k+1})
	\\
	&=\int\cdots\int \func{m}(\xbar{m}^2)\prod_{k=\ell}^{m-1}\rn{k}(\xbar{k+1}^1, \xbar{k+1}^2)\frac{\am{k+1}(\xbar{k+1}^2)}{\am{k}(\xbar{k+1}^1)}\delta_{\xbar{k}^2}(d\xbar{k+1}^1)\,\propker{k}(\xbar{k+1}^1,d\xbar{k+1}^2)
%	\\=\int\cdots\int \func{m}(\xbar{m}^2)\prod_{k=\ell}^{m-1}\rn{k}(\xbar{k}^2, \xbar{k+1}^2)\frac{\am{k+1}(\xbar{k+1}^2)}{\am{k}(\xbar{k}^2)}\,\propker{k}(\xbar{k}^2,d\xbar{k+1}^2)
	\\
	&=\int\cdots\int \func{m}(x_{m})\prod_{k=\ell}^{m-1}\rn{k}(x_k, x_{k+1})\frac{\am{k+1}(x_{k+1})}{\am{k}(x_k)}\,\propker{k}(x_k,dx_{k+1})
	\\
	%&=\int\cdots\int \func{m}(x_{m})\frac{\am{m}(x_{m})}{\am{\ell}(x_\ell)}\prod_{k=\ell}^{m-1}\lk{k}(x_k, dx_{k+1})
	%\\
	&=\frac{1}{\am{\ell}(x_\ell)}\lk{\ell:m-1}(\am{m}\func{m})(x_\ell).
	\end{align*}
	This implies that   
	\begin{multline}
	\postbar{m}\funcbar{m} = \frac{\bar{\Xinit}\predkerbar{0}{m-1}\funcbar{m}}{\bar{\Xinit}\predkerbar{0}{m-1}\1{\Xbar{m}}}
	%=\frac{\bar{\Xinitprop}\{\gbar{0}\predkerbar{0}{m-1}\funcbar{m}\}}{\bar{\Xinitprop}\{\gbar{0}\predkerbar{0}{m-1}\1{\Xbar{m}}\}}
	=\frac{\Xinitprop(\rn{-1}\am{0}\frac{1}{\am{0}}\lk{0:m-1}(\am{m}\func{m}))}{\Xinitprop(\rn{-1}\am{0}\frac{1}{\am{0}}\lk{0:m-1}\am{m})}
	\\=\frac{\Xinit\lk{0:m-1}(\am{m}\func{m})}{\Xinit\lk{0:m-1}\am{m}}=\frac{\post{m}(\am{m}\func{m})}{\post{m}\am{m}}.
	\end{multline} 
	Thus, since we are assuming that $\am{n}\equiv1$, it follows that $\postbar{n}\funcbar{n} = \post{n}\func{n}$. Applying the CLT in Proposition~\ref{prop:clt} to the auxiliary particle model yields, since $(\gbar{n})_{n\in\nset}$ are all bounded by Assumption~\ref{assum:bounded}, as $N\to \infty$,
	\begin{equation}
	\sqrt{N}(\postbar[N]{n}\funcbar{n}-\postbar{n}\funcbar{n})\convd \bar{\sigma}_n(\funcbar{n})Z,
	\end{equation}
	for some asymptotic variance $\bar{\sigma}_n^2(\funcbar{n})$. However, since $\postbar[N]{n}\funcbar{n}\eqdist\post[N]{n}\func{n}$ and $\postbar{n}\funcbar{n}=\post{n}\func{n}$, it necessarily holds that $\bar{\sigma}_n^2(\funcbar{n})=\sigma_n^2(\func{n})$, where $\sigma_n^2(\funcbar{n})$ is given by \eqref{eq:asvar} with $\ell=0$. It is also easily checked that the equality holds for all the truncated variances. This is done by first  recalling that Algorithm~\ref{algo:bpf} may be seen as a particular case of the general framework under consideration, for which the Radon--Nikodym derivatives with respect to the initial proposal density and the proposal Markov kernels are given by the potential functions and, moreover, the adjustment multipliers are all assumed to be equal to one. Thus, we may write, for all $\ell\in\intvect{0}{n}$,
	\begin{align}
	\lefteqn{\bar{\sigma}_{\ell,n}^2(\funcbar{n})} \\
	&=\frac{\bar{\Xinit}(\gbar{0}\{\predkerbar{0}{n-1}(\funcbar{n}-\postbar{n} \funcbar{n})\}^2)}{(\bar{\Xinit}\predkerbar{0}{n-1}\1{\Xbar{n}})^2}\1{\{\ell=0\}}
	\\&\quad+\sum_{m=(\ell-1)\vee 0}^{n-1}\frac{\postbar{m}\bar{\mathbf{L}}_m(\gbar{m+1}\{\predkerbar{m+1}{n-1}(\funcbar{n}-\postbar{n} \funcbar{n})\}^2)}{(\postbar{m}\predkerbar{m}{n-1}\1{\Xbar{n}})^2}
	\\&=\frac{\bar{\Xinitprop}(\gbar{0}^2\{\am{0}^{-1}\lk{0:n-1}(\func{n}-\post{n} \func{n})\}^2)}{\{\bar{\Xinitprop}(\gbar{0}\am{0}^{-1}\lk{0:n-1}\1{\set{X}_n})\}^2}\1{\{\ell=0\}}
	\\&\quad+\sum_{m=(\ell-1)\vee 0}^{n-1}\post{m}\am{m}\frac{\post{m}\{\am{m}\am{m}^{-1}\lk{m}(\am{m+1}\gbar{m+1}\{\am{m+1}^{-1}\lk{m+1:n-1}(\func{n}-\post{n} \func{n})\}^2)\}}{\{\post{m}(\am{m}\am{m}^{-1}\lk{m:n-1}\1{\set{X}_{n}})\}^2}
	\\&=\frac{\Xinit(\rn{-1}\{\lk{0:n-1}(\func{n}-\post{n} \func{n})\}^2)}{(\Xinit\lk{0:n-1}\1{\set{X}_n})^2}\1{\{\ell=0\}}
	\\&\quad+\sum_{m=(\ell-1)\vee 0}^{n-1}\post{m}\am{m}\frac{\post{m}\lk{m}(\am{m}^{-1}\rn{m}\{\lk{m+1:n-1}(\func{n}-\post{n} \func{n})\}^2)}{(\post{m}\lk{m:n-1}\1{\set{X}_{n}})^2}, 
	\end{align}
	and by \eqref{eq:asvar} we may conclude that 
	\begin{equation}
	\bar{\sigma}_{\ell,n}^2(\funcbar{n}) = \asvar{\ell,n}(\func{n}). \label{eq:compare-truncvar}
	\end{equation}
	Finally, recall that for every $m\in\intvect{0}{n}$, the Enoch indices $(\enochbar{m,n}{i})_{i=1}^N$ are, for all $k\in\intvect{0}{n}$ and $i\in\intvect{1}{N}$, recursively defined by 
	\begin{equation}
		\enochbar{m,k}{i} \eqdef
		\begin{cases}
		i\quad &\text{for }m=k,\\\enochbar{m,k-1}{\Ibar{k}{i}}&\text{for }m<k.
		\end{cases}
	\end{equation}
	Consequently, these indices are functions of the resampling indices, and by Lemma~\ref{lemma:apf_bpf_dist} it holds that $(\enochbar{m,n}{i})_{i=1}^N\eqdist (\enoch{m,n}{i})_{i=1}^N$ (where the latter are the Enoch indices of the particle system generated by Algorithm~\ref{algo:pfenoch}). Now, if we let, for any $\lambda\in \nset$, $\hat{\bar{\sigma}}_{n,\lambda}^2(\funcbar{n})$ be the lag-based variance estimator for the auxiliary bootstrap particle model, we have, again as a consequence of Lemma~\ref{lemma:apf_bpf_dist}, recalling that $\funcbar{n}(\xbar{n}) = \func{n}(\xbar{n}^2)$ and $\am{n}\equiv 1$,
	\begin{align}
	\hat{\bar{\sigma}}_{n,\lambda}^2(\funcbar{n}) &\eqdef N\sum_{i=1}^N\left(\sum_{j:\enochbar{n\langle\lambda\rangle,n}{j}=i}\frac{\wgtbar{n}{j}}{\wgtsumbar{n}}\{\funcbar{n}(\epartbar{n}{j})-\postbar[N]{n}\funcbar{n}\}\right)^2 \\
	\\
	&\eqdist N\sum_{i=1}^N\left(\sum_{j:\enoch{n\langle\lambda\rangle,n}{j}=i}\frac{\wgt{n}{j}}{\wgtsum{n}}\left\{\func{n}(\epart{n}{j})-\post[N]{n}\func{n}\right\}\right)^2, 
	\end{align}
	and by \eqref{eq:OD} we may thus conclude that 
	\begin{equation} \label{eq:compare-estim}
	\hat{\bar{\sigma}}_{n,\lambda}^2(\funcbar{n}) \eqdist \asvarp{n,\lambda}(\func{n}). 
	\end{equation}
	Finally, note that the assumptions of the theorem imply that for all $n\in\nset$ and $\xbar{n}\in\Xbar{n}$, $\gbar{n}(\xbar{n})>0$ and $\sup_{\xbar{n}\in\Xbar{n}}\gbar{n}(\xbar{n})<\infty$, which means that the assumptions of \cite[Proposition~8]{olsson:douc:2017} are satisfied as well. This implies that $\hat{\bar{\sigma}}_{n,\lambda}^2(\funcbar{n})\convp \bar{\sigma}_{n\langle\lambda\rangle,n}^2(\funcbar{n})$ as $N \to \infty$, and combining this result with \eqref{eq:compare-truncvar} and \eqref{eq:compare-estim} yields immediately that $ \asvarp{n,\lambda}(\func{n})\convp\asvar{n\langle\lambda\rangle,n}(\func{n})$ as $N \to \infty$. The proof is complete. 
\end{proof}

\section{Extensions to APFs with adaptive resampling}\label{sec:appendix2}
\subsection{Model extension}\label{subsec:modelext}
The purpose of this section is to show, using arguments similar to those in  \cite[Appendix~C.1]{mastrototaro:olsson:alenlov:2021}, that the results for the standard bootstrap particle filter can be used directly to establish the asymptotic normality in the case where the selection schedule is deterministic and determined by to some given sequence $(\res{n})_{n\in\nset}$ of indicators, where $\res{n} = 1$ means that resampling happens at time $n$. To do this, we will extend the model of Section~\ref{subsec:model} by allowing the states to be represented by paths different lengths determined by $(\res{n})_{n\in\nset}$. First, define the resampling times $n_m \eqdef \min\{n \in \nset : \nres{n+1} = m + 1\}$, $m \in \nset$, where $(\nres{n})_{n \in \nset}$ are defined in Section~\ref{sec:extension:adaptive:resampling}, and, by convention, let $n_{-1} \eqdef -1$. Then we introduce the sequence $(\Xp{m}, \Xpalg{m})_{m \in \nset}$ of measurable spaces, where $\Xp{m} \eqdef \set{X}_{n_{m-1}+1 }\times \set{X}_{n_{m-1}+2} \times \cdots \times \set{X}_{n_{m}}$ and $\Xpalg{m} \eqdef \alg{X}_{n_{m-1}+1}\varotimes \alg{X}_{n_{m-1}+2}\varotimes \cdots\varotimes \alg{X}_{n_m}$. As a general rule, we use boldface to indicate that a quantity is related to such an extended path space; \eg, we let $\xvec{m} \eqdef x_{n_{m-1}+1:n_m}$ indicate a generic element in $\Xp{m}$ and define, for every $m \in \nsetpos$ and $k\in\intvect{n_{m-1}+1}{n_m}$, the projection 
\begin{equation}
	\proj_k^m : \Xp{m} \ni \xvec{m} \mapsto x_{k} \in \set{X}_{k}. 
\end{equation}
We then introduce the multi-step unnormalised transition kernels $(\lkm{m})_{m \in \nset}$, obtained by tensor-products of the single-step Markov transition kernels $(\lk{n})_{n\in\nset}$ of Section~\ref{subsec:model}; more precisely, for every $m \in \nset$, 
\begin{multline}
	\lkm{m} \boldsymbol{h}(\xvec{m}) \eqdef \lk{n_m} \varotimes \lk{n_m+1} \varotimes \cdots \varotimes \lk{n_{m+1}-1} \boldsymbol{h} (\proj_{n_m}^m(\xvec{m})), \\
	\quad (\xvec{m}, \boldsymbol{h}) \in \Xp{m} \times \bmf{\Xpalg{m + 1}}. 
\end{multline} 
Note that $\lkm{m}$ depends only on $x_{n_m}$ and is constant with respect to the previous states. Moreover, note that if selection is performed at each iteration, then $n_m =m$ for all $m$, implying $\lkm{m} = \lk{m}$. The initial measure $\Xinitm$ on $\Xpalg{0}$ is defined as $\Xinitm \eqdef \Xinit \varotimes \lk{0} \varotimes \cdots \varotimes \lk{n_0-1}$. Again, for compactness, we write $\lkm{k:\ell} \eqdef \lkm{k}\cdots\lkm{\ell}$ whenever $k \in \intvect{0}{\ell}$, otherwise $ \lkm{k:\ell}=\mathrm{id} $. Next, for every $n\in\nset$, we define the distribution flow 
$$
\postm{m}:\bmf{\Xpalg{m}}\ni \funcm{m}\mapsto \frac{\Xinitm\lkm{0:m-1}\funcm{m}}{\Xinitm\lkm{0:m-1}\1{\Xp{m}}}.
$$
In order to apply an APF to this model we introduce auxiliary functions $(\amm{m})_{m\in\nset}$ defined by 
$$
\amm{m}(\xvec{m}) \eqdef \am{n_m}(\proj_{n_m}^m(\xvec{m}))=\am{n_m}(x_{n_m}), \quad \xvec{m}\in\Xp{m}. 
$$
After resampling, the particles are propagated according to the Markov proposal kernels $(\propkerm{m})_{m \in \nset}$, where 
\begin{multline*}
\propkerm{m} \boldsymbol{h}(\xvec{m}) \eqdef \propker{n_m} \varotimes \propker{n_m+1} \varotimes \cdots \varotimes \propker{n_{m+1}-1} \boldsymbol{h} (\proj_{n_m}^m(\xvec{m})), \\ 
\quad (\xvec{m}, \boldsymbol{h}) \in \Xp{m} \times \bmf{\Xpalg{m + 1}}. 
\end{multline*} 
Under the assumptions of the paper, each probability measure $\propkerm{n}(\xvec{m},\cdot)$, $\xvec{m}\in\Xp{m}$, is absolutely continuous with respect to $\lkm{m}(\xvec{m},\cdot)$. Hence, for every $\xvec{m}$, we may let $\rnm{m}(\xvec{m}, \xvec{m + 1}) = d \lkm{m}(\xvec{m},\cdot) / d \propkerm{n}(\xvec{m},\cdot)$, $\xvec{m + 1} \in \Xp{m + 1}$, be the corresponding Radon--Nikodym derivative.  
%such that, for every $\xvec{m}\in\Xp{m}$ and  $\boldsymbol{h}\in\bmf{\Xpalg{n+1}}$,
%\begin{equation}
%\int \boldsymbol{h}(\xvec{m+1})\, \lkm{m}(\xvec{m},d\xvec{m+1})=\int \boldsymbol{h}(\xvec{m+1}) \rnm{m}(\xvec{m},\xvec{m+1})\,\propkerm{m}(\xvec{m},d\xvec{m+1}).
%\end{equation}
%We 
Moreover, it is easily seen that for 
%Notice that since 
%\begin{align}
%\int \boldsymbol{h}(\xvec{m+1})\, \lkm{m}(\xvec{m},d\xvec{m+1})&=\idotsint \boldsymbol{h}(\xvec{m+1})\,\prod_{k=n_m}^{n_{m+1}-1}\lk{k}(x_k,dx_{k+1})
%\\
%&=\idotsint \boldsymbol{h}(\xvec{m+1}) \prod_{k=n_m}^{n_{m+1}-1}\rn{k}(x_k,x_{k+1})\,\propker{k}(x_k,dx_{k+1}) \\
%&=\int \boldsymbol{h}(\xvec{m+1}) \prod_{k=n_m}^{n_{m+1}-1}\rn{k}(x_k,x_{k+1})\,\propkerm{m}(\xvec{m},d\xvec{m+1}), \label{eq:rn_ext}
%\end{align}
%it holds that for 
$\propkerm{m}(\xvec{m}, \cdot)$-almost all $\xvec{m+1} \in \Xp{m+1}$, 
$$ 
\rnm{m}(\xvec{m},\xvec{m+1})=\prod_{k=n_m}^{n_{m+1}-1}\rn{k}(x_k,x_{k+1}).  
$$ 
Finally, we define the proposal probability measure $\Xinitpropm =\nu\tensprod\propker{0}\tensprod\cdots\tensprod\propker{n_0-1}$. This measure is absolutely continuous with respect to $\Xinitm$, and we let $\rnm{-1}$ be the Radon--Nikodym derivative of $\Xinitm$ with respect to $\Xinitpropm$. It is easily shown that  $\rnm{-1}(\xvec{0})=\rn{-1}(x_0)\prod_{k=0}^{n_0-1}\rn{k}(x_k,x_{k+1})$ for $\Xinitpropm$-almost all $\xvec{0} \in \Xp{0}$.

Algorithm \ref{algo:pfmulti} shows one iteration of the APF for the extended model, which is initialised by drawing $(\epartm{0}{i})_{i=1}^N$ from $\Xinitpropm^{\varotimes N}$ and letting $\wgtm{0}{i} \gets \rnm{-1}(\epartm{0}{i})$ for all $i\in\intvect{1}{N}$. \update{ Proposition~\ref{prop:original:vs:extended} connects the output of this algorithm to that of Algorithm~\ref{algo:adaptivePF} (cf. \cite[Proposition~C.1]{mastrototaro:olsson:alenlov:2021}, which states a similar result in the context of particle-based additive smoothing).}

\begin{algorithm}[htb]
	\caption{Bootstrap particle with systematic selection in the extended model.}\label{algo:pfmulti}
	\begin{algorithmic}[1] 
		\Statex \textbf{Data}: $(\epartm{m}{i},\wgtm{m}{i})_{i=1}^\N$.
		\Statex \textbf{Result}: $(\epartm{m+1}{i},\wgtm{m+1}{i},\I{m+1}{i})_{i=1}^\N$
		\For{$i=1\rightarrow\N$}
		\State draw $\I{m+1}{i} \sim \catdist((\wgtm{m}{\ell}\amm{m}(\epartm{m}{\ell}))_{\ell=1}^\N)$;
		\State draw $\epartm{m + 1}{i} \sim \propkerm{m}(\epartm{m}{\I{m + 1}{i}},\cdot)$;
		\State set $\displaystyle \wgtm{m+1}{i} \gets \rnm{m}(\epartm{m}{\I{m + 1}{i}},\epartm{m + 1}{i})/\amm{m}(\epartm{m}{\I{m + 1}{i}})$;
		\EndFor
	\end{algorithmic}
\end{algorithm}

\begin{proposition} \label{prop:original:vs:extended}
Let $(\res{n})_{n \in \nset}$ be a deterministic selection schedule and let $(n_m)_{m \in \nset}$ be the induced selection times. Furthermore, let $(\epart{n_m}{i}, \wgt{n_m}{i})_{i = 1}^\N$, $m \in \nset$, be a subsequence of weighted samples generated by Algorithm~\ref{algo:adaptivePF} for the original model and let $(\epartm{m}{i}, \wgtm{m}{i})_{i = 1}^\N$, $m \in \nset$, be weighted samples generated by Algorithm~\ref{algo:pfmulti} for the extended model. Then for every $m \in \nset$, 
$$
(\proj_{n_m}^m(\epartm{m}{i}), \wgtm{m}{i})_{i = 1}^\N \stackrel{\mathcal{D}}{=} (\epart{n_m}{i}, \wgt{n_m}{i})_{i = 1}^\N. 
$$
 \end{proposition}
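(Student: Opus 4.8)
The plan is to reuse the block-collapsing reduction from the proof of Lemma~\ref{lemma:apf_bpf_dist}: a whole block of consecutive iterations of Algorithm~\ref{algo:adaptivePF} spanning the transitions from time $n_m$ to time $n_{m+1}$ will be identified with a single systematic-selection step $m \to m+1$ of Algorithm~\ref{algo:pfmulti} on the extended path space. Between two consecutive selection times the original filter resamples exactly once (at time $n_m$, where $\res{n_m}=1$) and thereafter merely propagates the particles without selection (since $\res{k}=0$ for $n_m < k < n_{m+1}$); hence the ancestry over such a block collapses to one resampling event followed by a deterministic forward evolution, which is precisely what one step of the extended algorithm encodes. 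I would establish the claim by induction on $m$.

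For the base case $m = 0$, both algorithms start from i.i.d.\ draws. In the original filter no selection occurs before $n_0$, so $\epart{n_0}{i}$ is the endpoint of a trajectory $x_{0:n_0}$ with $x_0 \sim \Xinitprop$ and $x_{k+1} \sim \propker{k}(x_k,\cdot)$, carrying the accumulated weight $\rn{-1}(x_0)\prod_{k=0}^{n_0-1}\rn{k}(x_k,x_{k+1})$. The extended initialisation draws $\epartm{0}{i}$ from $\Xinitpropm = \Xinitprop \tensprod \propker{0} \tensprod \cdots \tensprod \propker{n_0-1}$ and sets $\wgtm{0}{i} = \rnm{-1}(\epartm{0}{i})$; invoking the stated factorisation $\rnm{-1}(\xvec{0}) = \rn{-1}(x_0)\prod_{k=0}^{n_0-1}\rn{k}(x_k,x_{k+1})$ together with $\proj_{n_0}^0(\xvec{0}) = x_{n_0}$ shows that the projected endpoint and its weight have the same law, and since both families are i.i.d.\ across $i$ the vectors agree in distribution.

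For the inductive step, assume $(\proj_{n_m}^m(\epartm{m}{i}), \wgtm{m}{i})_{i=1}^\N \eqdist (\epart{n_m}{i}, \wgt{n_m}{i})_{i=1}^\N$. I would argue in three sub-steps. First, the selection indices match: the extended resampling weights $\wgtm{m}{\ell}\amm{m}(\epartm{m}{\ell})$ depend on the extended particle only through its endpoint, since $\amm{m}(\epartm{m}{\ell}) = \am{n_m}(\proj_{n_m}^m(\epartm{m}{\ell}))$, so by the induction hypothesis they are distributed as the original weights $\wgt{n_m}{\ell}\am{n_m}(\epart{n_m}{\ell})$ used at time $n_m$, whence $(\I{m+1}{i})_{i=1}^\N \eqdist (\I{n_m+1}{i})_{i=1}^\N$. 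Second, the mutation matches: $\propkerm{m}$ is by definition $\propker{n_m}\tensprod\cdots\tensprod\propker{n_{m+1}-1}$ applied from the endpoint of the resampled extended particle, which is exactly the law of the successive non-resampling propagations $\epart{k+1}{i}\sim\propker{k}(\epart{k}{i},\cdot)$, $k = n_m,\ldots,n_{m+1}-1$, in the original filter, so projecting onto the last coordinate recovers $\epart{n_{m+1}}{i}$. Third, the weights match: with $\res{n_m}=1$ and $\res{k}=0$ for $n_m < k < n_{m+1}$, the original update telescopes to $\rn{n_m}(\cdots)/\am{n_m}(\cdots)\prod_{k=n_m+1}^{n_{m+1}-1}\rn{k}(\cdots)$, which coincides with $\rnm{m}(\epartm{m}{\I{m+1}{i}},\epartm{m+1}{i})/\amm{m}(\epartm{m}{\I{m+1}{i}})$ after using $\rnm{m} = \prod_{k=n_m}^{n_{m+1}-1}\rn{k}$.

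The delicate part is the weight bookkeeping in the third sub-step: one must verify that the exponent $1-\res{k}$ keeps the previous weight alive precisely across the non-resampling transitions while the factors $\am{k}^{\res{k}}$ contribute only at the single selection time $n_m$, and then check that the resulting telescoping product aligns with the single extended Radon–Nikodym factor $\rnm{m}$ and the single auxiliary factor $\amm{m}$. The induction hypothesis is deliberately stated only in terms of the endpoint $\proj_{n_m}^m(\epartm{m}{i})$ and the weight $\wgtm{m}{i}$, since these are exactly the quantities on which the selection weights, the mutation, and the new weights depend; this is what makes the forward step self-contained and closes the induction.
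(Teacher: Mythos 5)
Your proof is correct and follows essentially the same route as the paper's: an induction on $m$ in which one block of Algorithm~\ref{algo:adaptivePF} iterations (a single selection at $n_m$ followed by weight-accumulating propagation up to $n_{m+1}$) is matched against one step of Algorithm~\ref{algo:pfmulti}, checking in turn the resampling indices, the mutation law, and the telescoped weights. The only difference is that you spell out the base case explicitly, which the paper dismisses as "established similarly."
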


\begin{proof}
We proceed by induction. Assume that we have generated a sample $(\epart{n_m}{i}, \wgt{n_m}{i})_{i = 1}^\N$ by means of $n_m$ iterations of Algorithm~\ref{algo:adaptivePF}, and that the claim holds true for this sample. We now examine the output of iteration $n_{m + 1}$. Since we know that $\rho_{n_m} = 1$, the sample at time $n_m + 1$ is produced by selection and mutation; thereafter, selection is not performed until time $n_{m+1}$ (since $\rho_k = 0$ for all $k \in \intvect{n_m + 1}{n_{m + 1} - 1}$). Hence, each particle path $\epart{n_m + 1:n_{m + 1}}{i}$ will be generated according to 
\begin{equation} \label{eq:path:distribution-1}
\epart{n_m + 1:n_{m + 1}}{i} \sim \propker{n_m} \varotimes \cdots \varotimes  \propker{n_m - 1}(\epart{n_m}{\I{n_{m}+1}{i}}, \cdot) 
\end{equation}
and assigned the importance weight 
\begin{equation} \label{eq:weights-1}
\wgt{n_{m + 1}}{i} = \frac{\rn{n_m}(\epart{n_m}{\I{n_{m}+1}{i}},\epart{n_m+1}{i})}{\am{n_m}(\epart{n_m}{\I{n_{m}+1}{i}})}\prod_{k=n_{m}+1}^{n_{m+1}-1}\rn{k}(\epart{k}{i},\epart{k+1}{i}), 
\end{equation}
where 
\begin{equation} \label{eq:I:dist-1}
\I{n_{m}+1}{i} \sim \catdist((\wgt{n_m}{\ell}\am{n_m}(\epart{n_m}{\ell}))_{\ell = 1}^\N).
\end{equation}
Now, on the other hand, by applying one iteration of Algorithm~\ref{algo:pfmulti} to the sample $(\epartm{m}{i}, \wgtm{m}{i})_{i = 1}^\N$ we obtain path-particles $\epartm{m + 1}{i} = \epart{n_m + 1:n_{m+ 1}}{i}$, $i \in \intvect{1}{\N}$, with distribution  
\begin{equation} \label{eq:eq:path:distribution-2}
\epartm{m + 1}{i} = \epart{n_m + 1:n_{m+ 1}}{i} \sim 
\propkerm{m}(\epartm{m}{\I{m+1}{i}}, \cdot) = \propker{n_m} \varotimes \cdots \varotimes \propker{n_m - 1}(\proj_{n_m}^m(\epartm{m}{\I{m+1}{i}}), \cdot),
\end{equation}
whose associated weights are
\begin{align} 
\wgtm{m+1}{i} &= \frac{\rnm{m}( \epartm{m}{\I{m+1}{i}},\epartm{m + 1}{i})}{\amm{m}(\epartm{m}{\I{m+1}{i}})}=  \frac{\rn{n_m}(\epart{n_m}{\I{m+1}{i}},\epart{n_m+1}{i})}{\am{n_m}(\epart{n_m}{\I{m+1}{i}})}\prod_{k=n_{m}+1}^{n_{m+1}-1}\rn{k}(\epart{k}{i},\epart{k+1}{i})\label{eq:weights-2}
\end{align}
and where 
\begin{equation} \label{eq:I:dist-2}
\I{m+1}{i} \sim \catdist((\wgtm{m}{\ell}\amm{m}(\epartm{m}{\ell}))_{\ell = 1}^\N) = %\eqdist 
 \catdist((\wgtm{m}{\ell}\am{n_m}(\proj_{n_m}^m(\epartm{m}{\ell}))_{\ell = 1}^\N).
%\catdist(\{ \wgt{n_m}{\ell}\am{n_m}(\proj_{n_m}^m(\epartm{m}{\ell}))\}_{\ell = 1}^\N). 
\end{equation}
Finally, by comparing \eqref{eq:path:distribution-1} and \eqref{eq:eq:path:distribution-2}, \eqref{eq:weights-1} and \eqref{eq:weights-2}, \eqref{eq:I:dist-1} and \eqref{eq:I:dist-2}, and applying the induction hypothesis,  
$$
(\proj_{n_m+1}^{m + 1}(\epartm{m + 1}{i}), \wgtm{m + 1}{i})_{i = 1}^\N \stackrel{\mathcal{D}}{=} (\epart{n_{m + 1}}{i}, \wgt{n_{m + 1}}{i})_{i = 1}^\N. 
$$
The base case $m = 0$ is established similarly. This completes the proof. 
\end{proof}

Thus, in the case of a deterministic---but possibly irregular---resampling schedule, the APF may be reinterpreted as a particle model with systematic resampling operating on the auxiliary model described above. As the CLT in Proposition~\ref{prop:clt} is a general result, valid for arbitrary models and APFs (with systematic resampling), it holds also for the extended model and Algorithm~\ref{algo:pfmulti}, and the asymptotic normality of the output of Algorithm~\ref{algo:pfmulti} follows immediately. This finding is summarised by the following proposition. 

\begin{proposition} \label{prop:cltext}
	Assume that the functions $\rnm{-1}$, $(\rnm{m}/\amm{m})_{m\in\nset}$, and $(\amm{m})_{m\in\nset}$ are all bounded. Then for every $m\in\nset$ and $\funcm{m}\in\bmf{\Xpalg{m}}$, as $N\to\infty$,
	\begin{equation}\label{eq:CLTmulti}
	\sqrt{N}\left(\sum_{i=1}^N\frac{\wgtm{m}{i}}{\wgtsumm{m}}\funcm{m}(\epartm{m}{i})-\postm{m}\funcm{m}\right)\convd \boldsymbol{\sigma}_m(\funcm{m})Z,
	\end{equation}
	where $Z$ is standard normally distributed and $\boldsymbol{\sigma}_m^2(\funcm{m}) \eqdef \boldsymbol{\sigma}_{0,m}^2(\funcm{m})$, with, for $\ell\in\intvect{0}{m}$, 
	\begin{multline}\label{eq:truncVarExt}
	\asvarm{\ell,m}(\funcm{m})\eqdef \frac{\Xinitm(\rnm{-1}\{\lkm{0:m-1}(\funcm{m}-\postm{m} \funcm{m})\}^2)}{(\Xinitm\lkm{0:m-1}\1{\Xp{m}})^2}\1{\{\ell=0\}}
	\\+\sum_{k=(\ell-1)\vee 0}^{m-1}\postm{k}\amm{k}\frac{\postm{k}\lkm{k}(\amm{k}^{-1}\rnm{k}\{\lkm{k+1:m-1}(\funcm{m}-\postm{m} \funcm{m})\}^2)}{(\postm{k}\lkm{k:m-1}\1{\Xp{m}})^2}.
	\end{multline}
\end{proposition}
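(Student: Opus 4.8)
The plan is to recognise that the extended model $((\Xp{m},\Xpalg{m}),\lkm{m},\Xinitm,\propkerm{m},\amm{m})_{m\in\nset}$ constructed in Section~\ref{subsec:modelext} is itself a legitimate instance of the general distribution-flow framework of Section~\ref{subsec:model}, and that Algorithm~\ref{algo:pfmulti} is precisely the generic APF of Algorithm~\ref{algo:apf} run on this model. Once this identification is in place, the claim follows by a direct application of the general CLT in Proposition~\ref{prop:clt}, with the asymptotic variance \eqref{eq:asvar} instantiated at the extended-model quantities producing exactly \eqref{eq:truncVarExt}.

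First I would check that the extended objects meet the structural requirements of Section~\ref{subsec:model}. Each $\lkm{m}$ is a possibly unnormalised transition kernel with $\sup_{\xvec{m}}\lkm{m}\1{\Xp{m+1}}(\xvec{m})<\infty$, which is inherited from the boundedness of the single-step masses $\sup_{x_n}\lk{n}\1{\set{X}_{n+1}}(x_n)<\infty$, since $\lkm{m}$ is the corresponding tensor product and depends only on the terminal coordinate $x_{n_m}$; each proposal $\propkerm{m}(\xvec{m},\cdot)$ is a probability measure absolutely continuous with respect to $\lkm{m}(\xvec{m},\cdot)$, with the product Radon--Nikodym derivative $\rnm{m}(\xvec{m},\xvec{m+1})=\prod_{k=n_m}^{n_{m+1}-1}\rn{k}(x_k,x_{k+1})$ already computed in the section; and the flow $\postm{m}$ defined there coincides with the canonical flow \eqref{eq:filtmeas} associated with $\Xinitm$ and $(\lkm{m})_{m\in\nset}$. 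I would then match the hypotheses: the boundedness of $\amm{m}$, $\rnm{m}/\amm{m}$, and $\rnm{-1}$ assumed here is exactly Assumption~\ref{assum:bounded} transcribed for the extended model under the correspondence $\am{n}\leftrightarrow\amm{m}$, $\rn{n}/\am{n}\leftrightarrow\rnm{m}/\amm{m}$, and $\rn{-1}\leftrightarrow\rnm{-1}$.

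Because Algorithm~\ref{algo:pfmulti} draws resampling indices from $\catdist((\wgtm{m}{\ell}\amm{m}(\epartm{m}{\ell}))_{\ell=1}^N)$, propagates along $\propkerm{m}$, and reweights by $\rnm{m}/\amm{m}$ exactly as Algorithm~\ref{algo:apf} prescribes, the weighted empirical measure $\sum_{i=1}^N(\wgtm{m}{i}/\wgtsumm{m})\delta_{\epartm{m}{i}}$ is the extended-model analogue of $\post[N]{n}$. Proposition~\ref{prop:clt} therefore applies verbatim and yields \eqref{eq:CLTmulti}, with $\asvarm{m}(\funcm{m})=\asvarm{0,m}(\funcm{m})$ obtained by substituting the extended-model quantities into \eqref{eq:asvar}; the resulting expression is literally \eqref{eq:truncVarExt}. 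I do not expect a genuine obstacle, as the proposition is a reinterpretation rather than a new limit theorem. The only point demanding mild care is the flow identity $\postm{m}\funcm{m}=\Xinitm\lkm{0:m-1}\funcm{m}/(\Xinitm\lkm{0:m-1}\1{\Xp{m}})$ together with its instantiation inside \eqref{eq:asvar}, which rests on the fact that each $\lkm{m}$ ignores all but the last coordinate, so that the multi-step products $\lkm{0:m-1}$ telescope correctly against $\Xinitm$; this is the same bookkeeping already carried out, in the single-resampling-per-step case, within the proof of Theorem~\ref{thm:consistency-fixed-lag}.
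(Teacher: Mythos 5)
Your proposal is correct and follows essentially the same route as the paper, which likewise treats the extended path-space model as a bona fide instance of the general framework of Section~\ref{subsec:model} and obtains Proposition~\ref{prop:cltext} as an immediate application of the CLT in Theorem~\ref{prop:clt} to Algorithm~\ref{algo:pfmulti}. Your explicit verification of the structural requirements (boundedness of the kernel masses, absolute continuity of $\propkerm{m}$ with respect to $\lkm{m}$, and the correspondence of the boundedness hypotheses with Assumption~\ref{assum:bounded}) is in fact slightly more detailed than what the paper records.
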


Proposition~\ref{prop:cltext} implies that we may obtain a CLT also in the case where Algorithm~\ref{algo:adaptivePF} is driven by a deterministic resampling schedule $(\res{n})_{n\in\nset}$. To conclude formally this argument, consider the output of Algorithm~\ref{algo:adaptivePF} after an arbitrarily chosen number $n$ of iterations; even though $n$ is generally not a resampling time, we may, without loss of generality, assume that it is so (since the distribution of the particle sample at a given time point does not depend on whether selection will be performed in a subsequent
iteration of the algorithm). In particular, in the extended model we may let $\Xp{\nres{n}} = \set{X}_{n_{\nres{n}-1}+1}\times\cdots\times\set{X}_n$; then by Proposition~\ref{prop:original:vs:extended}, 
$$
(\proj_{n}^{\nres{n}}(\epartm{\nres{n}}{i}), \wgtm{\nres{n}}{i})_{i = 1}^\N \stackrel{\mathcal{D}}{=} (\epart{n}{i}, \wgt{n}{i})_{i = 1}^\N. 
$$
For any function $\func{n} \in \bmf{\set{X}_{n}}$ we may define $\funcm{\nres{n}} : \Xp{\nres{n}} \ni \xvec{\nres{n}} \mapsto \func{n}(\proj_n^{\nres{n}}(\xvec{\nres{n}}))=\func{n}(x_{n})$. It is straightforward to check that for a so-defined extended function it holds that $\post{n}\func{n}=\postm{\nres{n}}\funcm{\nres{n}}$. Thus, under Assumption~\ref{assum:bounded}, Proposition~\ref{prop:cltext} implies that, as $N\to \infty$,
\begin{equation}\label{eq:CLTdetres}
\sqrt{N}\left(\post[N]{n}\func{n}-\post{n}\func{n}\right)\convd \sigma_n\langle \res{0:n-1}\rangle(\func{n})Z,
\end{equation}
where $Z$ is standard normally distributed and, since $\post[N]{n}\func{n}=\postm[N]{\nres{n}}\funcm{\nres{n}}$ and $\post{n}\func{n}=\postm{\nres{n}}\funcm{\nres{n}}$, the asymptotic variance $\asvar{n}\langle \res{0:n-1}\rangle(\func{n})$ equals $\asvarm{\nres{n}}\langle \res{0:n-1}\rangle(\funcm{\nres{n}})$; here we have added $\res{0:n-1}$ to the notation in order to highlight that the extended model under consideration is governed by the given selection schedule.

\subsection{Proof of Theorem~\ref{lemma:CLTadaptive}}
\label{sec:proof:lemma:CLTadaptive}
The following proof resembles closely the proof of \cite[Corollary~3.7]{mastrototaro:olsson:alenlov:2021}.
\begin{proof}
	Let $\set{R}_n \eqdef \{0,1\}^{n}$ be the set of binary sequences of length $n$. To all $\res{0:n-1} \in \set{R}_n$ we associate independent realisations $(\epart{n}{i}, \wgt{n}{i})_{i=1}^\N$ of Algorithm~\ref{algo:adaptivePF}, each realisation being driven by the deterministic selection schedule governed by the corresponding $\res{0:n-1}$, and let $\func{n}^\N \langle \res{0:n-1}\rangle \eqdef \wgtsum{n}^{-1}\sum_{i=1}^{n}\wgt{n}{i}\func{n}(\epart{n}{i})$ denote the corresponding filter estimate. Then for every $\N \in \nset^\ast$, 
	\begin{equation}
		\sqrt{\N}\left(\post[\N]{n} \func{n} 
		%\func{n}^\N \langle \res[\N]{0:n-1}\rangle
		-\post{n}\func{n}\right) \\ 
		\eqdist\sum_{\res{0:n-1}\in\set{R}_n}\sqrt{\N}\left(\func{n}^\N \langle \res{0:n-1}\rangle-\post{n}\func{n}\right)\prod_{m=0}^{n-1}\1{\{\res[\N]{m}=\res{m}\}}. \label{eq:sumclt}
	\end{equation}
	By Lemma~\ref{lemma:ess},
	\begin{equation}\label{eq:convInd}
		\prod_{m=0}^{n-1}\1{\{\res[\N]{m}=\res{m}\}}\convp \1{\{\res[\alpha]{0:n-1}=\res{0:n-1}\}},
	\end{equation}
	and by Slutsky's Lemma and \eqref{eq:CLTdetres}, all terms in the sum \eqref{eq:sumclt} tends to zero except one, which converges weakly to $\sigma_n\langle \res[\alpha]{0:n-1}\rangle (\func{n}) Z$. This completes the proof. 
\end{proof}

\subsection{Proof of Corollary \ref{cor:convAdaptive}}\label{subsec:proofCor}
We first show the consistency of the variance estimates provided by Algorithm~\ref{algo:adaptive2} in the case of a deterministic resampling schedule. 
% In order to do that we present Algorithm~\ref{algo:ext} which extends Algorithm~\ref{algo:adaptive} to the extended model of Section~\ref{subsec:modelext}.
% \begin{algorithm}[htb]
% 	\caption{\namealgo algorithm for the extended model}\label{algo:ext}
% 	\begin{algorithmic}[1]
% 		\Statex \textbf{Data}: $\{(\epartm{m}{i},\wgtm{m}{i})\}_{i=1}^N$, $\lambda_m$, $\{\enoch{m(\lambda_m),m}{i},\dots, \enoch{m,m}{i}\}_{i=1}^N$
% 		\Statex \textbf{Result}: $\{(\epartm{m+1}{i},\wgtm{m+1}{i})\}_{i=1}^N$, $\lambda_{m+1}$, $\{\enoch{(n+1)(\lambda_{m+1}),m+1}{i},\dots, \enoch{n
% 		m+1,m+1}{i}\}_{i=1}^N$
% 		\State run $\{(\epartm{m+1}{i},\wgtm{m+1}{i},\I{m+1}{i})\}_{i=1}^N\leftarrow\PF\big(\{(\epartm{m}{i},\wgtm{m}{i})\}_{i=1}^N\big)$ (Algorithm~\ref{algo:pfmulti});
% 		\For{$i=1\rightarrow N$}
% 		\For{$ k=(m+1)(\lambda_m+1)\rightarrow m $}
% 		\State set $\enoch{k,m+1}{i}\leftarrow \enoch{k,m}{\I{m+1}{i}}$;
% 		\EndFor
% 		\State set $\enoch{m+1,m+1}{i}\leftarrow i$;
% 		\EndFor
% 		\For{$\lambda=0\rightarrow \lambda_m+1$}
% 		\State compute $\asvarpm{m+1}{N,\lambda}(\funcm{m+1})$;
% 		\EndFor
% 		\State set $\displaystyle\lambda_{m+1}\leftarrow \argmax_{\lambda\in\intvect{0}{\lambda_m+1}}\asvarpm{m+1}{N,\lambda}(\funcm{m+1})$;
% 	\end{algorithmic}
% \end{algorithm}

\begin{lemma}\label{lemma:convVarDetRes}
	Let Assumption~\ref{assum:1} hold. For every $n\in\nset$ and functionals $\func{m}\in\bmf{\alg{X}_m}$, $m\in\intvect{1}{n}$, let $(\lambda_m)_{m=1}^n$ be the lags produced by $n$ iterations of Algorithm~\ref{algo:adaptive2} driven by some deterministic selection schedule $(\res{n})_{n \in \nset}$. Then, as $N\to\infty$, it holds that $\lambda_n\convp \nres{n}$ and $\asvarp{n,\lambda_n}(\func{n})\convp \asvar{n}\langle\res{0:n-1}\rangle(\func{n})$. 
\end{lemma}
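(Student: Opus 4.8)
The plan is to deduce the lemma from Theorem~\ref{thm:conv} by transporting the whole adaptive procedure onto the extended, systematically-resampled model of Section~\ref{subsec:modelext}. Fix a deterministic schedule $(\res{n})_{n\in\nset}$ with induced resampling times $(n_m)_{m\in\nset}$ and, as in Section~\ref{subsec:modelext}, assume without loss of generality that the terminal time $n$ is itself a resampling time. By Proposition~\ref{prop:original:vs:extended} the weighted sample $(\epart{n}{i},\wgt{n}{i})_{i=1}^\N$ produced by Algorithm~\ref{algo:adaptivePF} coincides in distribution, through the projection $\proj_{n}^{\nres{n}}$, with the sample $(\epartm{\nres{n}}{i},\wgtm{\nres{n}}{i})_{i=1}^\N$ produced by Algorithm~\ref{algo:pfmulti} for the extended model, which resamples at every one of its steps. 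Since that proposition also matches the resampling indices (compare \eqref{eq:I:dist-1} and \eqref{eq:I:dist-2}), and the Enoch indices are deterministic functions of the resampling indices, the windows $(\enoch{\nres{n}\langle\lambda\rangle,\nres{n}}{i})_{i=1}^\N$ appearing in \eqref{eq:ODadapt} agree in distribution with the Enoch windows of the extended model at its step $\nres{n}$. Recalling $\funcm{\nres{n}}(\xvec{\nres{n}})=\func{n}(x_n)$, this yields the distributional identity $\asvarp{n,\lambda}(\func{n})\eqdist\asvarpm{\nres{n},\lambda}(\funcm{\nres{n}})$ for every lag $\lambda$, where the right-hand side is the fixed-lag estimator of the extended model.

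Next I would observe that, read on the resampling clock, Algorithm~\ref{algo:adaptive2} is exactly Algorithm~\ref{algo:adaptive} applied to the extended model. Indeed, on a step $n$ with $\res[N]{n}=1$ both the Enoch-window update and the lag update $\lambda_{n+1}\gets\argmax_{\lambda\in\intvect{0}{\lambda_n+1}}\asvarp{n+1,\lambda}(\func{n+1})$ reproduce the adaptive update of Algorithm~\ref{algo:adaptive} carrying step $\nres{n}$ to $\nres{n}+1$, while on a step with $\res[N]{n}=0$ nothing changes and the resampling clock does not advance. Consequently the lag $\lambda_n$ returned by $n$ iterations of Algorithm~\ref{algo:adaptive2} equals in distribution the lag produced by $\nres{n}$ iterations of Algorithm~\ref{algo:adaptive} on the extended model, and $\asvarp{n,\lambda_n}(\func{n})$ equals in distribution the corresponding adaptive estimate of the extended model.

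It remains to check that Theorem~\ref{thm:conv} applies to the extended model, i.e.\ that Assumption~\ref{assum:1} transfers. Using $\amm{m}(\xvec{m})=\am{n_m}(x_{n_m})$ and $\rnm{m}(\xvec{m},\xvec{m+1})=\prod_{k=n_m}^{n_{m+1}-1}\rn{k}(x_k,x_{k+1})$, the ratio $\rnm{m}\amm{m+1}/\amm{m}$ telescopes to $\prod_{k=n_m}^{n_{m+1}-1}\rn{k}(x_k,x_{k+1})\am{k+1}(x_{k+1})/\am{k}(x_k)$, a finite product whose factors are each positive and uniformly bounded under Assumption~\ref{assum:1}; the analogous factorisation $\rnm{-1}\amm{0}=\rn{-1}(x_0)\am{0}(x_0)\prod_{k=0}^{n_0-1}\rn{k}(x_k,x_{k+1})\am{k+1}(x_{k+1})/\am{k}(x_k)$ handles the initial term. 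Hence Assumption~\ref{assum:1} holds for the extended model, and Theorem~\ref{thm:conv} gives, as $N\to\infty$, that the extended adaptive lag converges in probability to $\nres{n}$ and that the extended adaptive estimate converges in probability to the full variance $\asvarm{\nres{n}}(\funcm{\nres{n}})$. Transporting these two convergences back through the distributional identities above, and invoking $\asvar{n}\langle\res{0:n-1}\rangle(\func{n})=\asvarm{\nres{n}}(\funcm{\nres{n}})$ from the end of Section~\ref{subsec:modelext}, we obtain $\lambda_n\convp\nres{n}$ and $\asvarp{n,\lambda_n}(\func{n})\convp\asvar{n}\langle\res{0:n-1}\rangle(\func{n})$, as required.

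The step I expect to be most delicate is the equivalence underlying the first two paragraphs: Proposition~\ref{prop:original:vs:extended} as stated matches only particles and weights, so it must be upgraded to also match the Enoch-index windows and the adaptively chosen lags, and one must verify carefully that the depletion/argmax bookkeeping of Algorithm~\ref{algo:adaptive2} on the non-resampling steps (where $\nres{n}$ and the window are frozen) is consistent with simply reading off Algorithm~\ref{algo:adaptive} on the resampling clock. Once this correspondence is secured the result is immediate from Theorem~\ref{thm:conv}; alternatively one could mimic the induction in the proof of Theorem~\ref{thm:conv} directly on the resampling-time indexed quantities, but routing through the extended model avoids repeating that argument.
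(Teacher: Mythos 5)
There is a genuine gap in the central reduction of your second paragraph. The claim that, read on the resampling clock, Algorithm~\ref{algo:adaptive2} \emph{is} Algorithm~\ref{algo:adaptive} applied to the extended model is not correct, because the two procedures perform the lag-selection argmax at different times and with different weighted samples. In Algorithm~\ref{algo:adaptive2} the update $\lambda_{n+1}\gets\argmax_{\lambda}\asvarp{n+1,\lambda}(\func{n+1})$ is executed at time $n+1=n_m+1$, i.e.\ immediately after the $(m+1)$-th resampling, using the function $\func{n_m+1}$ and the weights $\wgt{n_m+1}{j}$. By contrast, the extended-model particle at step $m+1$ is the whole block $\epart{n_m+1:n_{m+1}}{i}$ with weight $\wgtm{m+1}{i}\eqdist\wgt{n_{m+1}}{i}$, so Algorithm~\ref{algo:adaptive} run on the extended model would compute the corresponding argmax with the weights of the \emph{next} resampling time $n_{m+1}$ and the function $\funcm{m+1}$ tied to that time. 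Between two resampling times the weights accumulate multiplicatively and the particles move, so these estimators---and therefore the selected lags---do not agree in distribution, and Theorem~\ref{thm:conv} cannot be invoked for the lag sequence of Algorithm~\ref{algo:adaptive2}. A related difficulty is that the lemma concerns an arbitrary terminal time $n$, which is handled by truncating the last block of the extended model at $n$ (``WLOG $n$ is a resampling time''); this gives a \emph{different} extended model for each $n$, incompatible with running a single instance of Algorithm~\ref{algo:adaptive} across all extended steps.

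The repair is the route you mention only as an alternative in your closing sentence, and it is in fact necessary: redo the induction of Theorem~\ref{thm:conv} directly on $(\lambda_n)_n$. When $\res{n-1}=0$ both $\lambda_n$ and $\nres{n}$ are frozen and the claim propagates trivially; when $\res{n-1}=1$ one decomposes $\prob(\lambda_n=\nres{n})$ exactly as in \eqref{eq:lambda_n:decomp:alt} and invokes only Theorem~\ref{thm:consistency-fixed-lag} (not Theorem~\ref{thm:conv}), applied to the extended model whose last block ends at $n$, to obtain $\asvarp{n,\lambda}(\func{n})\convp\asvarm{\nres{n}\langle\lambda\rangle,\nres{n}}\langle\res{0:n-1}\rangle(\funcm{\nres{n}})$ together with the monotonicity of the truncated variances \eqref{eq:truncVarExt}. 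Your verification that Assumption~\ref{assum:1} transfers to the extended model via the telescoping products, and the identification $\asvar{n}\langle\res{0:n-1}\rangle(\func{n})=\asvarm{\nres{n}}\langle\res{0:n-1}\rangle(\funcm{\nres{n}})$, are correct and are precisely the ingredients the paper uses inside that induction.
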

\begin{proof}

We proceed by induction, assuming that this holds true for some $n-1$ with $n\in\nsetpos$. Along the lines of the proof of Theorem~\ref{thm:conv}, we show that $\lambda_n$ produced by Algorithm~\ref{algo:adaptive2} converges in probability to $\nres{n}$. We consider separately the two cases $\res{n-1}=0$ and $\res{n-1}=1$. In the former, resampling is not performed at time $n-1$; thus, $\lambda_n=\lambda_{n-1}$ and $\nres{n}=\nres{n-1}$, which implies that 
$$
	\prob(\lambda_n=\nres{n})=\prob(\lambda_{n-1}=\nres{n-1}) \to 1
$$
as $N\to \infty$ by the induction hypothesis. In the latter case $\res{n-1}=1$, when resampling is triggered, $\nres{n}=\nres{n-1}+1$ while $\lambda_n$ is determined by Line~10 in Algorithm~\ref{algo:adaptive2}. Thus, we may write 
\begin{align}
\prob(\lambda_n=\nres{n}) &= \prob(\lambda_n=\nres{n}, \lambda_{n-1}={\nres{n-1}})+\prob(\lambda_n=\nres{n}, \lambda_{n-1}<{\nres{n-1}}) \label{eq:first:term:probability:split} \\
&= \prob(\lambda_n=\nres{n}, \lambda_{n-1}={\nres{n-1}}),\label{eq:second:term:probability:split}  
\end{align}
 where the second term of \eqref{eq:first:term:probability:split} is zero since, necessarily, $\lambda_n\le \lambda_{n-1} + 1$ by construction. To treat the first term \eqref{eq:second:term:probability:split}, write 
\begin{equation} \label{eq:second:term:probability:split:developed}
\prob(\lambda_n=\nres{n}, \lambda_{n-1}={\nres{n-1}}) = \prob \left(\{\lambda_{n-1}={\nres{n-1}}\} \bigcap_{\lambda = 0}^{\nres{n} - 1} \{\asvarp{n,\nres{n}}(\func{n})\ge \asvarp{n,\lambda}(\func{n})\}\right). 
\end{equation}
In order to show that \eqref{eq:second:term:probability:split:developed} tends to one, we consider the extended model with resampling times $(n_m)_{m\in\nset}$ induced by $(\res{n})_{n\in\nset}$. Without loss of generality we assume that $n$ is a resampling time. Then by Proposition~\ref{prop:original:vs:extended}, for every $\lambda\in\intvect{0}{\nres{n}}$, 
$$
\asvarp{n,\lambda}(\func{n}) \eqdist N\sum_{i=1}^N\left(\sum_{j:\enoch{\nres{n}(\lambda),\nres{n}}{j}=i}\frac{\wgtm{\nres{n}}{j}}{\wgtsumm{\nres{n}}}\{\funcm{\nres{n}}(\epartm{\nres{n}}{j})-\postm[N]{\nres{n}}\funcm{\nres{n}}\}\right)^2, 
$$
where we have defined, as previously, $\funcm{\nres{n}} : \Xp{\nres{n}} \ni \xvec{\nres{n}} \mapsto \func{n}(\proj_n^{\nres{n}}(\xvec{\nres{n}}))$ (where $\Xp{\nres{n}} \eqdef \set{X}_{n_{\nres{n}-1}+1}\times\cdots\times\set{X}_n$). Furthermore, by noting that  
$$
\frac{\rnm{m}(\xvec{m},\xvec{m+1}) \amm{m+1}(\xvec{m+1})}{\amm{m}(\xvec{m})}=\prod_{k=n_m}^{n_{m+1}-1}\frac{\rn{k}(x_k,x_{k+1})\am{k+1}(x_{k+1})}{\am{k}(x_{k})}     $$
and
$$
\rnm{-1}(\xvec{0})\amm{0}(\xvec{0})=\rn{-1}(x_0)\am{0}(x_0)\prod_{k=0}^{n_0-1}\frac{\rn{k}(x_k,x_{k+1})\am{k+1}(x_{k+1})}{\am{k}(x_{k})},
$$
 we conclude that once Assumption~\ref{assum:1} is satisfied for the original model, then it is also satisfied for the extended one. Thus, Theorem~\ref{thm:consistency-fixed-lag} implies that for every $\lambda\in\intvect{0}{\nres{n}}$, 
\begin{equation}\label{eq:convTruncMult}
\asvarp{n,\lambda}(\func{n})\convp \asvarm{\nres{n}(\lambda),\nres{n}}\langle \res{0:n-1}\rangle(\funcm{\nres{n}}),
\end{equation}
where we have included $\res{0:n-1}$ in the notation to highlight that the extended model under consideration is determined by the given selection schedule. By  \eqref{eq:truncVarExt} it holds that for $\lambda \in \intvect{0}{\nres{n} - 1}$, 
$$
\asvarm{\nres{n}\langle\lambda\rangle,\nres{n}}\langle \res{0:n-1}\rangle(\funcm{\nres{n}})\le\asvarm{\nres{n}\langle\nres{n}\rangle,\nres{n}}\langle \res{0:n-1}\rangle(\funcm{\nres{n}})=\asvarm{\nres{n}}\langle \res{0:n-1}\rangle(\funcm{\nres{n}});
$$ 
thus, using (\ref{eq:second:term:probability:split}--\ref{eq:convTruncMult}) and the induction hypothesis we may conclude that $\prob(\lambda_n=\nres{n}) \to 1$ as $N\to\infty$. 
    
Finally, we show that for every $\varepsilon>0$, $\prob(|\asvarp{n,\lambda_n}(\func{n})- \asvar{n}\langle\res{0:n-1}\rangle(\func{n})| \geq 2\varepsilon)$ tends to zero as $\N \to \infty$. Recalling that $\asvar{n}\langle \res{0:n-1}\rangle(\func{n}) = \asvarm{\nres{n}} \langle \res{0:n-1}\rangle(\funcm{\nres{n}})$, we obtain the bound 
\begin{multline}
\prob(|\asvarp{n,\lambda_n}(\func{n})- \asvar{n}\langle\res{0:n-1}\rangle(\func{n})|\geq2\varepsilon)\\ 
\le \prob(|\asvarp{n,\lambda_n}(\func{n})- \asvarp{n,\nres{n}}(\func{n})|>\varepsilon)+\prob(|\asvarp{n,\nres{n}}(\func{n})- \asvarm{\nres{n}} \langle \res{0:n-1}\rangle(\funcm{\nres{n}})| \geq \varepsilon),
\end{multline}
where the second term tends to zero as $N\to \infty$ by \eqref{eq:convTruncMult}. For the first term it holds that 
\begin{multline}
\prob(|\asvarp{n,\lambda_n}(\func{n})- \asvarp{n,\nres{n}}(\func{n})| \geq \varepsilon)\\
\leq \prob(\asvarp{n,\lambda_n}(\func{n})\ne \asvarp{n,\nres{n}}(\func{n})) \le \prob(\lambda_n \ne \nres{n})\to 0,
\end{multline}
as $N\to\infty$. 
    
The proof is completed by noting that the base case holds true, since $\lambda_0=0$ and $\asvarp{0,0}(\func{0})\convp \asvar{0}(\func{0})$ for all $\func{0}\in\bmf{\alg{X}_0}$. 

\end{proof}
We are now ready to prove Corollary~\ref{cor:convAdaptive}.
\begin{proof}[Proof of Corollary~\ref{cor:convAdaptive}]
    Following the lines of the proof of Lemma~\ref{lemma:CLTadaptive}, we let again $\set{R}_n = \{0,1\}^n$ be the set of binary sequences of length $n$. For all $\res{0:n-1} \in \set{R}_n$, let $\asvarp{n,\lambda_n} \langle \res{0:n-1}\rangle(\func{n})$ be independent variance estimators obtained on the basis of independent realisations of Algorithm~\ref{algo:adaptive2}, each realisation being driven by the corresponding selection schedule $\res{0:n-1}$. Then for every $\N \in \nset^\ast$, 
	\begin{equation}\label{eq:sumVarEst}
		\asvarp{n,\lambda_n}\langle \res[N]{0:n-1}\rangle(\func{n})
		\eqdist\sum_{\res{0:n-1}\in\set{R}_n}\asvarp{n,\lambda_n}\langle \res{0:n-1}\rangle(\func{n})\prod_{m=0}^{n-1}\1{\{\res[\N]{m}=\res{m}\}}.
	\end{equation}
	Now, by \eqref{eq:convInd} and Lemma~\ref{lemma:convVarDetRes}, all terms of \eqref{eq:sumVarEst} tend to zero as $\N \to \infty$ except one which converges in probability to $\asvar{n}\langle \res[\alpha]{0:n-1}\rangle(\func{n})$. This completes the proof. 
\end{proof}

\end{document}